\documentclass[11pt]{article}

\usepackage{fullpage}
\usepackage{amsmath,amsthm,amssymb,amsfonts}
\usepackage{array}
\usepackage[linesnumbered,vlined,procnumbered]{algorithm2e}
\newtheorem{theorem}{Theorem}[section]
\newtheorem{corollary}[theorem]{Corollary}
\newtheorem{observation}[theorem]{Observation}
\newtheorem{lemma}[theorem]{Lemma}
\newtheorem{definition}[theorem]{Definition}
\newtheorem*{remark}{Remark}

\DontPrintSemicolon
\SetKwComment{tcpy}{\#~}{} 
\SetKwFor{Perform}{perform}{times}{endp} 
\SetKwFor{For}{for}{do}{ends} 
\SetKwFor{While}{while}{do}{ends} 
\SetKwFor{Use}{use}{to}{endu} 
\SetKwInput{KwVars}{Variables} 
\SetKwInput{KwInput}{Input} 
\SetKwInput{KwOutput}{Output}
\SetKwInput{KwSet}{Variables set} 
\RestyleAlgo{boxruled}
\LinesNumbered

\newcommand{\accept}{\texttt{accept}}
\newcommand{\reject}{\texttt{reject}}
\newcommand{\restrict}[2]{{#1}\upharpoonright_{#2}}
\newcommand{\poly}{\mathrm{poly}}
\newcommand{\bias}{\mathrm{bias}}

\title{Improving and extending the testing of distributions for shape-restricted properties}
\author{Eldar Fischer\thanks{Faculty of Computer Science, Israel Institute of Technology (Technion),
  Haifa, Israel. \mbox{eldar@cs.technion.ac.il}} \and Oded Lachish \thanks{Birkbeck, University of
    London, London, UK. \mbox{oded@dcs.bbk.ac.uk}} \and Yadu Vasudev \thanks{Faculty of Computer
      Science, Israel Institute of Technology (Technion), Haifa, Israel.
      \mbox{yaduvasudev@gmail.com}}}

\begin{document}
\maketitle

\begin{abstract}
  Distribution testing deals with what information can be deduced about an unknown distribution over $\{1,\ldots,n\}$,
  where the algorithm is only allowed to obtain a relatively small number of independent samples from the distribution. In the extended conditional sampling model, the algorithm is also allowed to obtain samples from the restriction of the original distribution on subsets of $\{1,\ldots,n\}$.
  
  In 2015, Canonne, Diakonikolas, Gouleakis and Rubinfeld unified several previous results, and showed that for any property of distributions satisfying a ``decomposability'' criterion, there exists an algorithm (in the basic model) that can distinguish with high probability distributions satisfying the property from distributions that are far from it in the variation distance.
  
  We present here a more efficient yet simpler algorithm for the basic model, as well as very efficient algorithms for the conditional model, which until now was not investigated under the umbrella of decomposable properties. Additionally, we provide an algorithm for the conditional model that handles a much larger class of properties.
  
  Our core mechanism is a way of efficiently producing an interval-partition of $\{1,\ldots,n\}$ that satisfies a ``fine-grain'' quality. We show that with such a partition at hand we can directly move forward with testing individual intervals, instead of first searching for the ``correct'' partition of $\{1,\ldots,n\}$.
\end{abstract}

\section{Introduction}

\subsection{Historical background}

In most computational problems that arise from modeling real-world situations, we are required to
analyze large amounts of data to decide if it has a fixed property.  The amount of data involved is
usually too large for reading it in its entirety, both with respect to time and
storage. In such situations, it is natural to ask for algorithms that can sample points from the data
and obtain a significant estimate for the property of interest.  
The area of {\em
property testing} addresses this issue by studying algorithms that look at a small part of the data
and then decide if the object that generated the data has the property or is far (according to some
metric) from having the property. 

There has been a long line of research, especially in statistics, where the underlying object from
which we obtain the data is modeled as a probability distribution. Here the algorithm is only
allowed to ask for independent samples from the distribution, and has to base its decision on them.
If the support of the underlying probability distribution is large, it is not practical to
approximate the entire distribution. 
Thus, it is natural to study this problem in the context of property testing.

The specific sub-area of property testing that is dedicated to the study of distributions is called {\em distribution testing}.
There, the input is a
probability distribution (in this paper the domain is the set $\{1,2,\ldots, n\}$) and the objective
is to distinguish whether the distribution has a certain property, such as uniformity or
monotonicity, or is far in $\ell_1$ distance from it. See \cite{Canonne15} for a survey about the
realm of distribution testing.

%
%
Testing properties of distributions was studied by Batu et al in \cite{BatuFRSW00}, where they gave a
sublinear query algorithm for testing closeness of distributions supported over the set $\{1,2,
\ldots, n\}$. 
They extended the idea of collision counting,  which was implicitly used for uniformity testing in the work of
Goldreich and Ron (\cite{GoldreichR00}). Consequently, various properties of probability
distributions were studied, like testing identity with a known distribution
(\cite{BatuFFKRW01,ValiantV14, AcharyaDK15, DiakonikolasK16}), testing independence of a
distribution over a product space (\cite{BatuFFKRW01, AcharyaDK15}), and testing $k$-wise
independence (\cite{AlonAKMR07}).

In recent years, distribution testing has been extended beyond the classical model. A new model called the {\em conditional  sampling model} was introduced.
It first appeared independently in~\cite{CanonneRS15} and~\cite{ChakrabortyFGM13}. 
In the conditional sampling model, the algorithm
queries the input distribution $\mu$ with a set $S \subseteq \{1, 2, \ldots, n\}$, and gets an index
sampled according to $\mu$ conditioned on the set $S$. Notice that if $S = \{1, 2, \ldots, n\}$,
then this is exactly like in the standard model. 
The conditional sampling model allows adaptive
querying of $\mu$, since we can choose the set $S$ based on the indexes sampled until now.
Chakraborty et al (\cite{ChakrabortyFGM13}) and Canonne et al (\cite{CanonneRS15}) showed that testing
uniformity can be done with a number of queries not depending on $n$ (the latter presenting an
optimal test), and investigated the testing of other properties of distributions.  
In~\cite{ChakrabortyFGM13}, it is also shown that uniformity can be tested with $\poly(\log n)$
conditional samples by a
{\em non-adaptive} algorithm.  In this work, we study the distribution testing problem in the
standard (unconditional) sampling model, as well as in the conditional model.

A line of work which is central to our paper, is the testing of distributions for {\em structure}.
The objective is to test whether a given distribution has some structural properties like being
monotone (\cite{BatuKR04}), being a $k$-histogram (\cite{IndykLR12, DiakonikolasK16}), or being
log-concave (\cite{AcharyaDK15}). 
Canonne et al (\cite{CanonneDGR15}) unified these results to show
that if a property of distributions has certain structural characteristics, then membership in the property can
be tested efficiently using samples from the distribution. 
More precisely, they introduced the notion of {\em $L$-decomposable} distributions as a
way to unify various algorithms for testing distributions for structure. Informally, an
$L$-decomposable distribution $\mu$ supported over $\{1, 2, \ldots, n\}$ is one that has an
interval partition $\mathcal{I}$ of $\{1, 2, \ldots, n\}$ of size bounded by $L$, such that for
every interval $I$, either the weight of $\mu$ on it is small or the distribution over the interval
is close to uniform. A property $\mathcal{C}$ of distributions is $L$-decomposable if every
distribution $\mu \in \mathcal{C}$ is $L$-decomposable ($L$ is allowed to depend on $n$).  This
generalizes various properties of distributions like being monotone, unimodal, log-concave etc. 
In this setting, their result for a set of distributions $\mathcal{C}$ supported over $\{1,2,\ldots, n\}$ translates to the following: if every distribution $\mu$ from
$\mathcal{C}$ is $L$-decomposable, then there is an efficient algorithm for
testing whether a given distribution belongs to the property $\mathcal{C}$.

To achieve their results,
Canonne et al (\cite{CanonneDGR15}) show that if a distribution $\mu$ supported over $[n]$
is $L$-decomposable, then it is $O(L \log n)$-decomposable where the intervals are of the form
$[j2^i +1, (j + 1)2^i]$. This presents a natural approach of computing the
interval partition in a recursive manner, by bisecting an interval if it has a large probability
weight and is not close to uniform. Once they get an interval partition, they learn the
``flattening'' of the distribution over this partition, and check if this distribution is close to
the property $\mathcal{C}$. The term ``flattening'' refers to the distribution resulting from making $\mu$ conditioned on any interval of the partition to be uniform. When applied to a partition corresponding to a decomposition of the distribution, the learned flattening is also close to the original distribution.
Because of this, in the case where there is a promise that $\mu$ is $L$-decomposable, the above can be viewed as a learning algorithm, where they obtain an
explicit distribution that is close to $\mu$. Without the promise it can be viewed as an agnostic learning algorithm. For further elaboration of this connection see~\cite{diakonikolas2016learning}.

\subsection{Results and techniques}

In this paper, we extend the body of knowledge about testing $L$-decomposable properties. We improve
upon the previously known bound on the sample complexity, and give much better bounds when
conditional samples are allowed. Additionally, for the conditional model, we provide a test for a
broader family of properties, that we call {\em atlas-characterizable} properties.

Our approach differs from that of \cite{CanonneDGR15} in the manner in which we compute the interval
partition. We show that a partition where most intervals that are not singletons have small
probability weight is sufficient to learn the distribution $\mu$, even though it is not the original
$L$-decomposition of $\mu$.
We show
that if a distribution $\mu$ is $L$-decomposable, then the ``flattening'' of $\mu$ with respect to a
this partition is close to $\mu$. It turns out that such a partition can be obtained in ``one shot''
without resorting to a recursive search procedure.

We obtain a partition as above using a method of {\em partition pulling}
that we develop here.  Informally, a pulled partition is obtained by sampling indexes from $\mu$,
and taking the partition induced by the samples in the following way: each sampled index is a
singleton interval, and the rest of the partition is composed of the maximal intervals between sampled
indexes. Apart from the obvious simplicity of this procedure, it also has the advantage of providing a
partition with a significantly smaller number of intervals, linear in $L$ for a fixed $\epsilon$, and with
no dependency on $n$ unless $L$ itself depends on it. This makes our algorithm more efficient in query
complexity than the one of \cite{CanonneDGR15} in the unconditional sampling model, and leads to a
dramatically small sampling complexity in the (adaptive) conditional model.

Another feature of the partition pulling method is that it provides a partition with small weight intervals also when the distribution is not $L$-decomposable. This allows us to use the partition in a different manner later on, in the algorithm for testing atlas characterizable properties using conditional samples.

The main common ground between our approach for $L$-decomposable properties and that of \cite{CanonneDGR15} is the method of testing by implicit learning, as defined formally in \cite{DiakonikolasLMORSW07} (see \cite{Servedio10}). In particular, the results also provide a means to learn a distribution close to $\mu$ if $\mu$ satisfies the tested property. We also provide a test under the conditional query model for the extended class of atlas characterizable properties that we define below, which generalizes both decomposable properties and symmetric properties. A learning algorithm for this class is not provided; only an ``atlas'' of the input distribution rather than the distribution itself is learned.

Our result for unconditional testing (Theorem \ref{thm:learn}) gives a $\sqrt{nL} / \poly(\epsilon)$
query algorithm in the standard (unconditional) sampling model for testing an
$L$-decomposable property of distributions. Our method of finding a good partition for $\mu$ using
pulled partitions, that we explained above, avoids the $\log n$ factor present in
Theorem 3.3 of~\cite{CanonneDGR15}. We also avoid the additional $O(L^2)$ additive term present there.
The same method enables us to extend our results to the conditional query model, which we present for
both adaptive and non-adaptive algorithms. Table \ref{tab:results} summarizes our results and compares
them with known lower bounds\footnote{The lower bounds for unconditional and non-adaptive conditional testing
of $L$-decomposable properties with $L=1$ are exactly the lower bounds for uniformity testing; the lower bound
for adaptive conditional testing follows easily from the proved existence of properties that have no sub-linear
complexity adaptive conditional tests; finally, the lower bound for properties $k$-characterized by atlases with
$k=1$ is just a bound for a symmetric property constructed there. About the last one, we conjecture that there
exist properties with much higher lower bounds.}.

\begin{table}[h]
  \caption{Summary of our results}
  \label{tab:results}
  \begin{center}
    \begin{tabular}{|c|m{4cm}|m{6cm}|}
  \hline
   & Result & Known lower bound \\
  \hline
  \textbf{$L$-decomposable} & (testing and learning) & \\
  \hline
  Unconditional & $\sqrt{nL}/\poly(\epsilon)$ & $\Omega(\sqrt{n}/\epsilon^2)$ for $L =
  1$ \cite{Paninski08}\\
  \hline
  Adaptive conditional & $L / \poly(\epsilon)$ & $\Omega(L)$ for some fixed $\epsilon$
  \cite{ChakrabortyFGM13}\\
  \hline
  Non-adaptive conditional & $L \cdot \poly(\log n, 1/\epsilon)$ & $\Omega(\log n)$ for $L=1$ and some fixed
  $\epsilon$
  \cite{AcharyaCK15} \\
  \hline
  \textbf{$k$-characterized by atlases} & (testing) & \\
  \hline
  Adaptive conditional & $k \cdot \poly(\log n, 1/\epsilon)$ & $\Omega(\sqrt{\log\log n})$ for $k=1$, and some
  fixed $\epsilon$ \cite{ChakrabortyFGM13}\\
  \hline
\end{tabular}
\end{center}
\end{table}


 \section{Preliminaries}\label{sec:prelim}
 
 We denote the set $\{1,\ldots,n\}$ by $[n]$.
 
 We study the problem of testing properties of probability distributions supported over
 $[n]$, when given samples from the distribution. For two distributions $\mu$ and
 $\chi$, we say that $\mu$ is \emph{$\epsilon$-far} from $\chi$ if they are far in the $\ell_1$ norm, that is,
 $d(\mu,\chi)=\sum_{i\in [n]} |\mu(i) - \chi(i)| > \epsilon$. For a property of distributions $\mathcal{C}$,
 we say that $\mu$ is  \emph{$\epsilon$-far} from $\mathcal{C}$
 if for all $\chi \in \mathcal{C}$, $d(\mu,\chi) > \epsilon$.
 
 Outside the $\ell_1$ norm between distributions, we also use the $\ell_\infty$ norm, $\lVert \mu - \chi \rVert_\infty = \max_{i\in [n]}|\mu(i)-\chi(i)|$, and the following measure for uniformity.

 \begin{definition}
  For a distribution $\mu$ over a domain $I$, we define the {\em bias} of $\mu$ to be $\bias(\mu) = \max_{i \in I} \mu(i) / \min_{i \in I} \mu(i) - 1$.
 \end{definition}

 The following observation is easy and will be used implicitly throughout.
 \begin{observation}\label{obs:norms}
  For any two distributions $\mu$ and $\chi$ over a domain $I$ of size $m$, $d(\mu,\chi)\leq m\lVert \mu - \chi \rVert_\infty$. Also, $\lVert \mu - \mathcal{U}_I \rVert_\infty \le \frac1m\bias(\mu)$, where $\mathcal{U}_I$ denotes the uniform distribution over $I$.
 \end{observation}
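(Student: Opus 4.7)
The statement is a short factual observation, so I will simply outline the two near-immediate calculations rather than anything substantial.

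For the first inequality, the plan is to unfold the definition of the $\ell_1$ distance: $d(\mu,\chi) = \sum_{i\in I} |\mu(i)-\chi(i)|$. Each summand is at most $\max_{i\in I}|\mu(i)-\chi(i)| = \lVert\mu-\chi\rVert_\infty$, and the sum has exactly $|I|=m$ terms, so it is bounded by $m\lVert\mu-\chi\rVert_\infty$. This is purely definitional and requires no auxiliary argument.

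For the second inequality, I would set $a = \min_{i\in I}\mu(i)$ and $b = \max_{i\in I}\mu(i)$, so that by definition $\bias(\mu) = b/a - 1$, i.e.\ $b-a = a\cdot\bias(\mu)$. The key observation is that since $\mu$ is a probability distribution over a set of size $m$, the arithmetic mean of the values $\mu(i)$ is exactly $1/m$; hence $a \le 1/m \le b$. Consequently, for every $i\in I$ the value $\mu(i)$ lies in $[a,b]$, and the same holds trivially for $1/m = \mathcal{U}_I(i)$, so $|\mu(i)-\mathcal{U}_I(i)| \le b-a = a\cdot\bias(\mu) \le \frac{1}{m}\bias(\mu)$, where the last step uses $a \le 1/m$. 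Taking the maximum over $i$ yields the claimed bound on $\lVert \mu - \mathcal{U}_I\rVert_\infty$.

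There is no real obstacle here; the only point worth being slightly careful about is justifying $a \le 1/m \le b$, which follows from the fact that the minimum and maximum of a finite collection sandwich its mean, and the mean of the $m$ probabilities $\mu(i)$ summing to $1$ is $1/m$.
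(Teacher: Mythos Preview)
Your proposal is correct and is exactly the elementary unfolding of definitions that the paper has in mind; the paper itself simply writes ``Follows from the definitions'' without spelling anything out, so your write-up is a faithful expansion of that. The only minor caveat is the degenerate case $a=\min_i\mu(i)=0$, where $\bias(\mu)$ is infinite and the bound holds trivially; you implicitly assume $a>0$ when writing $b-a=a\cdot\bias(\mu)$, but this is harmless.
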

 \begin{proof}
  Follows from the definitions.
 \end{proof}

 We study the problem, both in the standard model, where the algorithm is given indexes sampled
 from the distribution, as well as in the model of conditional samples. The conditional model was
 first studied in the independent works of Chakraborty et al (\cite{ChakrabortyFGM13}) and Canonne et
 al (\cite{CanonneRS15}). We first give the definition of a conditional oracle for a distribution
 $\mu$.
 
 \begin{definition} [Conditional oracle]
 A conditional oracle to a distribution $\mu$ supported over $[n]$ is a black-box that takes as input
 a set $A\subseteq [n]$, samples a point $i\in A$ with probability $\mu(i)/\sum_{j\in A}\mu(j)$,
 and returns $i$. If $\mu(j) = 0$ for all $j \in A,$ then it chooses $i\in A$ uniformly at random.
 \end{definition}
 
 \begin{remark}
    The behaviour of the conditional oracle on sets $A$ with $\mu(A)=0$ is as per the model of
    Chakraborty et al \cite{ChakrabortyFGM13}. However, upper bounds in this model also hold in the
    model of Canonne et al \cite{CanonneRS15}, and most lower bounds can be easily converted to it.
 \end{remark}
 
 Now we define conditional distribution testing algorithms. We will define and analyze both {\em
 adaptive} and {\em non-adaptive} conditional testing algorithms.
 
 \begin{definition}
   An {\em adaptive} conditional distribution testing algorithm for a property of distributions
   $\mathcal{C}$, with parameters $\epsilon, \delta > 0$, and $n \in \mathbb{N}$, with query
   complexity $q(\epsilon,\delta,n)$, is a randomized algorithm with access to a conditional oracle
   of a distribution $\mu$ with the following properties:
   \begin{itemize}
   \item For each $i\in [q]$, at the $i^{th}$ phase, the algorithm generates a set $A_i\subseteq
     [n]$, based on $j_1,j_2,\cdots,j_{i-1}$ and its internal coin tosses, and calls the conditional
     oracle with $A_i$ to receive an element $j_i$, drawn independently of $j_1, j_2,\cdots,
     j_{i-1}$.
 \item Based on the received elements $j_1, j_2, \cdots, j_q$ and its internal coin tosses, the
   algorithm accepts or rejects the distribution $\mu$.
   \end{itemize}
 If $\mu\in \mathcal{C}$, then the algorithm accepts with probability at least $1-\delta$, and if
 $\mu$ is $\epsilon$-far from $\mathcal{C}$, then the algorithm rejects with probability at least
 $1-\delta$.
 \end{definition}
 
 \begin{definition}
   A {\em non-adaptive} conditional distribution testing algorithm for a property of distributions
   $\mathcal{C}$, with parameters $\epsilon, \delta > 0$, and $n\in \mathbb{N}$, with query
   complexity $q(\epsilon, \delta, n)$, is a randomized algorithm with access to a conditional oracle of
   a distribution $\mu$ with the following properties:
   \begin{itemize}
     \item The algorithm chooses sets $A_1,\ldots,A_q$ (not necessarily distinct) based on its
       internal coin tosses, and then queries the conditional oracle to respectively obtain
       $j_1,\ldots,j_q$.
     \item Based on the received elements $j_1,\ldots,j_q$ and its internal coin tosses, the
       algorithm accepts or rejects the distribution $\mu$.
   \end{itemize}
   If $\mu\in \mathcal{C}$, then the algorithm accepts with probability at least $1 - \delta$, and if
   $\mu$ is $\epsilon$-far from $\mathcal{C}$, then the algorithm rejects with probability at least
   $1 - \delta$.
 \end{definition}

 \subsection{Large deviation bounds}
 
 The following large deviation bounds will be used in the analysis of our algorithms through the rest
 of the paper.
 
 \begin{lemma}[Chernoff bounds]
   Let $X_1, X_2, \ldots, X_m$ be independent random variables taking values in $\{0,1\}$. Let $X =
   \sum_{i \in [m]} X_i$. Then for any $\delta \in
 (0,1]$, we have the following bounds.
 \begin{enumerate}
   \item $\Pr[X > (1 + \delta)E[X]] \le \exp\left( \tfrac{-\delta^2 E[X]}{3} \right)$.
   \item $\Pr[X < (1 - \delta)E[X]] \le \exp\left( \tfrac{-\delta^2 E[X]}{2} \right)$.
 \end{enumerate}
 When $\delta \ge 1$, we have $\Pr[X \ge (1 + \delta)E[X]] < \exp(\tfrac{-\delta E[X]}{3})$.
   \label{lem:chernoff}
 \end{lemma}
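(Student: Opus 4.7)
The plan is to use the standard moment generating function (MGF) method, since this is a textbook Chernoff bound and the cleanest derivation goes through exponential moments rather than through combinatorial identities for the binomial case. First, I would set $\mu_X = E[X] = \sum_i p_i$ where $p_i = \Pr[X_i = 1]$, and apply Markov's inequality to the nonnegative random variable $e^{tX}$ for a parameter $t > 0$ to be chosen later: $\Pr[X \geq a] \leq E[e^{tX}]/e^{ta}$. Independence of the $X_i$'s turns the expectation into a product, and for each Bernoulli summand one uses the one-line estimate $E[e^{tX_i}] = 1 + p_i(e^t - 1) \leq \exp(p_i(e^t - 1))$ coming from $1+x \leq e^x$. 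Multiplying gives the key MGF bound $E[e^{tX}] \leq \exp(\mu_X(e^t - 1))$.

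Next, for the upper tail with $a = (1+\delta)\mu_X$, I would optimize $t$ by taking $t = \ln(1+\delta) > 0$, which yields the raw bound $\Pr[X \geq (1+\delta)\mu_X] \leq \bigl(e^\delta / (1+\delta)^{1+\delta}\bigr)^{\mu_X}$. For the lower tail with $a = (1-\delta)\mu_X$, I would instead use Markov on $e^{-tX}$ with $t > 0$, take $t = -\ln(1-\delta)$, and obtain the symmetric bound $\Pr[X \leq (1-\delta)\mu_X] \leq \bigl(e^{-\delta}/(1-\delta)^{1-\delta}\bigr)^{\mu_X}$. The final step in each case is to massage the base of the exponential into the clean form stated in the lemma.

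The genuinely technical step — the only real obstacle — is this last analytic tightening. Concretely, one must verify (i) $(1+\delta)\ln(1+\delta) - \delta \geq \delta^2/3$ for $\delta \in (0,1]$, (ii) $(1-\delta)\ln(1-\delta) + \delta \geq \delta^2/2$ for $\delta \in (0,1)$, and (iii) for the $\delta \geq 1$ branch, $(1+\delta)\ln(1+\delta) - \delta \geq \delta/3$. Each of these is an elementary one-variable inequality that I would handle by defining the difference $f(\delta)$, noting $f(0) = 0$, computing $f'(\delta)$, and checking its sign (or Taylor-expanding $\ln$ and comparing term by term). The $\delta \geq 1$ regime for the upper tail is usually handled separately since the parabolic bound $\delta^2/3$ becomes slack there; one instead picks $t = \ln(1+\delta)$ and reads off a linear-in-$\delta$ exponent.

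Since this lemma is completely standard and used only as a black box in the body of the paper, a reasonable alternative is to skip the derivation entirely and cite a standard reference (e.g., Dubhashi–Panconesi or Mitzenmacher–Upfal). I would probably do exactly that here, and only spell out the MGF computation if a self-contained proof is desired.
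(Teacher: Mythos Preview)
Your proposal is correct and is the standard MGF derivation of the Chernoff bounds; there is no gap. The paper, however, simply states this lemma without proof, treating it as a well-known large-deviation bound to be used as a black box---exactly the alternative you mention at the end. So your sketch goes beyond what the paper does, and your suggestion to cite a standard reference instead matches the paper's own treatment.
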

 
 \begin{lemma}
   [Hoeffding bounds \cite{Hoeffding63}]
   Let $X_1, X_2, \ldots, X_m$ be independent random variables such that $0 \le X_i \le 1$ for $i
   \in [m]$, and let $\overline{X} = \tfrac{1}{m}\sum_{i \in [m]} X_i$. Then
     $\Pr\left[ |\overline{X} - E[\overline{X}]| > \epsilon \right] \le 2 \exp\left( -2m\epsilon^2
     \right)$
   \label{lem:hoeffding}
 \end{lemma}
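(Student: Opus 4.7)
The plan is to prove the bound via the standard Cram\'er--Chernoff exponential moment method, which converts a tail estimate on a sum of independent variables into a moment-generating-function (MGF) estimate that factorizes under independence. The single nontrivial ingredient will be Hoeffding's lemma, a per-variable MGF estimate for bounded random variables.

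First I would center the variables by setting $Y_i = X_i - E[X_i]$. These are independent, each with mean zero, and each supported in an interval of length at most $1$. Writing $S = \sum_{i=1}^m Y_i$, so that $\overline{X} - E[\overline{X}] = S/m$, it suffices to bound the one-sided probabilities $\Pr[S > m\epsilon]$ and $\Pr[-S > m\epsilon]$; a union bound will then yield the factor of $2$ in the statement.

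For the upper tail, I would fix an arbitrary $t > 0$ and apply Markov's inequality to $e^{tS}$, obtaining $\Pr[S > m\epsilon] \le e^{-tm\epsilon}\prod_{i=1}^m E[e^{tY_i}]$ by independence. Invoking Hoeffding's lemma for each factor gives $\prod_i E[e^{tY_i}] \le \exp(mt^2/8)$, and therefore $\Pr[S > m\epsilon] \le \exp(mt^2/8 - tm\epsilon)$. Minimizing the exponent by choosing $t = 4\epsilon$ yields exactly $\exp(-2m\epsilon^2)$. The same argument applied to $-S$ handles the lower tail, and the union bound finishes the proof.

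The main obstacle is \emph{Hoeffding's lemma} itself: for any zero-mean $Y$ taking values in $[a,b]$, $E[e^{tY}] \le \exp(t^2(b-a)^2/8)$. I would prove this by exploiting convexity of $x \mapsto e^{tx}$ on $[a,b]$: writing $Y$ as a convex combination of the endpoints gives $e^{tY} \le \frac{b-Y}{b-a}e^{ta} + \frac{Y-a}{b-a}e^{tb}$. Taking expectations and using $E[Y]=0$ reduces the claim to an inequality on the single-variable function $\phi(u) = -pu + \log((1-p) + pe^u)$ with $u = t(b-a)$ and $p = -a/(b-a)$. A direct computation shows $\phi(0) = \phi'(0) = 0$ and $\phi''(u) = q(1-q) \le 1/4$ uniformly, where $q = pe^u/((1-p)+pe^u)$; a second-order Taylor expansion then yields $\phi(u) \le u^2/8$, which is exactly the required per-variable bound. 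Everything else in the argument is routine algebraic manipulation.
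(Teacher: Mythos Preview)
Your proof is correct and is precisely the standard Cram\'er--Chernoff argument due to Hoeffding. Note, however, that the paper does not supply its own proof of this lemma: it is stated with a citation to Hoeffding's original paper and used as a black box, so there is no in-paper proof to compare against. Your write-up would serve perfectly well as a self-contained justification.
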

 
 \subsection{Basic distribution procedures}
 
 The following is a folklore result about learning any distribution supported over $[n]$, that we
prove here for completeness. 

\begin{lemma}[Folklore]
  Let $\mu$ be a distribution supported over $[n]$. Using $\tfrac{n + \log(2/\delta)}{2\epsilon^2}$
  unconditional samples from $\mu$, we can obtain an explicit distribution $\mu'$ supported on $[n]$
  such that, with probability at least $1 - \delta$, $d(\mu, \mu') \le \epsilon$.
  \label{lem:learn-dist}
\end{lemma}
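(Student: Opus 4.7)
My approach is the standard one: take $m := \lceil (n+\log(2/\delta))/(2\epsilon^2)\rceil$ independent samples $X_1,\dots,X_m$ from $\mu$, and output the empirical distribution $\mu'(i) := |\{j : X_j=i\}|/m$. Clearly $\mu'$ is an explicit distribution on $[n]$, so the only thing to argue is that $d(\mu,\mu') \le \epsilon$ with probability at least $1-\delta$.

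The key identity I will exploit is that for any two distributions $\mu,\mu'$ on $[n]$, one has $d(\mu,\mu') = 2\max_{A\subseteq[n]}(\mu(A)-\mu'(A))$; indeed, taking $A^\star=\{i:\mu(i)\ge\mu'(i)\}$ witnesses the maximum. Thus it suffices to show that with probability at least $1-\delta$, every subset $A\subseteq[n]$ satisfies $\mu(A)-\mu'(A)\le\epsilon/2$. For a fixed $A$, the quantity $\mu'(A)=\frac{1}{m}\sum_{j=1}^{m}\mathbb{1}[X_j\in A]$ is an average of $m$ i.i.d.\ Bernoulli$(\mu(A))$ variables, so Lemma~\ref{lem:hoeffding} (Hoeffding) gives
\[
\Pr\!\left[\mu(A)-\mu'(A) > \epsilon/2\right] \;\le\; \exp(-m\epsilon^2/2).
\]
Because the sign of $\mu(A)-\mu'(A)$ flips when $A$ is replaced by $[n]\setminus A$, it is enough to union-bound over one representative in each complementary pair, i.e.\ over at most $2^{n-1}$ sets, yielding
\[
\Pr[d(\mu,\mu')>\epsilon] \;\le\; 2^{n-1}\exp(-m\epsilon^2/2).
\]
Plugging in the chosen $m$ and simplifying shows the right-hand side is at most $\delta$, which completes the proof.

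The routine but slightly delicate part is getting the constant in $m$ to match $(n+\log(2/\delta))/(2\epsilon^2)$ exactly from the union-bound expression $2^{n-1}\exp(-m\epsilon^2/2)\le\delta$; this depends on the base of the logarithm and on a careful accounting of the factor from pairing $A$ with $[n]\setminus A$. Beyond that bookkeeping there is no real obstacle: the scheme is just ``empirical distribution plus Hoeffding and a subset union bound,'' using nothing beyond Lemma~\ref{lem:hoeffding} and the elementary identity $d(\mu,\mu')=2\max_A(\mu(A)-\mu'(A))$.
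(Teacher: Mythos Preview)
Your approach is essentially identical to the paper's: define $\mu'$ as the empirical distribution, apply Hoeffding (Lemma~\ref{lem:hoeffding}) to bound $|\mu(S)-\mu'(S)|$ for each fixed $S\subseteq[n]$, and take a union bound over all subsets. Your pairing of $A$ with its complement and use of the one-sided tail is a cosmetic variant of the paper's two-sided bound over all $2^n$ sets, and your caveat about the exact constant is fair---the paper glosses over the same bookkeeping.
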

\begin{proof}
  Take $m = \tfrac{n+\log(2/\delta)}{2\epsilon^2}$ samples from $\mu$, and for each $i \in [n]$,
  let $m_i$ be the number of times $i$ was sampled. Define $\mu'(i) = m_i/m$.  Now, we show that
  $\max_{S \subseteq [n]}|\mu(S) - \mu'(S)| \le \epsilon/2$.  The lemma follows from this since the
  $\ell_1$ distance is equal to twice this amount.

  For any set $S \subseteq [n]$, let $X_1, X_2, \ldots, X_m$ be random variables such that $X_j = 1$
  if the $j^{th}$ sample was in $S$, and otherwise $X_j = 0$. Let $\overline{X} =
  \tfrac{1}{m}\sum_{j \in [m]} X_j$. Then, $\overline{X} = \mu'(S)$ and $E[\overline{X}] = \mu(S)$.
  By Lemma \ref{lem:hoeffding}, $\Pr[|\overline{X} - E[\overline{X}]| > \epsilon/2] \le
  2e^{-2m\epsilon^2}$. Substituting for $m$, we get that $\Pr[|\mu'(S) - \mu(S)| > \epsilon/2] \le
  2e^{-n - \log(2/\delta)}$. Taking a union bound over all sets, with probability at least $1 -
  \delta$, $|\mu'(S) - \mu(S)| \le \epsilon/2$ for every $S \subseteq [n]$. Therefore, $d(\mu, \mu')
  \le \epsilon$.
\end{proof}

We also have the following simple lemma about learning a distribution in $\ell_\infty$ distance.
\begin{lemma}
  Let $\mu$ be a distribution supported over $[n]$. Using $\tfrac{\log(2n/\delta)}{2\epsilon^2}$
  unconditional samples from $\mu$, we can obtain an explicit distribution $\mu'$ supported on $[n]$
  such that, with probability at least $1 - \delta$, $\lVert \mu - \mu' \rVert_\infty \le \epsilon$.
  \label{lem:learn-infty}
\end{lemma}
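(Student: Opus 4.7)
The plan is to mimic the proof of Lemma \ref{lem:learn-dist}, but replace the union bound over all subsets $S \subseteq [n]$ with a union bound over just the $n$ singletons $\{i\}$, which is exactly what saves the factor of $n$ in the numerator of $m$ (turning it into $\log n$).

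Concretely, set $m = \log(2n/\delta)/(2\epsilon^2)$, draw $m$ independent samples from $\mu$, let $m_i$ be the number of times $i$ was observed, and define $\mu'(i) = m_i/m$. For each fixed $i \in [n]$, I would introduce indicator variables $X_1, \dots, X_m$ with $X_j = 1$ iff the $j$th sample equals $i$; these are independent, $\{0,1\}$-valued, and $\overline{X} = \tfrac1m \sum_j X_j = \mu'(i)$ with $E[\overline{X}] = \mu(i)$.

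Applying Hoeffding's inequality (Lemma \ref{lem:hoeffding}) with deviation $\epsilon$ gives
\[
\Pr\bigl[\,|\mu'(i) - \mu(i)| > \epsilon\,\bigr] \le 2e^{-2m\epsilon^2} = 2e^{-\log(2n/\delta)} = \delta/n.
\]
A union bound over the $n$ choices of $i$ then yields $\Pr[\lVert \mu - \mu' \rVert_\infty > \epsilon] \le \delta$, as desired.

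There is no real obstacle here; the only point worth being careful about is that $\mu'$ as defined is automatically a probability distribution (since $\sum_i m_i = m$), so no renormalization step is needed, and the union bound is over $n$ events rather than $2^n$, which is precisely why the $\ell_\infty$ version is cheaper than the $\ell_1$ version of Lemma \ref{lem:learn-dist}.
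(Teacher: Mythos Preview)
Your proposal is correct and matches the paper's proof essentially line for line: the paper also sets $m = \log(2n/\delta)/(2\epsilon^2)$, defines $\mu'(i) = m_i/m$, applies Lemma~\ref{lem:hoeffding} to the indicators $X_j = \mathbf{1}[\text{$j$th sample} = i]$ to get $\Pr[|\mu(i)-\mu'(i)|>\epsilon]\le \delta/n$, and finishes with a union bound over $i\in[n]$.
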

\begin{proof}
  Take $m = \tfrac{\log(2n/\delta)}{2\epsilon^2}$ samples from $\mu$, and for each $i \in [n]$, let
  $m_i$ be the number of times $i$ was sampled. For each $i \in [n]$, define $\mu'(i) = m_i/m$.

  For an index $i \in [n]$, let $X_1, X_2, \ldots, X_m$ be random variables such that $X_j = 1$ if
  the $j^{th}$ sample is $i$, and otherwise $X_j = 0$. Let $\overline{X} = \tfrac{1}{m} \sum_{j \in
  [m]} X_j$. Then $\overline{X} = \mu'(i)$ and $E[\overline{X}] = \mu(i)$. By Lemma
  \ref{lem:hoeffding}, $\Pr[|\overline{X} - E[\overline{X}]| > \epsilon] \le 2e^{-2m\epsilon^2}$.
  Substituting for $m$, we get that $\Pr[|\mu(i) - \mu'(i)| > \epsilon] \le \delta/n$. By a union
  bound over $[n]$, with probability at least $1 - \delta$, $|\mu(i) - \mu'(i)| \le \epsilon$ for
  every $i \in [n]$.
\end{proof}


\section{Fine partitions and how to pull them} \label{sec:pullfine}
 
 We define the notion of $\eta$-fine partitions of a distribution $\mu$ supported over $[n]$, which
 are central to all our algorithms.

 \begin{definition}
   [$\eta$-fine interval partition]
   Given a distribution $\mu$ over $[n]$, an {\em $\eta$-fine} interval partition of $\mu$ is an
   interval-partition $\mathcal{I} = (I_1, I_2, \ldots, I_r)$ of $[n]$ such that for all $j \in [r]$,
   $\mu(I_j) \le \eta$, excepting the case $|I_j| = 1$.  
   The \emph{length} $|\mathcal{I}|$ of an interval partition $\mathcal{I}$ is the number of intervals in it.
 \end{definition}
 
 The following Algorithm \ref{alg:pulled-partition} is the pulling mechanism. The idea is to take independent unconditional samples from $\mu$, make them into singleton intervals in our interval-partition $\mathcal{I}$, and then take the intervals between these samples as the remaining intervals in $\mathcal{I}$.
 
 \begin{algorithm}
   [htpb]
   \caption{Pulling an $\eta$-fine partition}
   \label{alg:pulled-partition}
 
   \KwInput{Distribution $\mu$ supported over $[n]$, parameters $\eta>0$ (fineness) and $\delta > 0$ (error probability)}

 Take $m = \tfrac{3}{\eta}\log\left( \tfrac{3}{\eta \delta} \right)$ unconditional samples
 from $\mu$
 \label{step:pulled-sample}
 
 Arrange the indices sampled in increasing order $i_1 < i_2 < \cdots < i_r$ without repetition and set $i_0=0$
 \label{step:pulled-arrange}
 
 \For {each $j\in [r]$}
 {
 \lIf {$i_j  > i_{j-1} +1$}
 {add the interval $\{i_{j-1} + 1, \ldots, i_j - 1\}$ to $\mathcal{I}$}
 \label{step:pulled-interval}
 
 Add the singleton interval $\{i_j\}$ to $\mathcal{I}$
 \label{step:pulled-singleton}
 }
 
 \lIf {$i_r < n$} 
 {add the interval $\{i_r + 1, \ldots, n\}$ to $\mathcal{I}$}
 \label{step:pulled-last-interval}

 \Return{$\mathcal{I}$}
 
 \end{algorithm}
 
 \begin{lemma}\label{lem:pulled-partition}
 Let $\mu$ be a distribution that is supported over $[n]$,  and $\eta,\delta > 0$, and
 suppose that these are fed to Algorithm~\ref{alg:pulled-partition}.
 Then, with probability at least $1 - \delta$, 
   the set of intervals $\mathcal{I}$ returned by  Algorithm~\ref{alg:pulled-partition} is an $\eta$-fine interval partition of $\mu$ of length $O\left( \tfrac{1}{\eta}\log\left(
   \tfrac{1}{\eta \delta} \right) \right)$.
 \end{lemma}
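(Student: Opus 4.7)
The plan is to establish the length bound and the $\eta$-fineness statement separately. The length bound is deterministic: the $m$ samples produce at most $m$ distinct sorted indices, which contribute at most $m$ singleton intervals in $\mathcal{I}$ and at most $m+1$ intermediate gap intervals, giving $|\mathcal{I}| \le 2m+1 = O((1/\eta)\log(1/(\eta\delta)))$ immediately.

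For the $\eta$-fineness with probability at least $1-\delta$, my plan is to construct a collection $\mathcal{W}$ of witness subsets of $[n]$, each of $\mu$-mass at least $\tau := \eta/3$, whose simultaneous hitting by the $m$ samples forces every non-singleton interval of $\mathcal{I}$ to have $\mu$-weight at most $\eta$. I would call a point $i$ heavy if $\mu(i) > \tau$ and make each heavy singleton a witness, then sweep left to right through the remaining light points and greedily form blocks by closing each block the instant its accumulated $\mu$-weight first reaches $\tau$, making every completed block a witness as well (an incomplete final block of weight below $\tau$ is not a witness). Because each light point has weight at most $\tau$, every completed light block has weight in $[\tau, 2\tau]$. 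Since every witness has weight at least $\tau$ and the witnesses are pairwise disjoint, $|\mathcal{W}| \le 1/\tau = 3/\eta$.

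The combinatorial core is to show that hitting every $W \in \mathcal{W}$ suffices. I would take a non-singleton gap interval $J$ in $\mathcal{I}$, i.e., a contiguous unsampled range of size at least $2$. Since every heavy singleton witness is hit, $J$ contains no heavy point, and its light points form a contiguous run in the light-point ordering lying in at most two consecutive light blocks $B_j, B_{j+1}$: three or more consecutive blocks would force the middle block, which is necessarily complete and hence a witness, to be entirely contained in $J$, contradicting that its witness is hit. The piece $J \cap B_j$ is a proper suffix of $B_j$ (since $B_j$ is a complete witness, $B_j \subseteq J$ is impossible), so it has weight at most $\mu(B_j) \le 2\tau$, while $J \cap B_{j+1}$ has weight strictly less than $\tau$ (either as a proper prefix, by the greedy closing rule, or as the entire last incomplete block of weight below $\tau$). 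Summing, $\mu(J) < 3\tau = \eta$.

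For the probabilistic half, the chance a fixed $W \in \mathcal{W}$ is missed by all $m$ samples is $(1 - \mu(W))^m \le e^{-m\tau}$, so a union bound over the $\le 3/\eta$ witnesses gives failure probability at most $(3/\eta)e^{-m\eta/3}$, which with $m = (3/\eta)\log(3/(\eta\delta))$ becomes exactly $\delta$. The main obstacle I expect is getting the combinatorial claim right, in particular calibrating $\tau$ so that the proper-prefix and suffix bounds sum to at most $\eta$ while keeping $|\mathcal{W}|$ small enough for the union bound to fit within the sample size; the choice $\tau = \eta/3$ makes both constraints work out exactly.
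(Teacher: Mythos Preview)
Your proof is correct. Both you and the paper follow the same high-level strategy: construct at most $3/\eta$ pairwise disjoint witness sets, each of $\mu$-weight at least $\eta/3$, show that hitting all of them with the $m$ samples forces $\eta$-fineness, and close with a union bound that matches the sample count $m=\tfrac{3}{\eta}\log(\tfrac{3}{\eta\delta})$ exactly. The length bound is handled identically.

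The difference lies in the witness construction and the corresponding combinatorial argument. The paper takes $\mathcal{J}$ to be a maximal collection of pairwise disjoint \emph{minimal intervals} of $[n]$, each of weight at least $\eta/3$, and argues via minimality and maximality that any interval of weight at least $\eta$ must fully contain some member of $\mathcal{J}$. Your witnesses are instead the heavy singletons together with greedily formed blocks of light points; these blocks need not be contiguous in $[n]$ (heavy points may be interspersed), so your argument first uses the hitting of heavy singletons to ensure a non-singleton gap interval $J$ contains only light points, and then exploits the greedy closing rule to bound $\mu(J)$ by a proper-suffix piece (at most $2\tau$) plus a proper-prefix or incomplete-tail piece (strictly below $\tau$). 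Your construction is explicit and algorithmic, while the paper's is terser but relies on the existence of an extremal object; both yield the same constants and the same final bound.
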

 \begin{proof}
   Let $\mathcal{I}$ the set of intervals returned by Algorithm~\ref{alg:pulled-partition}. 
   The guarantee on the length of $\mathcal{I}$ follows from the number of samples taken in Step~\ref{step:pulled-sample},
   noting that $|\mathcal{I}|\leq 2r-1=O(m)$.
   
   Let $\mathcal{J}$ be a maximal set of pairwise disjoint minimal intervals $I$ in $[n]$, such that  $\mu(I)\geq \eta/3$ for every interval $I\in \mathcal{J}$. Note that every $i$ for which $\mu(i)\geq\eta/3$ necessarily appears as a singleton interval $\{i\}\in\mathcal{J}$. Also clearly $|\mathcal{J}|\leq 3/\eta$.
   
   We shall first show that if an interval $I'$ is such that $\mu(I') \geq \eta$, then it fully contains some interval  $I\in \mathcal{J}$. Then, we shall show 
  that, with probability at least $1-\delta$,
   the samples taken in Step~\ref{step:pulled-sample} include an index from   every interval $I\in\mathcal{J}$.
   By Steps~\ref{step:pulled-arrange} to~\ref{step:pulled-last-interval} of the algorithm and the above, this implies the statement of the lemma.
     
   Let $I'$ be an interval such that $\mu(I')\geq \eta$, and assume on the contrary that it contains no interval from $\mathcal J$. Clearly it may intersect without containing at most two intervals $I_l,I_r\in\mathcal{J}$. Also, $\mu(I'\cap I_l)<\eta/3$ because otherwise we could have replaced $I_l$ with $I'\cap I_l$ in $\mathcal{J}$, and the same holds for $\mu(I'\cap I_r)$. But this means that $\mu(I\setminus(I_l\cup I_r))>\eta/3$, and so we could have added $I\setminus(I_l\cup I_r)$ to $\mathcal{J}$, again a contradiction.
   
   Let $I\in \mathcal{J}$.
   The probability that an index from $I$ is not sampled is at most $(1
   - \eta/3)^{3\log(3/\eta \delta)/\eta} \leq \delta \eta/3$. 
   By a union bound over all $I\in\mathcal{J}$, with probability at least $1 - \delta$ the samples taken in Step~\ref{step:pulled-sample} include an index from every interval in $\mathcal{J}$.   
 \end{proof}
 
 The following is a definition of a variation of a fine partition, where we allow some intervals of small total weight to violate the original requirements.
 
 \begin{definition}
   [$(\eta, \gamma)$-fine partitions]
   Given a distribution $\mu$ over $[n]$,
   an {\em $(\eta, \gamma)$-fine} interval partition is an interval
   partition $\mathcal{I} = (I_1, I_2, \ldots, I_r)$ such that $\sum_{I\in\mathcal{H}_{\mathcal{I}}}\mu(I) \le \gamma$, where $\mathcal{H}_{\mathcal{I}}$ is the set of {\em violating intervals} $\left\{ I \in \mathcal{I} : \mu(I) > \eta, |I| > 1 \right\}$.
   \label{defn:gamma-eta}
 \end{definition}
 
 In our applications, $\gamma$ will be larger than $\eta$ by a factor of $L$, which would allow us through the following Algorithm \ref{alg:pulled-rel-partition} to avoid having additional $\log L$ factors in our expressions for the unconditional and the adaptive tests.
 
 \begin{algorithm}
   [htpb]
   \caption{Pulling an $(\eta,\gamma)$-fine partition}
   \label{alg:pulled-rel-partition}
 
   \KwInput{Distribution $\mu$ supported over $[n]$, parameters $\eta, \gamma>0$ (fineness) and $\delta > 0$ (error probability)}

 Take $m = \tfrac{3}{\eta}\log\left( \tfrac{5}{\gamma\delta} \right)$ unconditional samples
 from $\mu$
 \label{step:pulled-rel-sample}
 
 Perform Step \ref{step:pulled-arrange} through Step \ref{step:pulled-last-interval} of Algorithm \ref{alg:pulled-partition}.

 \Return{$\mathcal{I}$}
 
 \end{algorithm}
 
 \begin{lemma}\label{lem:pulled-rel-partition}
 Let $\mu$ be a distribution that is supported over $[n]$,  and $\gamma, \eta,\delta > 0$, and
 suppose that these are fed to Algorithm~\ref{alg:pulled-rel-partition}.
 Then, with probability at least $1 - \delta$, 
   the set of intervals $\mathcal{I}$ returned by  Algorithm~\ref{alg:pulled-rel-partition} is an $(\eta,\gamma)$-fine interval partition of $\mu$ of length $O\left( \tfrac{1}{\eta}\log\left(\tfrac{1}{\gamma \delta} \right) \right)$.
 \end{lemma}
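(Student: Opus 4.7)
The plan is to mimic the argument for Lemma~\ref{lem:pulled-partition}, replacing its union bound by Markov's inequality on the expected total mass of violating intervals. The length bound $O(\tfrac{1}{\eta}\log(\tfrac{1}{\gamma\delta}))$ is immediate from $|\mathcal{I}| \le 2m+1$, with $m = \tfrac{3}{\eta}\log(\tfrac{5}{\gamma\delta})$ the sample count from Step~\ref{step:pulled-rel-sample}.

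For the $(\eta,\gamma)$-fineness, I reuse the witness collection $\mathcal{J}$ from the proof of Lemma~\ref{lem:pulled-partition}: a maximal family of pairwise disjoint minimal intervals of $[n]$ each of $\mu$-mass at least $\eta/3$, so $|\mathcal{J}| \le 3/\eta$. The same argument used there shows that every interval of $\mu$-mass at least $\eta$ fully contains some $J \in \mathcal{J}$; hence every violating $I \in \mathcal{H}_{\mathcal{I}}$ contains some $J \in \mathcal{J}$, and this $J$ must be avoided by all $m$ samples since $I$ itself is sample-free. Writing $J(I) = \{J \in \mathcal{J} : J \subseteq I\}$, the minimality of members of $\mathcal{J}$ forces any partial overlap of $I$ with a boundary member of $\mathcal{J}$ to have mass below $\eta/3$, while the maximality of $\mathcal{J}$ forces every maximal subinterval of $I$ disjoint from $\bigcup \mathcal{J}$ to have mass below $\eta/3$. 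Since $I$ has at most two partial boundary overlaps and at most $|J(I)|+1$ internal gaps, this yields
\[
  \mu(I) \;\le\; \sum_{J \in J(I)} \mu(J) \;+\; \bigl(|J(I)|+3\bigr)\cdot \eta/3.
\]
Summing over the (disjoint) violating intervals, and using $|\mathcal{H}_{\mathcal{I}}| \le |\bigcup_I J(I)|$, gives
\[
  \sum_{I \in \mathcal{H}_{\mathcal{I}}} \mu(I) \;\le\; \sum_{J \in \mathcal{J}_{\mathrm{unhit}}} \mu(J) \;+\; \tfrac{4\eta}{3}\cdot |\mathcal{J}_{\mathrm{unhit}}|,
\]
where $\mathcal{J}_{\mathrm{unhit}}$ is the set of members of $\mathcal{J}$ hit by none of the samples.

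To finish, I take expectations over the sample choices. Each $J \in \mathcal{J}$ is unhit with probability at most $(1-\eta/3)^m \le e^{-m\eta/3} = \gamma\delta/5$ by the choice of $m$. Using linearity together with $\sum_{J \in \mathcal{J}}\mu(J)\le 1$, I get $E\bigl[\sum_{J\in\mathcal{J}_{\mathrm{unhit}}}\mu(J)\bigr]\le \gamma\delta/5$ and $E[|\mathcal{J}_{\mathrm{unhit}}|]\le (3/\eta)(\gamma\delta/5)$. Substituting bounds $E\bigl[\sum_{I\in\mathcal{H}_{\mathcal{I}}}\mu(I)\bigr]$ by $\gamma\delta$, after which Markov's inequality delivers $\Pr\bigl[\sum_I \mu(I) > \gamma\bigr] \le \delta$. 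The main obstacle I anticipate is the bookkeeping in the bound on $\mu(I)$: the constants in the decomposition into contained members of $\mathcal{J}$, internal gaps, and boundary overlaps must line up so that the Markov step with exactly the $m$ used by the algorithm yields failure probability at most $\delta$, rather than a larger constant multiple.
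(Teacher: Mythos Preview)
Your proof is correct and follows essentially the same approach as the paper's: both use the witness family $\mathcal{J}$ of minimal intervals of mass $\ge\eta/3$, bound the mass of each violating interval in terms of the unhit members of $\mathcal{J}$ it contains (plus lower-order pieces each of mass below $\eta/3$), and then apply Markov's inequality to the expected unhit mass. The only cosmetic difference is that the paper converts the additive $(|J(I)|+3)\eta/3$ overhead into a multiplicative factor of $5$ on $\sum_{J\in J(I)}\mu(J)$ (using $\mu(J)\ge\eta/3$), whereas you keep it additive and control $E[|\mathcal{J}_{\mathrm{unhit}}|]$ separately via $|\mathcal{J}|\le 3/\eta$; the arithmetic lands in the same place.
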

 \begin{proof}
   Let $\mathcal{I}$ the set of intervals returned by Algorithm~\ref{alg:pulled-rel-partition}. 
   The guarantee on the length of $\mathcal{I}$ follows from the number of samples taken in Step~\ref{step:pulled-rel-sample}.
   
   As in the proof of Lemma \ref{lem:pulled-partition}, Let $\mathcal{J}$ be a maximal set of pairwise disjoint minimal intervals $I$ in $[n]$, such that  $\mu(I)\geq \eta/3$ for every interval $I\in \mathcal{J}$. Here, also define the set $\mathcal{J}'$ to be the set of maximal intervals in $[n]\setminus\bigcup_{I\in\mathcal{J}}I$. Note now that $\mathcal{J}\cup\mathcal{J}'$ is an interval partition of $[n]$. Note also that between every two consecutive intervals of $\mathcal{J}'$ lies an interval of $\mathcal{J}$. Finally, since $\mathcal{J}$ is maximal, all intervals in $\mathcal{J}'$ are of weights less than $\eta/3$.
   
   Recalling the definition $\mathcal{H} = \left\{ I \in \mathcal{I} : \mu(I) >  \eta, |I| > 1 \right\}$, as in the proof of Lemma \ref{alg:pulled-partition} every $I'\in\mathcal{H}$ must fully contain an interval from $\mathcal{J}$ from which no point was sampled. Moreover, $I'$ may not fully contain intervals from $\mathcal{J}$ from which any points were sampled.
   
   Note furthermore that the weight of such an $I'$ is not more than $5$ times the total weight of the intervals in $\mathcal{J}$ that it fully contains. To see this, recall that the at most two intervals from $\mathcal{J}$ that intersect $I'$ without containing it have intersections of weight not more than $\eta/3$. Also, there may be the intervals of $\mathcal{J'}$ intersecting $I'$, each of weight at most $\eta/3$. However, because there is an interval $\mathcal{J}$ between any two consecutive intervals of $\mathcal{J}'$, the number of intervals from $\mathcal{J}'$ intersecting $I'$ is at most $1$ more than the number of intervals of $\mathcal{J}$ fully contained in $I'$. Thus the number of intersecting intervals from $\mathcal{J}\cup\mathcal{J}'$ is not more than $5$ times the number of fully contained intervals from $\mathcal{J}$, and together with their weight bounds we get the bound on $\mu(I')$.
   
   Let $I\in \mathcal{J}$. The probability that an index from $I$ is not sampled is at most $(1 - \eta/3)^{3\log(5/\gamma \delta)/\eta} \leq \delta \gamma/5$. 
   By applying the Markov bound over all $I\in\mathcal{J}$ (along with their weights), with probability at least $1 - \delta$ the samples taken in Step~\ref{step:pulled-sample} include an index from every interval in $\mathcal{J}$ but at most a subset of them of total weight at most $\gamma/5$. By the above this means that $\sum_{I\in\mathcal{H}}\mu(I) \le \gamma$. 
 \end{proof}
 

 \section{Handling decomposable distributions}\label{sec:finedecomp}
 
 The notion of $L$-decomposable distributions was defined and studied in \cite{CanonneDGR15}. They
 showed that a large class of properties, such as monotonicity and log-concavity, are
 $L$-decomposable. We now formally define $L$-decomposable distributions and properties, as given in
 \cite{CanonneDGR15}.
 
 \begin{definition}
   [$(\gamma,L)$-decomposable distributions \cite{CanonneDGR15}]
   For an integer $L$, a distribution $\mu$ supported over $[n]$ is $(\gamma,L)$-decomposable, if
   there exists an interval partition $\mathcal{I} = (I_1, I_2, \ldots, I_\ell)$ of $[n]$, where
   $\ell \le L$, such that for all $j \in [\ell]$, at least one of the following holds.
   \begin{enumerate}
     \item $\mu(I_j) \le \tfrac{\gamma}{L}$.
     \item $\max_{i \in I_j} \mu(i) \le (1 + \gamma) \min_{i \in I_j} \mu(i)$.
   \end{enumerate}
 \end{definition}
 
The second condition in the definition of a $(\gamma,L)$-decomposable distribution is identical to saying that $\bias(\restrict{\mu}{I_j})\leq \gamma$. An $L$-decomposable property is now defined in terms of all its members being decomposable distributions.

\begin{definition}
   [$L$-decomposable properties, \cite{CanonneDGR15}]
 For a function $L: (0,1] \times \mathbb N \to \mathbb N$, we say that a property of distributions
 $\mathcal{C}$ is $L$-decomposable, if for every $\gamma > 0$, and $\mu \in \mathcal{C}$ supported
 over $[n]$,  $\mu$ is $(\gamma,L(\gamma, n))$-decomposable.
 \end{definition}

Recall that part of the algorithm for learning such distributions is finding (through pulling) what we referred to as a fine partition.
Such a partition may still have intervals where the conditional distribution over them is far from uniform.
However, we shall show that for $L$-decomposable distributions, the total weight of such ``bad'' intervals is not very high.

 The next lemma shows that every fine partition of an $(\gamma, L)$-decomposable distribution
 has only a small weight concentrated on ``non-uniform'' intervals and thus it will be sufficient to deal with the ``uniform'' intervals.
 \begin{lemma}
   Let $\mu$ be a distribution supported over $[n]$ which is $(\gamma, L)$-decomposable. For every
   $\gamma/L$-fine interval partition $\mathcal{I}' = (I'_1, I'_2, \ldots, I'_r)$ of
   $\mu$, the following holds.
   \begin{align*}
     \sum_{j \in [r]: \bias(\restrict{\mu}{I'_j}) > \gamma} \mu(I'_j) \le 2\gamma.
   \end{align*}
   \label{lem:small-wt-non-uniform}
 \end{lemma}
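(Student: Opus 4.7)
The plan is to classify each ``bad'' interval $I'_j \in \mathcal{I}'$ (one with $\bias(\restrict{\mu}{I'_j}) > \gamma$) based on how it sits relative to the witness partition $\mathcal{I} = (I_1,\ldots,I_\ell)$ coming from the $(\gamma,L)$-decomposability of $\mu$. Call $I'_j$ of \emph{Type A} if it is fully contained in some $I_k$, and of \emph{Type B} if it meets at least two distinct $I_k$'s. Since both $\mathcal{I}$ and $\mathcal{I}'$ are interval partitions of $[n]$, this dichotomy covers all cases. I will bound the total $\mu$-weight of each type separately by $\gamma$, yielding the claim.

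For Type A, I use the monotonicity of bias under restriction: if $I'_j \subseteq I_k$, then $\max_{i\in I'_j}\mu(i) \le \max_{i\in I_k}\mu(i)$ and $\min_{i\in I'_j}\mu(i) \ge \min_{i\in I_k}\mu(i)$, so $\bias(\restrict{\mu}{I'_j}) \le \bias(\restrict{\mu}{I_k})$. Thus a Type A bad interval forces $\bias(\restrict{\mu}{I_k}) > \gamma$, so Condition 2 in the definition of $(\gamma,L)$-decomposability fails for $I_k$, forcing Condition 1: $\mu(I_k) \le \gamma/L$. Summing over those $I_k$'s gives a total weight of at most $\ell\cdot\gamma/L \le \gamma$, and since the Type A bad $I'_j$'s are disjointly contained in these $I_k$'s, their total $\mu$-weight is also at most $\gamma$.

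For Type B, I will argue there are simply too few such intervals. A Type B interval $I'_j$ must contain some boundary point between two consecutive intervals of $\mathcal{I}$ (i.e.\ some pair $(\max I_k, \min I_{k+1})$ with both endpoints in $I'_j$), and since the $I'_j$'s are pairwise disjoint intervals, each of the $\ell-1 \le L-1$ such boundaries is contained in at most one $I'_j$. Hence there are at most $L-1$ Type B intervals in total. Moreover, a bad interval cannot be a singleton (a singleton has bias $0 < \gamma$), so the $\gamma/L$-fineness of $\mathcal{I}'$ applies and gives $\mu(I'_j) \le \gamma/L$ for every Type B bad interval. Summing yields at most $(L-1)\cdot \gamma/L \le \gamma$.

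Adding the two bounds gives the desired $2\gamma$. The only subtle point is the Type B counting argument — making sure that ``crossing a boundary'' is the right invariant and that disjointness of the $I'_j$'s really does cap the count at $\ell-1$; everything else is a direct bookkeeping consequence of the two given partition properties.
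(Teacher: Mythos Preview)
Your proposal is correct and follows essentially the same approach as the paper's own proof: classify bad intervals into those contained in a single $I_k$ (where bias monotonicity forces $\mu(I_k)\le\gamma/L$) versus those straddling a boundary of $\mathcal{I}$ (at most $L-1$ of them, each non-singleton and hence of weight at most $\gamma/L$), and bound each class by $\gamma$. Your write-up is in fact slightly more explicit than the paper's on the monotonicity of bias and on the boundary-counting argument.
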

 \begin{proof}
   Let $\mathcal{I} = (I_1, I_2, \ldots, I_\ell)$ be the $L$-decomposition of $\mu$, where
   $\ell \le L$. Let $\mathcal{I}' = (I'_1, I'_2, \ldots, I'_r)$ be an interval partition
   of $[n]$ such that for all $j \in [r]$, $\mu(I'_j) \le \gamma/L$ or $|I'_j| = 1$. 
 
   Any interval $I'_j$ for which
   $\bias(\restrict{\mu}{I'_j}) > \gamma$, is either completely inside an interval
   $I_k$ such that $\mu(I_k) \le \gamma/L$, or intersects more than one interval (and in particular
   $|I'_j| > 1$). There are at most $L-1$ intervals in $\mathcal{I}'$ that intersect more than one
   interval in $\mathcal{I}$.  The sum of the weights of all such intervals is at most $\gamma$.  
   
   For any interval $I_k$ of $\mathcal{I}$ such that $\mu(I_k) \le \gamma/L$, the sum of the weights
   of intervals from $\mathcal{I}'$ that lie completely inside $I_k$ is at most $\gamma/L$. Thus, the
   total weight of all such intervals is bounded by $\gamma$.
   Therefore, the sum of the weights of intervals $I'_j$ such that $\bias(\restrict{\mu}{I'_j}) > \gamma$ is at most $2\gamma$.
 \end{proof}

In order to get better bounds, we will use the counterpart of this lemma for the more general (two-parameter) notion of a fine partition.

 \begin{lemma}
   Let $\mu$ be a distribution supported over $[n]$ which is $(\gamma, L)$-decomposable. For every
   $(\gamma/L,\gamma)$-fine interval partition $\mathcal{I}' = (I'_1, I'_2, \ldots, I'_r)$ of
   $\mu$, the following holds.
   \begin{align*}
     \sum_{j \in [r]: \bias(\restrict{\mu}{I'_j}) > \gamma} \mu(I'_j) \le 3\gamma.
   \end{align*}
   \label{lem:small-wt-non-uniform-rel}
 \end{lemma}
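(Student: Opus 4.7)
The plan is to split the sum over high-bias intervals into two parts: the contribution from the violating intervals $\mathcal{H}_{\mathcal{I}'}$, and the contribution from non-violating intervals. For the first part, by the definition of an $(\gamma/L,\gamma)$-fine partition, we immediately have $\sum_{I\in\mathcal{H}_{\mathcal{I}'}}\mu(I)\le\gamma$, so this portion is handled ``for free''. All the work is in bounding the contribution of the non-violating high-bias intervals.

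For a non-violating interval $I'_j$, by definition we have $|I'_j|=1$ or $\mu(I'_j)\le\gamma/L$. Since any singleton trivially has $\bias=0\le\gamma$, we only need to consider non-violating high-bias intervals $I'_j$ with $\mu(I'_j)\le\gamma/L$. These are exactly the intervals controlled by the argument of Lemma~\ref{lem:small-wt-non-uniform}, so I would essentially replay that proof on this restricted class. Let $\mathcal{I}=(I_1,\ldots,I_\ell)$ be the $(\gamma,L)$-decomposition of $\mu$. Any $I'_j$ with $\bias(\restrict{\mu}{I'_j})>\gamma$ cannot lie inside an $I_k$ of the second (low-bias) type, so it either lies entirely inside some $I_k$ with $\mu(I_k)\le\gamma/L$, or it crosses at least one boundary between consecutive intervals of $\mathcal{I}$.

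There are at most $L-1$ intervals of $\mathcal{I}'$ that cross a boundary, and each contributes weight at most $\gamma/L$ (since we restricted to non-violating intervals), giving a total of at most $(L-1)\cdot\gamma/L<\gamma$. For each $I_k$ with $\mu(I_k)\le\gamma/L$, the non-violating $I'_j$'s completely contained in $I_k$ have total weight at most $\mu(I_k)\le\gamma/L$, and summing over the at most $L$ such $I_k$ gives another bound of $\gamma$. Combining the two types, the non-violating high-bias intervals contribute at most $2\gamma$, and adding the violating contribution of at most $\gamma$ yields the desired bound of $3\gamma$.

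The main obstacle, if any, is keeping the case analysis crisp: it is tempting to try to apply Lemma~\ref{lem:small-wt-non-uniform} as a black box, but the hypothesis there requires a plain $\gamma/L$-fine partition, which ours is not. The cleanest way I see is to redo the boundary/containment dichotomy directly on the non-violating sub-family, which is automatically $\gamma/L$-fine in the required sense (once singletons are set aside), and then pay an additive $\gamma$ for the violating intervals via Definition~\ref{defn:gamma-eta}.
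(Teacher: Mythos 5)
Your proof is correct and follows essentially the same route as the paper: pay at most $\gamma$ for the violating intervals of $\mathcal{H}_{\mathcal{I}'}$, and rerun the boundary-crossing/containment argument of Lemma~\ref{lem:small-wt-non-uniform} on the non-violating intervals to get the remaining $2\gamma$. Your remark that the earlier lemma cannot be invoked as a black box but that its proof applies verbatim to the non-violating sub-family is exactly what the paper does implicitly.
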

 \begin{proof}
   Let $\mathcal{I} = (I_1, I_2, \ldots, I_\ell)$ be the $L$-decomposition of $\mu$, where
   $\ell \le L$. Let $\mathcal{I}' = (I'_1, I'_2, \ldots, I'_r)$ be an interval partition
   of $[n]$ such that for a set $\mathcal{H}_{\mathcal{I}}$ of total weight at most $\gamma$, for all $I'_j \in \mathcal{I}\setminus\mathcal{H}_{\mathcal{I}}$, $\mu(I'_j) \le \gamma/L$ or $|I'_j| = 1$.
 
   Exactly as in the proof of Lemma \ref{lem:small-wt-non-uniform}, the total weight of intervals $I'_j \in \mathcal{I}\setminus\mathcal{H}_{\mathcal{I}}$ for which $\bias(\restrict{\mu}{I'_j}) > \gamma$ is at most $2\gamma$. In the worst case, all intervals in $\mathcal{H}_{\mathcal{I}}$ are also  such that $\bias(\restrict{\mu}{I'_j}) > \gamma$, adding at most $\gamma$ to the total weight of such intervals.
 \end{proof}

As previously mentioned, we are not learning the actual distribution but a ``flattening'' thereof. We next formally define the flattening of a distribution $\mu$ with respect to an interval partition
$\mathcal{I}$. 
Afterwards we shall describe its advantages and how it can be learned.
\begin{definition}
  Given a distribution $\mu$ supported over $[n]$ and a partition $\mathcal{I} = (I_1,
  I_2, \ldots, I_\ell)$, of $[n]$ to intervals, the \emph{flattening} of $\mu$ with respect to $\mathcal{I}$ is a
  distribution $\mu_\mathcal{I}$, supported over $[n]$, such that for $i \in I_j$,
  $\mu_\mathcal{I}(i) = \mu(I_j)/|I_j|$.
\end{definition}

The following lemma shows that the flattening of any distribution $\mu$, with respect to any interval
partition that has only small weight on intervals far from uniform, is close to $\mu$.
\begin{lemma}
  Let $\mu$ be a distribution supported on $[n]$, and let $\mathcal{I} = (I_1, I_2, \ldots, I_r)$ be
  an interval partition of $\mu$ such that $\sum_{j \in [r]: d(\restrict{\mu}{I_j},
  \mathcal{U}_{I_j}) \ge \gamma} \mu(I_j) \le \eta$.  Then $d(\mu, \mu_\mathcal{I}) \le \gamma +
  2\eta$.
  \label{lem:flat-close}
\end{lemma}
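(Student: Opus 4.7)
The plan is to express the total $\ell_1$ distance as a sum of per-interval contributions and then split the sum according to the hypothesis. The key identity I would first establish is
\[
 d(\mu, \mu_\mathcal{I}) \;=\; \sum_{j \in [r]} \mu(I_j) \cdot d(\restrict{\mu}{I_j}, \mathcal{U}_{I_j}),
\]
which follows directly from the definitions: inside interval $I_j$, the flattening $\mu_{\mathcal{I}}$ is the uniform distribution on $I_j$ scaled by the total mass $\mu(I_j)$, so
\[
 \sum_{i \in I_j}|\mu(i) - \mu_\mathcal{I}(i)| \;=\; \mu(I_j)\sum_{i \in I_j}\left|\frac{\mu(i)}{\mu(I_j)} - \frac{1}{|I_j|}\right| \;=\; \mu(I_j)\cdot d(\restrict{\mu}{I_j}, \mathcal{U}_{I_j}),
\]
with the convention that intervals having $\mu(I_j) = 0$ contribute $0$ on both sides.

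Next, I would partition the index set $[r]$ into the ``close-to-uniform'' intervals $G = \{j : d(\restrict{\mu}{I_j}, \mathcal{U}_{I_j}) < \gamma\}$ and the ``far-from-uniform'' intervals $B = [r]\setminus G$. For $j \in G$, the contribution is at most $\gamma \cdot \mu(I_j)$, so $\sum_{j \in G}\mu(I_j)\cdot d(\restrict{\mu}{I_j}, \mathcal{U}_{I_j}) \le \gamma \sum_{j\in G}\mu(I_j) \le \gamma$. For $j \in B$, I would use the trivial bound $d(\restrict{\mu}{I_j}, \mathcal{U}_{I_j}) \le 2$ that holds for any two probability distributions, together with the hypothesis $\sum_{j \in B}\mu(I_j) \le \eta$, to obtain $\sum_{j \in B}\mu(I_j)\cdot d(\restrict{\mu}{I_j}, \mathcal{U}_{I_j}) \le 2\eta$. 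Summing the two contributions gives $d(\mu, \mu_\mathcal{I}) \le \gamma + 2\eta$.

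There isn't really a hard step here; the only subtlety is correctly writing down the per-interval decomposition of $d(\mu, \mu_\mathcal{I})$ and being careful that the set of ``bad'' intervals in the hypothesis matches the set of intervals I want to treat with the coarse bound. Once the identity is set up, the rest is a one-line split of the sum using linearity and the trivial universal bound of $2$ on $\ell_1$ distance between distributions.
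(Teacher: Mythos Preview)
Your proposal is correct and follows essentially the same approach as the paper's proof: both establish the per-interval identity $\sum_{i\in I_j}|\mu(i)-\mu_{\mathcal{I}}(i)|=\mu(I_j)\,d(\restrict{\mu}{I_j},\mathcal{U}_{I_j})$, then split into ``good'' and ``bad'' intervals, bounding the good part by $\gamma$ and the bad part by $2\eta$ via the trivial bound $d\le 2$. The only cosmetic difference is that you state the identity once up front, whereas the paper rederives it separately in each of the two cases.
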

\begin{proof}
   We split the sum $d(\mu, \mu_\mathcal{I})$ into parts, one for $I_j$ such that
   $d(\restrict{\mu}{I_j}, \mathcal{U}_{I_j}) \le \gamma$, and one for the remaining intervals.

For $I_j$s such that $d(\restrict{\mu}{I_j}, \mathcal{U}_{I_j}) \le \gamma$, we have
\begin{align}
  \sum_{i \in I_j}\left| \mu(i) - \frac{\mu(I_j)}{|I_j|} \right| &= \sum_{i \in I_j} \mu(I_j) \left|
  \restrict{\mu}{I_j}(i) - \frac{1}{|I_j|} \right| = \mu(I_j)d(\restrict{\mu}{I_j},
  \mathcal{U}_{I_j}) \le \gamma\mu(I_j).
  \label{eqn:dist-heavy}
\end{align}

For $I_j$ such that $d(\restrict{\mu}{I_j}, \mathcal{U}_{I_j}) > \gamma$, we have
\begin{align}
  \sum_{i \in I_j} \left| \mu(i) - \frac{\mu(I_j)}{|I_j|} \right| &= \sum_{i \in I_j} \mu(I_j)
  \left| \restrict{\mu}{I_j}(i) - \frac{1}{|I_j|} \right| \le 2\mu(I_j)
  \label{eqn:dist-light}
\end{align}
We know that the sum of $\mu(I_j)$ over all $I_j$ such that $d(\restrict{\mu}{I_j},
\mathcal{U}_{I_j}) \ge \gamma$ is at most $\eta$.  Using Equations \ref{eqn:dist-light} and
\ref{eqn:dist-heavy}, and summing up over all the sets $I_j \in \mathcal{I}$, the lemma follows.
\end{proof}

The good thing about a flattening (for an interval partition of small length) is that it can be efficiently learned. For this we first make a technical definition and note some trivial observations:

\begin{definition}[coarsening]
 Given $\mu$ and $\mathcal{I}$, where $|\mathcal{I}|=\ell$, we define the {\em coarsening} of $\mu$ according to $\mathcal{I}$ the distribution $\hat{\mu}_{\mathcal{I}}$ over $[\ell]$ as by $\hat{\mu}_{\mathcal{I}}(j)=\mu(I_j)$ for all $j\in [\ell]$.
\end{definition}

\begin{observation}\label{obs:easycoarse}
 Given a distribution $\hat{\mu}_{\mathcal{I}}$ over $[\ell]$, define $\mu_{\mathcal{I}}$ over $[n]$ by $\mu(i)=\hat{\mu}_{\mathcal{I}}(j_i)/|I_{j_i}|$, where $j_i$ is the index satisfying $i\in I_{j_i}$. This is a distribution, and for any two distributions $\hat{\mu}_{\mathcal{I}}$ and $\hat{\chi}_{\mathcal{I}}$ we have $d(\mu_{\mathcal{I}},\chi_{\mathcal{I}})=d(\hat{\mu}_{\mathcal{I}},\hat{\chi}_{\mathcal{I}})$. Moreover, if $\hat{\mu}_{\mathcal{I}}$ is a coarsening of a distribution $\mu$ over $[n]$, then $\mu_{\mathcal{I}}$ is the respective flattening of $\mu$.
\end{observation}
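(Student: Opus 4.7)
The plan is to verify each of the three assertions in the observation in turn, all by direct computation from the definitions. The statement is essentially a bookkeeping exercise that the map $\hat{\mu}_{\mathcal{I}}\mapsto \mu_{\mathcal{I}}$ is an isometric embedding (with respect to $\ell_1$) of distributions over $[\ell]$ into distributions over $[n]$ that inverts the coarsening operation, so I do not anticipate a genuine obstacle — the only thing to watch is that the factors of $|I_{j_i}|$ cancel correctly when regrouping sums.

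First, to show $\mu_{\mathcal{I}}$ is a probability distribution, I partition $\sum_{i\in[n]}\mu_{\mathcal{I}}(i)$ according to which interval $I_j$ contains $i$. Inside a fixed $I_j$ the value $\hat{\mu}_{\mathcal{I}}(j)/|I_j|$ is constant, and summing it $|I_j|$ times contributes $\hat{\mu}_{\mathcal{I}}(j)$; the outer sum over $j\in[\ell]$ then gives $1$ because $\hat{\mu}_{\mathcal{I}}$ was assumed to be a distribution over $[\ell]$. Nonnegativity of the coordinates is immediate from the formula.

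Second, for the distance-preservation claim, I expand $d(\mu_{\mathcal{I}},\chi_{\mathcal{I}})=\sum_{i\in[n]}|\mu_{\mathcal{I}}(i)-\chi_{\mathcal{I}}(i)|$ and regroup the sum by intervals as above. Inside $I_j$ every one of the $|I_j|$ summands equals $|\hat{\mu}_{\mathcal{I}}(j)-\hat{\chi}_{\mathcal{I}}(j)|/|I_j|$, so the $|I_j|$ factors cancel and the sum collapses to $\sum_{j\in[\ell]}|\hat{\mu}_{\mathcal{I}}(j)-\hat{\chi}_{\mathcal{I}}(j)|=d(\hat{\mu}_{\mathcal{I}},\hat{\chi}_{\mathcal{I}})$.

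Third, the ``moreover'' clause is simply a comparison of two formulas: if $\hat{\mu}_{\mathcal{I}}(j)=\mu(I_j)$ by the definition of coarsening, then the defining relation $\mu_{\mathcal{I}}(i)=\hat{\mu}_{\mathcal{I}}(j_i)/|I_{j_i}|$ reads $\mu(I_{j_i})/|I_{j_i}|$, which is precisely the definition of the flattening of $\mu$ with respect to $\mathcal{I}$. No further argument is needed, and the observation is established.
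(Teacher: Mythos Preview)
Your proposal is correct and follows exactly the approach the paper intends: the paper's own proof is the single sentence ``All of this follows immediately from the definitions,'' and you have simply made those routine verifications explicit. There is nothing to add.
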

\begin{proof}
 All of this follows immediately from the definitions.
\end{proof}

The following lemma shows how learning can be achieved. We will ultimately use this in conjunction with Lemma \ref{lem:flat-close} as a means to learn a whole distribution through its flattening.

\begin{lemma}
  Given a distribution $\mu$ supported over $[n]$ and an interval partition $\mathcal{I} = (I_1,
  I_2, \ldots, I_\ell)$, using $\tfrac{2(\ell + \log(2/\delta))}{\epsilon^2}$, we can obtain an
  explicit distribution $\mu'_\mathcal{I}$, supported over $[n]$, such that, with probability at
  least $1 - \delta$, $d(\mu_\mathcal{I}, \mu'_\mathcal{I}) \le \epsilon$.
  \label{lem:learn-flat}
\end{lemma}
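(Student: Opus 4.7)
My plan is to reduce learning the flattening $\mu_\mathcal{I}$ over $[n]$ to learning the coarsening $\hat{\mu}_\mathcal{I}$ over the much smaller domain $[\ell]$, and then invoke Lemma \ref{lem:learn-dist} on the coarsening. The key observation is that each unconditional sample from $\mu$, once we record only the index $j$ of the interval $I_j$ containing it, is distributed exactly according to $\hat{\mu}_\mathcal{I}$. So $m$ samples from $\mu$ yield $m$ independent samples from $\hat{\mu}_\mathcal{I}$ for free.

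Concretely, I would take $m = \tfrac{2(\ell + \log(2/\delta))}{\epsilon^2}$ samples from $\mu$, let $m_j$ be the number of samples falling into $I_j$, and set the empirical coarsening to be $\hat{\mu}'_\mathcal{I}(j) = m_j/m$ for each $j \in [\ell]$. I then define $\mu'_\mathcal{I}$ from $\hat{\mu}'_\mathcal{I}$ via the flattening recipe of Observation \ref{obs:easycoarse}, namely $\mu'_\mathcal{I}(i) = \hat{\mu}'_\mathcal{I}(j_i)/|I_{j_i}|$ whenever $i \in I_{j_i}$.

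By Observation \ref{obs:easycoarse}, $d(\mu_\mathcal{I}, \mu'_\mathcal{I}) = d(\hat{\mu}_\mathcal{I}, \hat{\mu}'_\mathcal{I})$, so it suffices to bound the right-hand side. But $\hat{\mu}'_\mathcal{I}$ is exactly the empirical distribution obtained from $m$ i.i.d.\ samples of $\hat{\mu}_\mathcal{I}$ on a domain of size $\ell$. Applying Lemma \ref{lem:learn-dist} to $\hat{\mu}_\mathcal{I}$ (which requires $\tfrac{\ell + \log(2/\delta)}{2\epsilon^2}$ samples, a number our $m$ comfortably exceeds), with probability at least $1 - \delta$ we get $d(\hat{\mu}_\mathcal{I}, \hat{\mu}'_\mathcal{I}) \le \epsilon$, which yields the conclusion.

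Since this is purely a reduction to an already-proved folklore lemma via a distance-preserving identification, I do not anticipate any real obstacle; the only thing to double-check is the trivial observation that sampling from $\mu$ and forgetting all information except the enclosing interval is the same as sampling from the coarsening.
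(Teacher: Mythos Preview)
Your proposal is correct and follows essentially the same approach as the paper: simulate samples from the coarsening $\hat{\mu}_\mathcal{I}$ by sampling from $\mu$ and recording the containing interval, apply Lemma~\ref{lem:learn-dist} on the size-$\ell$ domain, and then lift back to $[n]$ via Observation~\ref{obs:easycoarse}. The paper's proof is slightly terser but the argument is identical.
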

\begin{proof}
  First, note that an unconditional sample from $\hat{\mu}_\mathcal{I}$ can be simulated using one unconditional sample from $\mu$. To obtain it, take the index $i$ sampled from $\mu$, and set $j$ to be the index for which $i\in I_j$. Using Lemma \ref{lem:learn-dist}, we can now obtain a distribution $\hat{\mu}'_\mathcal{I}$, supported over $[\ell]$, such that with
  probability at least $1-\delta$, $d(\hat{\mu}_\mathcal{I}, \hat{\mu}'_\mathcal{I}) \le \epsilon$. To finish, we construct and output $\mu'_{\mathcal{I}}$ as per Observation \ref{obs:easycoarse}.
\end{proof}


\section{Weakly tolerant interval uniformity tests}\label{sec:intervaltests}

To unify our treatment of learning and testing with respect to $L$-decomposable properties to all three models (unconditional, adaptive-condition and non-adaptive-conditional), we first define what it means to test a distribution $\mu$ for uniformity over an interval $I\subseteq [n]$. The following definition is technical in nature, but it is what we need to be used as a building block for our learning and testing algorithms.

\begin{definition}[weakly tolerant interval tester]
 A {\em weakly tolerant interval tester} is an algorithm $\mathbb{T}$ that takes as input a distribution $\mu$ over $[n]$, an interval $I\subseteq [n]$, a maximum size parameter $m$, a minimum weight parameter $\gamma$, an approximation parameter $\epsilon$ and an error parameter $\delta$, and satisfies the following.
 \begin{enumerate}
  \item If $|I|\le m$, $\mu(I)\ge\gamma$, and $\bias(\restrict{\mu}{I}) \le \epsilon/100$, then the algorithm accepts with probability at least $1-\delta$.
  \item If $|I|\le m$, $\mu(I)\ge\gamma$, and $d(\restrict{\mu}{I},\mathcal{U}_I) > \epsilon$, then the algorithm rejects with probability at least $1-\delta$.
 \end{enumerate}
 In all other cases, the algorithm may accept or reject with arbitrary probability.
\end{definition}

For our purposes we will use three weakly tolerant interval testers, one for each model.

First, a tester for uniformity which uses unconditional samples, a
version of which has already appeared implicitly in \cite{GoldreichR00}. We state below
the tester with the best dependence on $n$ and $\epsilon$. We first state it in its original form, where $I$ is the whole of $[n]$, implying that $m=n$ and $\gamma=1$, and $\delta=1/3$.
\begin{lemma}[\cite{Paninski08}]
  For the input $(\mu,[n],n,1,\epsilon,1/3)$, there is a weakly tolerant interval tester
  using $O(\sqrt{n}/\epsilon^2)$ unconditional samples from $\mu$.
  \label{lem:uniformity}
\end{lemma}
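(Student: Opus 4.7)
The plan is to derive the weakly tolerant interval tester by invoking Paninski's uniformity tester. For the specific input $(\mu, [n], n, 1, \epsilon, 1/3)$, the size and weight preconditions $|I| \leq m$ and $\mu(I) \geq \gamma$ specialize to $n \leq n$ and $\mu([n]) = 1 \geq 1$, both of which hold automatically. So the only work is in providing an algorithm that, using $O(\sqrt{n}/\epsilon^2)$ unconditional samples, accepts with probability at least $2/3$ whenever $\bias(\mu) \leq \epsilon/100$ and rejects with probability at least $2/3$ whenever $d(\mu, \mathcal{U}_{[n]}) > \epsilon$.

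Paninski's tester uses exactly $O(\sqrt{n}/\epsilon^2)$ samples to distinguish $\mu = \mathcal{U}_{[n]}$ from $d(\mu, \mathcal{U}_{[n]}) > \epsilon$ with success probability at least $2/3$, which immediately gives condition~2 of the weakly tolerant interval tester. For condition~1, I would argue that the tester also has the required tolerance: by Observation~\ref{obs:norms}, $\bias(\mu) \leq \epsilon/100$ implies $\|\mu - \mathcal{U}_{[n]}\|_\infty \leq \epsilon/(100 n)$, and writing $\mu(i) = 1/n + \delta_i$ with $\sum_i \delta_i = 0$ one gets
\[
\sum_{i \in [n]} \mu(i)^{2} \;=\; \frac{1}{n} + \sum_{i \in [n]} \delta_i^{2} \;\leq\; \frac{1}{n} + n \left(\frac{\epsilon}{100 n}\right)^{2} \;=\; \frac{1}{n} + \frac{\epsilon^{2}}{10^{4}\, n}.
\]

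Paninski's tester is built around a collision-type statistic whose expectation is a linear function of $\sum_i \mu(i)^2$. The decision threshold is placed to separate the baseline value $1/n$ (uniform case) from the value $1/n + \epsilon^2/n$ that must be attained whenever $d(\mu, \mathcal{U}_{[n]}) > \epsilon$; this lower bound follows from Cauchy--Schwarz applied to $\sum_i |\mu(i) - 1/n|$. Since the tolerant regime yields $\sum_i \mu(i)^2 \leq 1/n + \epsilon^2/(10^{4} n)$, which lies well below the threshold by a factor of $10^{4}$, the same variance analysis that handles the exactly-uniform case yields acceptance with probability at least $2/3$. If a cleaner separation is desired, one can shift the threshold to $1/n + \epsilon^{2}/(2 n)$, which creates a symmetric gap on both sides and affects only hidden constants.

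The main obstacle is that Paninski's statement is commonly quoted only for the one-sided setting $\mu = \mathcal{U}_{[n]}$ versus $d(\mu, \mathcal{U}_{[n]}) > \epsilon$, not in a two-sided tolerant form. Resolving this requires opening up the proof to observe that the collision statistic depends on $\mu$ only through $\sum_i \mu(i)^2$, so a small $\ell_\infty$ perturbation of the uniform distribution is indistinguishable from it at this sample complexity. Once this observation is in place, the verification is routine and yields the claimed weakly tolerant interval tester.
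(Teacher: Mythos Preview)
The paper does not prove this lemma: it simply cites \cite{Paninski08} and moves on, treating the weakly tolerant version as a known fact. Your proposal is therefore not so much a different route as a fleshing-out of something the paper leaves entirely implicit. Your sketch is correct: Paninski's test statistic (and similarly the Goldreich--Ron collision count) depends on $\mu$ only through $\sum_i \mu(i)^2$, and the bound $\sum_i \mu(i)^2 \le 1/n + \epsilon^2/(10^4 n)$ under $\bias(\mu)\le\epsilon/100$ places the expectation comfortably on the ``accept'' side of the threshold, so the same concentration argument goes through. The paper makes an analogous remark explicitly only for the adaptive conditional tester (Lemma~\ref{lem:adaptive-uniformity-interval}), where it says ``its original statement does not have the weakly tolerance (acceptance for small bias) guarantee, but it is easy to see that the proof there works for the stronger assertion''; the same reasoning is tacitly assumed here for Paninski's tester, and your argument supplies it.
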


The needed adaptation to our purpose is straightforward.
\begin{lemma}\label{lem:uniformity-interval}
 For the input $(\mu,I,m,\gamma,\epsilon,\delta)$, there is a weakly tolerant interval tester
 which uses $O(\sqrt{m}\log(1/\delta)/\gamma\epsilon^2)$ unconditional samples from $\mu$.
\end{lemma}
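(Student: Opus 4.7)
The plan is to reduce the weakly tolerant interval tester on input $(\mu,I,m,\gamma,\epsilon,\delta)$ to the base uniformity tester of Lemma \ref{lem:uniformity} via two standard reductions: rejection sampling (to simulate access to $\restrict{\mu}{I}$) and success amplification.

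First, observe that an unconditional sample from $\mu$ falls in $I$ with probability $\mu(I)\ge\gamma$, and conditioned on falling in $I$ it is distributed exactly as $\restrict{\mu}{I}$. So drawing $C\cdot s/\gamma$ unconditional samples from $\mu$ yields at least $s$ independent samples from $\restrict{\mu}{I}$ with probability $1-\exp(-\Omega(s))$ by Lemma \ref{lem:chernoff}, for a suitable absolute constant $C$. I then apply Paninski's tester from Lemma \ref{lem:uniformity}, instantiated on the interval $I$ in place of $[n]$, using $s=O(\sqrt{|I|}/\epsilon^2)=O(\sqrt{m}/\epsilon^2)$ simulated samples; this distinguishes $\restrict{\mu}{I}=\mathcal{U}_I$ from $d(\restrict{\mu}{I},\mathcal{U}_I)>\epsilon$ with constant error. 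A standard majority-vote over $O(\log(1/\delta))$ independent repetitions amplifies the error probability to $\delta$ (and simultaneously absorbs the Chernoff failure term for the rejection-sampling step). Summing across the repetitions yields total unconditional sample complexity $O(\sqrt{m}\log(1/\delta)/(\gamma\epsilon^2))$.

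The one subtle point is the \emph{accept} side: the definition requires accepting whenever $\bias(\restrict{\mu}{I})\le\epsilon/100$, whereas Paninski's tester is formally stated only for perfectly uniform distributions. The remedy is that by Observation \ref{obs:norms} the bias bound implies $d(\restrict{\mu}{I},\mathcal{U}_I)\le\epsilon/100$, and the collision-statistic analysis underlying the $\sqrt{n}/\epsilon^2$ tester has a clear separation between distributions at $\ell_1$ distance at most $\epsilon/100$ from uniform and those at distance greater than $\epsilon$ (the expected value of the statistic differs by a constant factor, well beyond the concentration radius). Choosing the decision threshold of the statistic in the middle of this gap gives a weakly tolerant test in the sense of the definition, without changing the asymptotic sample complexity. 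This tolerance adjustment is the only place that requires a careful look at the proof of Lemma \ref{lem:uniformity}; the rest of the argument is mechanical.
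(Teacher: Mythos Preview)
Your proof is correct and follows essentially the same route as the paper: simulate samples from $\restrict{\mu}{I}$ by rejection sampling, apply the base tester of Lemma~\ref{lem:uniformity}, and amplify via a majority vote over $O(\log(1/\delta))$ repetitions. The paper handles your ``subtle point'' about tolerance simply by already stating Lemma~\ref{lem:uniformity} as a weakly tolerant tester (so the accept-side guarantee under $\bias\le\epsilon/100$ is folded into the cited result rather than argued separately), but otherwise the two arguments are identical.
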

\begin{proof}
 To adapt the tester of Lemma \ref{lem:uniformity} to the general $m$ and $\gamma$, we just take samples according to $\mu$ and keep from them those samples lie in $I$. This simulates samples from $\restrict{\mu}{I}$, over which we employ the original tester. This gives a tester using $O(\sqrt{m}/\gamma\epsilon^2)$ unconditional samples and providing an error parameter of, say, $\delta=2/5$ (the extra error is due to the probability of not getting enough samples from $I$ even when $\mu(I)\ge\gamma$). To move to a general $\delta$, we repeat this $O(1/\delta)$ times and take the majority vote.
\end{proof}

Next, a tester that uses adaptive conditional samples. For this we use the following tester from \cite{CanonneRS15} (see also \cite{ChakrabortyFGM13}). Its original statement does not have the weakly tolerance (acceptance for small bias) guarantee, but it is easy to see that the proof there works for the stronger assertion. This time we skip the question of how to adapt the original algorithm from $I=[n]$ and $\delta=2/3$ to the general parameters here. This is since $\gamma$ does not matter (due to using adaptive conditional samples), the query complexity is independent of the domain size to begin with, and the move to a general $\delta>0$ is by standard amplification.

\begin{lemma}
  [\cite{CanonneRS15}, see also \cite{ChakrabortyFGM13}]
  For the input $(\mu,I,m,\gamma,\epsilon,\delta)$, there is a weakly tolerant interval tester
  that adaptively takes $\log(1/\delta)\poly(\log(1/\epsilon))/\epsilon^2$ conditional
  samples from $\mu$.
  \label{lem:adaptive-uniformity-interval}
\end{lemma}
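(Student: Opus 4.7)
My plan is to reduce directly to the adaptive conditional uniformity tester of \cite{CanonneRS15} applied to the conditional distribution $\restrict{\mu}{I}$. The first observation is that any adaptive conditional algorithm on $\restrict{\mu}{I}$ that issues query sets $A \subseteq I$ can be faithfully simulated by issuing the same $A$ as a conditional query on $\mu$ itself: the returned sample is distributed as $\mu$ conditioned on $A$, which coincides with $\restrict{\mu}{I}$ conditioned on $A$. Consequently the parameter $\gamma$ plays no role — unlike the unconditional setting of Lemma~\ref{lem:uniformity-interval}, there is no rejection-sampling cost for landing in $I$. Also, since the query complexity of the original tester is independent of the domain size, the maximum-size parameter $m$ does not enter either.

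Next, I would argue that the tester of \cite{CanonneRS15} already satisfies the weakly tolerant acceptance condition without any modification of the algorithm. The tester is built around comparing point probabilities via conditional queries on small sets, and its core statistics are functions of ratios $\mu(i)/\mu(j)$ between pairs of sampled indices; it rejects only when such ratios deviate from $1$ by amounts on the order of $\epsilon$. Since $\bias(\restrict{\mu}{I}) \le \epsilon/100$ guarantees that every such ratio lies in $[1,\,1+\epsilon/100]$, each threshold in the analysis of \cite{CanonneRS15} is perturbed by $O(\epsilon/100)$, which is much smaller than the $\Theta(\epsilon)$ slack that the original proof already uses to show acceptance of the exactly uniform distribution. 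Hence the same algorithm still accepts any distribution of bias at most $\epsilon/100$ with probability at least $2/3$.

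Finally, to move from the base error probability $2/3$ to arbitrary $\delta$, I would run the base tester $\Theta(\log(1/\delta))$ times independently and output the majority vote; Lemma~\ref{lem:chernoff} gives correctness with probability at least $1-\delta$ at a multiplicative $\log(1/\delta)$ overhead. Composing the three ingredients (restriction to $I$ via conditional queries, weak tolerance of the base test, standard amplification) yields the claimed bound of $\log(1/\delta)\poly(\log(1/\epsilon))/\epsilon^2$ adaptive conditional samples.

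The main obstacle I anticipate is the second step: formally verifying that the proof in \cite{CanonneRS15} carries over as-is to the ``small bias implies acceptance'' guarantee. This requires opening up that proof and checking that every inequality used to establish acceptance of the uniform distribution still holds under a multiplicative $(1+\epsilon/100)$ perturbation of each point-probability; the rest of the argument is routine simulation and amplification.
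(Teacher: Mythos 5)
Your proposal takes essentially the same route as the paper: the paper likewise notes that $\gamma$ and the domain size are immaterial in the adaptive conditional setting (so the tester on $\restrict{\mu}{I}$ is simulated directly by conditioning $\mu$ on subsets of $I$), that the original proof in \cite{CanonneRS15} already yields the weak-tolerance acceptance guarantee under small bias, and that a general $\delta$ is obtained by standard majority-vote amplification. The paper leaves the weak-tolerance verification at the level of ``it is easy to see,'' which matches the level of rigor of your second step; your ratio-perturbation heuristic is a reasonable account of why that verification should go through.
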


Finally, a tester that uses non-adaptive conditional samples. For this to work, it is also very important that the queries do not depend on $I$ as well (but only on $n$ and $\gamma$). We just state here the lemma, the algorithm itself is presented and analyzed in Section \ref{sec:non-adaptive}.

\begin{lemma}
  For the input $(\mu,I,m,\gamma,\epsilon,\delta)$, there is a weakly tolerant interval tester
  that non-adaptively takes $\poly(\log n,1/\epsilon)\log(1/\delta)/\gamma$ conditional
  samples from $\mu$, in a manner independent of the interval $I$.
  \label{lem:nonadaptive-uniformity-interval}
\end{lemma}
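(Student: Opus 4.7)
The plan is to reduce to the non-adaptive conditional uniformity tester of~\cite{ChakrabortyFGM13}, which on a known domain of size $N$ uses $\poly(\log N,1/\epsilon)$ conditional queries, and whose queries are drawn from distributions depending only on the domain and on $\epsilon$. As with the adaptive tester of Lemma~\ref{lem:adaptive-uniformity-interval}, inspecting that proof yields weak tolerance (acceptance when the bias is at most $\epsilon/100$, rather than only under exact uniformity), and a general $\delta$ is reachable by standard amplification. The only obstacle is that this underlying tester is designed to query the oracle on subsets of~$I$, whereas in our setting $I$ is revealed only at post-processing time.

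The first step is to pre-generate a family of conditional queries on $[n]$. For each scale $s\in\{1,2,4,\ldots,n\}$, I would choose many independent random subsets of $[n]$ by including each element independently with probability $s/n$, and query the conditional oracle once on each. The total number of subsets across all scales will be $\poly(\log n,1/\epsilon)\cdot\log(1/\delta)/\gamma$; the factor $\poly(\log n,1/\epsilon)$ matches the underlying uniformity tester's complexity (on a domain of size at most $n$), while the factor $\log(1/\delta)/\gamma$ accounts for the rejection sampling step below.

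The second step, at post-processing time, is to filter each sample by whether it landed in $I$. For any fixed $S\subseteq[n]$, a sample from $\mu\restriction_{S}$ that falls in~$I$ is distributed exactly as a sample from $\mu\restriction_{S\cap I}$. Since a random subset of $[n]$ obtained by $p$-independent inclusion has its intersection with $I$ distributed as a $p$-independent random subset of $I$, the logarithmic scale sweep on $[n]$ covers — after intersecting with $I$ — the scales that the underlying tester of~\cite{ChakrabortyFGM13} would use natively on the domain $I$ (which has size at most $m\le n$). I would then feed the filtered in-$I$ samples, labeled by which $S$ produced them, into the underlying tester, and output its verdict.

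The main obstacle is the subtle fact that conditioning on the event ``the sample lies in~$I$'' re-weights the marginal of $S\cap I$ by a factor proportional to $\mu(S\cap I)/\mu(S)$, so that $S\cap I$ no longer has the pristine $p$-independent distribution that the underlying tester presumes. Handling this requires either using, at each scale, enough trials to estimate $\mu(S\cap I)/\mu(S)$ empirically and re-weighting (or discarding skewed $S$), or appealing to the structural features of the~\cite{ChakrabortyFGM13} tester to show that the re-weighted distribution still yields the needed distinguishing statistics; in either case the amortized in-$I$ acceptance probability per query is at least a constant times $\gamma=\mu(I)/O(1)$, so a blow-up of $O(\log(1/\delta)/\gamma)$ suffices. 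Once the filtered samples are shown to simulate the underlying tester faithfully, both completeness (weak tolerance under small bias) and soundness (rejection when $d(\mu\restriction_I,\mathcal{U}_I)>\epsilon$) follow directly from the corresponding guarantees of~\cite{ChakrabortyFGM13}, and the overall query complexity is $\poly(\log n,1/\epsilon)\log(1/\delta)/\gamma$, with queries depending only on $n$, $\epsilon$, $\gamma$, and $\delta$, as required.
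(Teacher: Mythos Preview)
Your high-level idea---sample random subsets of $[n]$ at logarithmically many scales, condition on them non-adaptively, and at post-processing time use only the samples that landed in $I$---is exactly the right one and is what the paper does. But two design choices in your plan create the very obstacle you struggle with, and the paper avoids both.

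First, you propose choosing \emph{many} independent subsets at each scale and querying each \emph{once}. The underlying non-adaptive uniformity test (whether used as a black box from~\cite{ChakrabortyFGM13} or re-implemented) relies on taking \emph{multiple} samples from the \emph{same} random subset, for collision detection and for learning the conditional distribution on that subset. With one fresh $S$ per query you cannot simulate these multi-sample primitives at all; labeling samples ``by which $S$ produced them'' does not help when each label occurs once.

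Second, the re-weighting issue you flag---that conditioning on ``the sample fell in $I$'' biases the distribution of $S\cap I$---is a genuine problem for your scheme, but it is entirely self-inflicted. The paper instead fixes \emph{one} subset $U_k$ per scale and queries $\restrict{\mu}{U_k}$ many times. Then $I\cap U_k$ is, unconditionally, a $p_k$-independent random subset of $I$; and the samples from $\restrict{\mu}{U_k}$ that happen to fall in $I$ are i.i.d.\ draws from $\restrict{\mu}{I\cap U_k}$. No re-weighting arises. The only cost is that you must take roughly a $1/\gamma$ factor more samples from $\restrict{\mu}{U_k}$ so that enough of them land in $I$ (this is where the $1/\gamma$ in the bound comes from, via $\mu(I\cap U_k)/\mu(U_k)\gtrsim\gamma$ with high probability).

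Finally, the paper does not reduce to~\cite{ChakrabortyFGM13} as a black box. It writes out the tester explicitly: unconditional samples handle the case $|I|\le\poly(\log n)$; otherwise, for each scale $k$ it performs collision detection within $I\cap U_k$, and at the single scale where $|I\cap U_k|\approx\poly(\log n)$ it learns $\restrict{\mu}{I\cap U_k}$ in $\ell_\infty$ and compares to uniform. Completeness and soundness are then proved directly (the latter via a bucketing lemma adapted from~\cite{ChakrabortyFGM13}). Your proposal would be repaired by switching to this ``one subset per scale, many queries per subset'' design; once you do that, the obstacle you identify disappears and the analysis goes through.
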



\section{Assessing an interval partition}\label{sec:assessment}

Through either Lemma \ref{lem:pulled-partition} or Lemma \ref{lem:pulled-rel-partition} we know how to construct a fine partition, and then through either Lemma \ref{lem:small-wt-non-uniform} or Lemma \ref{lem:small-wt-non-uniform-rel} respectively we know that if $\mu$ is decomposable, then most of the weight is concentrated on intervals with a small bias. However, eventually we would like a test that works for decomposable and non-decomposable distributions alike. For this we need a way to asses an interval partition as to whether it is indeed suitable for learning a distribution. This is done through a weighted sampling of intervals, for which we employ a weakly tolerant tester, The following is the formal description, given as Algorithm \ref{alg:assess}.

\begin{algorithm}
  [htpb]
  \caption{Assessing a partition}
  \label{alg:assess}

  \KwInput{A distribution $\mu$ supported over $[n]$, parameters $c,r$, an interval partition $\mathcal{I}$ satisfying $|\mathcal{I}|\le r$, parameters $\epsilon, \delta > 0$, a weakly tolerant interval uniformity tester $\mathbb{T}$ taking input values $(\mu,I,m,\gamma,\epsilon,\delta)$.}

\For {$s = 20\log(1/\delta)/\epsilon$ times}
{
	Take an unconditional sample from $\mu$ and let  $I \in \mathcal{I}$ be the interval that contains it
	\label{step:assess-single-sample}
	
    Use the tester $\mathbb{T}$ with input values $(\mu,I,n/c,\epsilon/r,\epsilon,\delta/2s)$
    \label{step:assess-single-test}

    \lIf {test rejects}{add $I$ to $\mathcal{B}$}
    \label{step:asses-classify}
}
  \lIf*{$|\mathcal{B}| > 4\epsilon s$}{$\reject$} \lElse*{$\accept$}
  \label{step:assess-threshhold}

\end{algorithm}

To analyze it, first, for a fine interval partition, we bound the total weight of intervals where the weakly tolerant tester is not guaranteed a small error probability; recall that $\mathbb{T}$ as used in Step \ref{step:assess-single-test} guarantees a correct output only for an interval $I$ satisfying $\mu(I)\geq \epsilon/r$ and $|I|\leq n/r$.

\begin{observation}\label{obs:dubious-intervals}
 Define $\mathcal{N}_{\mathcal{I}}=\{I\in\mathcal{I}:|I|>n/r~\mathrm{or}~\mu(I)<\epsilon/r\}$. If $\mathcal{I}$ is $(\eta,\gamma)$-fine, where $c\eta+\gamma\leq\epsilon$, then $\mu(\bigcup_{I\in\mathcal{N}_{\mathcal{I}}}I)\le 2\epsilon$.
\end{observation}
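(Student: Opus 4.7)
The plan is to split $\mathcal{N}_\mathcal{I}$ according to which of its two defining conditions is responsible for each interval, and bound the $\mu$-mass of each part separately. Let $\mathcal{L} = \{I \in \mathcal{I} : \mu(I) < \epsilon/r\}$ be the ``light'' intervals, and let $\mathcal{S} = \mathcal{N}_\mathcal{I} \setminus \mathcal{L}$; so $\mathcal{S}$ consists of the ``long'' intervals (those exceeding the size threshold), all of weight at least $\epsilon/r$.

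For $\mathcal{L}$: since $|\mathcal{I}| \le r$, I have $|\mathcal{L}| \le r$, and each interval in $\mathcal{L}$ contributes weight strictly less than $\epsilon/r$. Summing gives $\mu\bigl(\bigcup_{I \in \mathcal{L}} I\bigr) < r \cdot \epsilon/r = \epsilon$.

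For $\mathcal{S}$: the intervals here are pairwise disjoint sub-intervals of $[n]$, each of length greater than the stated size threshold (I will read the ``$n/r$'' in the statement as the $n/c$ matching the size parameter $m = n/c$ fed to $\mathbb{T}$ in Step~\ref{step:assess-single-test}, since that is what governs when the weakly tolerant tester's guarantee can fail). Hence the length-counting bound $|\mathcal{S}| < c$ holds. Moreover, every $I \in \mathcal{S}$ satisfies $|I| > 1$, so by Definition~\ref{defn:gamma-eta} either $\mu(I) \le \eta$ or $I \in \mathcal{H}_{\mathcal{I}}$. Splitting $\mathcal{S}$ accordingly, the intervals in $\mathcal{S} \setminus \mathcal{H}_{\mathcal{I}}$ contribute at most $c\eta$ in total, and those in $\mathcal{S} \cap \mathcal{H}_{\mathcal{I}}$ lie inside $\mathcal{H}_{\mathcal{I}}$ whose total weight is bounded by $\gamma$. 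Thus $\mu\bigl(\bigcup_{I \in \mathcal{S}} I\bigr) \le c\eta + \gamma$.

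Combining, $\mu\bigl(\bigcup_{I \in \mathcal{N}_{\mathcal{I}}} I\bigr) \le \epsilon + c\eta + \gamma \le 2\epsilon$ by the hypothesis $c\eta + \gamma \le \epsilon$. There is no real obstacle in the argument — it is clean bookkeeping of two sources of ``bad'' intervals. The only delicate step is keeping straight which interval-count bound ($r$ for the light side, $c$ for the long side) applies where, which is exactly why it is worth splitting $\mathcal{N}_{\mathcal{I}}$ into $\mathcal{L}$ and $\mathcal{S}$ up front before invoking $(\eta,\gamma)$-fineness on the long side only.
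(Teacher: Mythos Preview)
Your proof is correct and follows essentially the same approach as the paper: both arguments split $\mathcal{N}_{\mathcal{I}}$ into the light intervals (total weight at most $\epsilon$ by counting against $r$), the long non-violating intervals (at most $c$ of them, each of weight at most $\eta$), and the violating intervals $\mathcal{H}_{\mathcal{I}}$ (total weight at most $\gamma$), and then sum. You also correctly caught the typo in the statement ($n/r$ should be $n/c$, matching the size parameter passed to $\mathbb{T}$ in Step~\ref{step:assess-single-test}), which the paper's own proof confirms.
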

\begin{proof}
 Intervals in $\mathcal{N}_{\mathcal{I}}$ must fall into at least one of the following categories.
 \begin{itemize}
  \item Intervals in $\mathcal{H}_{\mathcal{I}}$, whose total weight is bounded by $\gamma$ by the definition of a fine partition.
  \item Intervals whose weight is less than $\epsilon/r$. Since there are at most $r$ such intervals (since $|\mathcal{I}|\leq r$) their total weight is bounded by $\epsilon$.
  \item Intervals whose size is more than $n/c$ and are not in $\mathcal{H}_{\mathcal{I}}$. Every such interval is of weight bounded by $\eta$ (by the definition of a fine partition) and clearly there are no more than $c$ of those, giving a total weight of $c\eta$.
 \end{itemize}
Summing these up concludes the proof.
\end{proof}

The following ``completeness'' lemma states that the typical case for a fine partition of a decomposable distribution, i.e. the case where most intervals exhibit a small bias, is correctly detected.

\begin{lemma}\label{lem:assess-complete}
 Suppose that $\mathcal{I}$ is $(\eta,\gamma)$-fine, where $c\eta+\gamma\leq\epsilon$.
 Define $\mathcal{G}_{\mathcal{I}}=\{i:\mathcal{I}:\bias(\restrict{\mu}{I})\le \epsilon/100\}$. If $\mu(\bigcup_{I\in\mathcal{G}_{\mathcal{I}}})\ge 1-\epsilon$, then Algorithm \ref{alg:assess} accepts with probability at least $1-\delta$.
\end{lemma}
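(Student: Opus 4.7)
The plan is to show that when most of the weight lies on low-bias intervals, the count $|\mathcal{B}|$ stays below the threshold $4\epsilon s$ with high probability, so that Step \ref{step:assess-threshhold} accepts. The analysis splits the intervals of $\mathcal{I}$ into two groups: those in $\mathcal{G}_{\mathcal{I}}\setminus\mathcal{N}_{\mathcal{I}}$, on which the preconditions of the weakly tolerant tester $\mathbb{T}$ are met and it is guaranteed to accept with high probability, and the remaining intervals, on which $\mathbb{T}$'s output may be arbitrary.

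First I would combine Observation \ref{obs:dubious-intervals}, which gives $\mu\bigl(\bigcup_{I\in\mathcal{N}_{\mathcal{I}}}I\bigr)\le 2\epsilon$, with the hypothesis $\mu\bigl(\bigcup_{I\in\mathcal{G}_{\mathcal{I}}}I\bigr)\ge 1-\epsilon$. A union bound then yields that the total $\mu$-mass of intervals outside $\mathcal{G}_{\mathcal{I}}\setminus\mathcal{N}_{\mathcal{I}}$ is at most $3\epsilon$. Hence in each iteration, the probability that the sample drawn in Step \ref{step:assess-single-sample} lies in a ``clearly good'' interval $I\in\mathcal{G}_{\mathcal{I}}\setminus\mathcal{N}_{\mathcal{I}}$ is at least $1-3\epsilon$. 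For every such $I$ the conditions $|I|\le n/c$, $\mu(I)\ge\epsilon/r$ and $\bias(\restrict{\mu}{I})\le\epsilon/100$ all hold, so the completeness clause of the weakly tolerant tester applied in Step \ref{step:assess-single-test} ensures $\mathbb{T}$ accepts with probability at least $1-\delta/(2s)$.

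Next I would introduce two bad events: $E_1$ that some iteration whose sample lands in $\mathcal{G}_{\mathcal{I}}\setminus\mathcal{N}_{\mathcal{I}}$ is nevertheless rejected by $\mathbb{T}$, and $E_2$ that more than $4\epsilon s$ iterations have their sample land outside $\mathcal{G}_{\mathcal{I}}\setminus\mathcal{N}_{\mathcal{I}}$. A union bound over the $s$ iterations gives $\Pr[E_1]\le s\cdot\delta/(2s)=\delta/2$. The count controlling $E_2$ is dominated by a binomial with mean at most $3\epsilon s$, so the Chernoff bound of Lemma \ref{lem:chernoff} gives $\Pr[E_2]\le \exp(-\Omega(\epsilon s))\le \delta/2$ for the chosen $s=20\log(1/\delta)/\epsilon$. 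On the complement of $E_1\cup E_2$, every interval added to $\mathcal{B}$ must have come from an iteration whose sample landed outside $\mathcal{G}_{\mathcal{I}}\setminus\mathcal{N}_{\mathcal{I}}$, so $|\mathcal{B}|\le 4\epsilon s$ and the algorithm accepts.

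The main obstacle is essentially bookkeeping: verifying that the constant $20$ in $s=20\log(1/\delta)/\epsilon$ is large enough that the Chernoff deviation from mean $3\epsilon s$ to $4\epsilon s$ has probability at most $\delta/2$, while simultaneously the per-call tester error budget $\delta/(2s)$ leaves enough room so that the union bound over iterations keeps the tester-side error also at $\delta/2$. Once these two $\delta/2$ contributions are combined, the desired $1-\delta$ acceptance probability follows.
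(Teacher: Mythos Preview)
Your proposal is correct and follows essentially the same argument as the paper: both combine Observation~\ref{obs:dubious-intervals} with the hypothesis to get $\mu\bigl(\bigcup_{I\in\mathcal{G}_{\mathcal{I}}\setminus\mathcal{N}_{\mathcal{I}}}I\bigr)\ge 1-3\epsilon$, then use a Chernoff bound (Lemma~\ref{lem:chernoff}) to control the number of samples falling outside this set and a union bound over the $s$ calls to $\mathbb{T}$ to control the tester errors, each contributing at most $\delta/2$. The only difference is cosmetic: the paper states the two events inline rather than naming them $E_1$ and $E_2$.
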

\begin{proof}
 Note by Observation \ref{obs:dubious-intervals} that the total weight of $\mathcal{G}_{\mathcal{I}}\setminus\mathcal{N}_{\mathcal{I}}$ is at least $1-3\epsilon$. By the Chernoff bound of Lemma \ref{lem:chernoff}, with probability at least $1-\delta/2$ all but at most $4\epsilon s$ of the intervals drawn in Step \ref{step:assess-single-sample} fall into this set.
 
 Finally, note that if $I$ as drawn in Step \ref{step:assess-single-sample} belongs to this set, then with probability at least $1-\delta/2s$ the invocation of $\mathbb{T}$ in Step \ref{step:assess-single-test} will accept it, so by a union bound with probability at least $1-\delta/2$ all sampled intervals from this set will be accepted. All events occur together and make the algorithm accept with probability at least $1-\delta$, concluding the proof.
\end{proof}

The following ``soundness'' lemma states that if too much weight is concentrated on intervals where $\mu$ is far from uniform in the $\ell_1$ distance, then the algorithm rejects. Later we will show that this is the only situation where $\mu$ cannot be easily learned through its flattening according to $\mathcal{I}$.

\begin{lemma}\label{lem:assess-sound}
 Suppose that $\mathcal{I}$ is $(\eta,\gamma)$-fine, where $c\eta+\gamma\leq\epsilon$.
 Define $\mathcal{F}_{\mathcal{I}}=\{i:\mathcal{I}:d(\restrict{\mu}{I},\mathcal{U}_I) > \epsilon\}$. If $\mu(\bigcup_{I\in\mathcal{F}_{\mathcal{I}}})\ge 7\epsilon$, then Algorithm \ref{alg:assess} rejects with probability at least $1-\delta$.
\end{lemma}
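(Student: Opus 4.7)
The plan is the natural dual of Lemma \ref{lem:assess-complete}: whenever an iteration samples an interval on which $\mu$ is genuinely far from uniform \emph{and} on which $\mathbb{T}$'s guarantee applies, that iteration is (with high probability) forced to contribute to $\mathcal{B}$, and I will show that enough such iterations occur to push $|\mathcal{B}|$ past the threshold $4\epsilon s$.

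First, I would apply Observation \ref{obs:dubious-intervals} to bound $\mu(\bigcup_{I\in\mathcal{N}_{\mathcal{I}}} I)\le 2\epsilon$. Subtracting from the hypothesis $\mu(\bigcup_{I\in\mathcal{F}_{\mathcal{I}}} I)\ge 7\epsilon$, the set $\mathcal{F}_{\mathcal{I}}\setminus\mathcal{N}_{\mathcal{I}}$ of intervals on which $\mathbb{T}$ is applicable ($|I|\le n/c$ and $\mu(I)\ge \epsilon/r$) and which are simultaneously $\epsilon$-far from uniform still has $\mu$-weight at least $5\epsilon$. By the definition of a weakly tolerant interval tester, the call to $\mathbb{T}$ in Step \ref{step:assess-single-test} rejects any such $I$ with probability at least $1-\delta/2s$, so that $I$ is then added to $\mathcal{B}$.

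Let $X$ count the iterations whose sample in Step \ref{step:assess-single-sample} lands in some $I\in\mathcal{F}_{\mathcal{I}}\setminus\mathcal{N}_{\mathcal{I}}$, so $E[X]\ge 5\epsilon s$. A multiplicative Chernoff bound (Lemma \ref{lem:chernoff}) with deviation parameter $1/5$ gives
\[
\Pr[X\le 4\epsilon s]\le \exp(-E[X]/50)\le \exp(-\epsilon s/10)=\delta^{2}\le \delta/2,
\]
using the definition $s=20\log(1/\delta)/\epsilon$ (and assuming $\delta\le 1/2$, the typical regime for testers). Separately, a union bound over the (at most) $s$ iterations in which the sample lies in $\mathcal{F}_{\mathcal{I}}\setminus\mathcal{N}_{\mathcal{I}}$ shows that every one of these invocations of $\mathbb{T}$ indeed rejects, except with probability at most $s\cdot\delta/2s=\delta/2$.

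Combining the two failure budgets by a union bound, with probability at least $1-\delta$ we simultaneously have $X>4\epsilon s$ and every ``good'' iteration adding its interval to $\mathcal{B}$; hence $|\mathcal{B}|\ge X>4\epsilon s$ and the algorithm rejects in Step \ref{step:assess-threshhold}. The only substantive step is the Chernoff bookkeeping — making sure that the multiplicative slack between the expected count $5\epsilon s$ and the threshold $4\epsilon s$ is wide enough to give a failure probability of at most $\delta/2$ with the particular value of $s$ fixed by the algorithm; everything else is a routine union bound and the direct invocation of the $\mathbb{T}$ specification.
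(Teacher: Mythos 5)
Your argument matches the paper's proof essentially step for step: apply Observation \ref{obs:dubious-intervals} to pass from $\mathcal{F}_{\mathcal{I}}$ to $\mathcal{F}_{\mathcal{I}}\setminus\mathcal{N}_{\mathcal{I}}$ with weight at least $5\epsilon$, use the lower-tail Chernoff bound to guarantee more than $4\epsilon s$ iterations land there, and union-bound the $\delta/2s$ failure probabilities of $\mathbb{T}$ over the (at most $s$) such iterations. You work out the Chernoff arithmetic more explicitly than the paper does (deviation $1/5$, exponent $E[X]/50$, and the resulting $\delta^2 \le \delta/2$ for $\delta \le 1/2$), and you are a bit more careful about the strict inequality $|\mathcal{B}| > 4\epsilon s$ required by Step \ref{step:assess-threshhold}; both of these are welcome refinements, not deviations.
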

\begin{proof}
 Note by Observation \ref{obs:dubious-intervals} that the total weight of $\mathcal{F}_{\mathcal{I}}\setminus\mathcal{N}_{\mathcal{I}}$ is at least $5\epsilon$. By the Chernoff bound of Lemma \ref{lem:chernoff}, with probability at least $1-\delta/2$ at least $4\epsilon s$ of the intervals drawn in Step \ref{step:assess-single-sample} fall into this set.
 
 Finally, note that if $I$ as drawn in Step \ref{step:assess-single-sample} belongs to this set, then with with probability at least $1-\delta/2s$ the invocation of $\mathbb{T}$ in Step \ref{step:assess-single-test} will reject it, so by a union bound with probability at least $1-\delta/2$ all sampled intervals from this set will be rejected. All events occur together and make the algorithm reject with probability at least $1-\delta$, concluding the proof.
\end{proof}

Finally, we present the query complexity of the algorithm. It is presented as generally quadratic in $\log(1/\delta)$, but this can be made linear easily by first using the algorithm with $\delta=1/3$, and then repeating it $O(1/\delta)$ times and taking the majority vote. When we use this lemma later on, both $r$ and $c$ will be linear in the decomposability parameter $L$ for a fixed $\epsilon$, and $\delta$ will be a fixed constant.

\begin{lemma}\label{lem:assess-complexity}
 Algorithm \ref{alg:assess} requires $O(q\log(1/\delta)/\epsilon)$ many samples, where $q=q(n/c,\epsilon/r,\epsilon,\delta/2s)$ is the number of samples that the invocation of $\mathbb{T}$ in Step \ref{step:assess-single-test} requires.
 
 In particular, Algorithm \ref{alg:assess} can be implemented either as an unconditional sampling algorithm taking $r\sqrt{n/c}\log^2(1/\delta)/\poly(\epsilon)$ many samples, an adaptive conditional sampling algorithm taking $r\log^2(1/\delta)/\poly(\epsilon)$ many samples, or a non-adaptive conditional sampling algorithm taking $r\log^2(1/\delta)\poly(\log n,1/\epsilon)$ many samples.
\end{lemma}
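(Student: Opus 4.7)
The plan is to establish the general bound $O(q\log(1/\delta)/\epsilon)$ by a direct count, and then substitute the three different weakly tolerant interval tester complexities to obtain the three concrete bounds stated in the ``in particular'' clause.

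First I would count. Algorithm \ref{alg:assess} runs $s = 20\log(1/\delta)/\epsilon$ iterations of its loop. Each iteration uses one unconditional sample for Step \ref{step:assess-single-sample} and at most $q = q(n/c,\epsilon/r,\epsilon,\delta/(2s))$ samples for the invocation of $\mathbb{T}$ in Step \ref{step:assess-single-test}. Summing gives $s(1+q) = O(q\log(1/\delta)/\epsilon)$, which is the first claim.

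For the three specific bounds I would substitute $m = n/c$, $\gamma = \epsilon/r$, and $\delta' = \delta/(2s)$ into Lemmas \ref{lem:uniformity-interval}, \ref{lem:adaptive-uniformity-interval} and \ref{lem:nonadaptive-uniformity-interval} respectively. The one routine observation is that $\log(1/\delta') = \log(2s/\delta) = O(\log(1/\delta) + \log(1/\epsilon))$, so it contributes at most one extra factor of $\log(1/\delta)$, with the remaining $\log(1/\epsilon)$ being absorbed into the $\poly(\epsilon)$ or $\poly(\log n, 1/\epsilon)$ slack present in the final bound.

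Plugging in: in the unconditional model Lemma \ref{lem:uniformity-interval} gives $q = O(\sqrt{n/c}\,r\log(1/\delta')/\epsilon^3)$, so the product $s\cdot q$ is $r\sqrt{n/c}\log^2(1/\delta)/\poly(\epsilon)$. In the adaptive conditional model Lemma \ref{lem:adaptive-uniformity-interval} gives $q$ independent of both $m$ and $\gamma$, so the product is $\log^2(1/\delta)/\poly(\epsilon)$, which is bounded by $r\log^2(1/\delta)/\poly(\epsilon)$ since $r\geq 1$. Finally, in the non-adaptive conditional model Lemma \ref{lem:nonadaptive-uniformity-interval} yields $q = r\cdot\poly(\log n, 1/\epsilon)\log(1/\delta')/\epsilon$, and the product becomes $r\log^2(1/\delta)\poly(\log n, 1/\epsilon)$. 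The only ``obstacle'' is bookkeeping — keeping track of where each $\log(1/\delta')$ contributes an extra $\log(1/\delta)$ factor and verifying that the spurious $\log(1/\epsilon)$ terms fit inside the polynomial slack — but no new ideas are required beyond direct substitution into the already-established tester complexities.
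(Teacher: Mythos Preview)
Your proposal is correct and follows the same approach as the paper: count $s(q+1)$ total samples and then substitute the three tester complexities from Lemmas \ref{lem:uniformity-interval}, \ref{lem:adaptive-uniformity-interval} and \ref{lem:nonadaptive-uniformity-interval}. The one point the paper makes explicit that you omit is that, for the non-adaptive bound, it matters that the tester of Lemma \ref{lem:nonadaptive-uniformity-interval} chooses its queries independently of $I$; since $I$ in Step \ref{step:assess-single-test} depends on the sample drawn in Step \ref{step:assess-single-sample}, this independence is what ensures the overall algorithm is genuinely non-adaptive rather than merely built from a non-adaptive subroutine.
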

\begin{proof}
 A single (unconditional) sample is taken each time Step \ref{step:assess-single-sample} is reached, and all other samples are taken by the invocation of $\mathbb{T}$ in Step \ref{step:assess-single-test}. This makes the total number of samples to be $s(q+1)=O(q\log(1/\delta)/\epsilon)$.
 
 The bound for each individual sampling model follows by plugging in Lemma \ref{lem:uniformity-interval}, Lemma \ref{lem:adaptive-uniformity-interval} and Lemma \ref{lem:nonadaptive-uniformity-interval} respectively. For the last one it is important that the tester makes its queries completely independently of $I$, as otherwise the algorithm would not have been non-adaptive.
\end{proof}


\section{Learning and testing decomposable distributions and properties}\label{sec:learn}

Here we finally put things together to produce a learning algorithm for $L$-decomposable distribution. This algorithm is not only guaranteed to learn with high probability a distribution that is decomposable, but is also guaranteed with high probability to not produce a wrong output for any distribution (though it may plainly reject a distribution that is not decomposable).

This is presented in Algorithm \ref{alg:learn}. We present it with a fixed error probability $2/3$ because this is what we use later on, but it is not hard to move to a general $\delta$.

\begin{algorithm}
   [htpb]
   \caption{Learning an $L$-decomposable distribution}
   \label{alg:learn}
   
   \KwInput{Distribution $\mu$ supported over $[n]$, parameters $L$ (decomposability), $\epsilon>0$ (accuracy), a weakly tolerant interval uniformity tester $\mathbb{T}$ taking input values $(\mu,I,m,\gamma,\epsilon,\delta)$}

   Use Algorithm \ref{alg:pulled-rel-partition} with input values $(\mu,\epsilon/2000L,\epsilon/2000,1/9)$ to obtain a partition $\mathcal{I}$ with $|\mathcal{I}|\leq r=10^5L\log(1/\epsilon)/\epsilon$
   \label{step:learn-pull}
   
   Use Algorithm \ref{alg:assess} with input values $(\mu,L,r,\mathcal{I},\epsilon/20,1/9,\mathbb{T})$
   \label{step:learn-assess}
   
   \lIf{Algorithm \ref{alg:assess} rejected}{$\reject$}
   \label{step:learn-prereject}
   
   Use Lemma \ref{lem:learn-flat} with values $(\mu,\mathcal{I},\epsilon/10,1/9)$ to obtain $\mu'_{\mathcal{I}}$
   \label{step:learn-learn}
      
 \Return{$\mu'_{\mathcal{I}}$}
 
\end{algorithm}

First we show completeness, that the algorithm will be successful for decomposable distributions.

\begin{lemma}\label{lem:learn-complete}
 If $\mu$ is $(\epsilon/2000,L)$-decomposable, then with probability at least $2/3$ Algorithm \ref{alg:learn} produces a distribution $\mu'$ so that $d(\mu,\mu')\leq \epsilon$.
\end{lemma}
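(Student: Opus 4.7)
The plan is to chain together the three probabilistic guarantees from Steps \ref{step:learn-pull}, \ref{step:learn-assess} and \ref{step:learn-learn}, each of which has failure probability at most $1/9$, and then combine them via a union bound. The bulk of the argument is bookkeeping of the constants to confirm that the stated parameters really do make each lemma applicable.

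First I would invoke Lemma \ref{lem:pulled-rel-partition} with $\eta=\epsilon/2000L$, $\gamma=\epsilon/2000$ and $\delta=1/9$ to deduce that, except with probability $1/9$, the partition $\mathcal{I}$ produced in Step \ref{step:learn-pull} is $(\epsilon/2000L,\epsilon/2000)$-fine and its length $r$ is bounded as claimed. Since $\mu$ is $(\epsilon/2000,L)$-decomposable, Lemma \ref{lem:small-wt-non-uniform-rel} (applied with $\gamma=\epsilon/2000$) then gives that the total weight of intervals with $\bias(\restrict{\mu}{I})>\epsilon/2000$ is at most $3\epsilon/2000$. In particular, if $\mathcal{G}_{\mathcal{I}}$ is defined as in Lemma \ref{lem:assess-complete} with parameter $\epsilon/20$ (so its threshold on bias is $(\epsilon/20)/100=\epsilon/2000$), then $\mu(\bigcup_{I\in\mathcal{G}_{\mathcal{I}}}I)\geq 1-3\epsilon/2000\geq 1-\epsilon/20$.

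Next I would check the hypothesis of Lemma \ref{lem:assess-complete} for the call in Step \ref{step:learn-assess}: with $c=L$ and $(\eta,\gamma)=(\epsilon/2000L,\epsilon/2000)$ we have $c\eta+\gamma=\epsilon/1000\leq\epsilon/20$, so the lemma applies and guarantees that, except with additional probability $1/9$, Algorithm \ref{alg:assess} accepts and the procedure proceeds to Step \ref{step:learn-learn}. Lemma \ref{lem:learn-flat} then gives, with probability $1-1/9$, an explicit $\mu'_\mathcal{I}$ with $d(\mu_\mathcal{I},\mu'_\mathcal{I})\leq\epsilon/10$.

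It remains to bound $d(\mu,\mu_\mathcal{I})$. Applying Observation \ref{obs:norms} interval by interval, every $I\in\mathcal{G}_{\mathcal{I}}$ satisfies $d(\restrict{\mu}{I},\mathcal{U}_I)\leq\bias(\restrict{\mu}{I})\leq\epsilon/2000$. Thus Lemma \ref{lem:flat-close}, used with $\gamma=\epsilon/2000$ and $\eta=3\epsilon/2000$, yields $d(\mu,\mu_\mathcal{I})\leq 7\epsilon/2000$, and by the triangle inequality $d(\mu,\mu'_\mathcal{I})\leq 7\epsilon/2000+\epsilon/10\leq\epsilon$. A union bound over the three $1/9$ failure events gives overall success probability at least $2/3$. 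The only mildly delicate point — and really the only place where care is needed — is matching the constants so that both the fineness inequality $c\eta+\gamma\leq\epsilon'$ required by Algorithm \ref{alg:assess} and the bias/weight bound demanded of $\mathcal{G}_{\mathcal{I}}$ are satisfied simultaneously; everything else is a direct appeal to the previously established lemmas.
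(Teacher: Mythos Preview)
Your proposal is correct and follows essentially the same approach as the paper: chain Lemma~\ref{lem:pulled-rel-partition}, Lemma~\ref{lem:small-wt-non-uniform-rel}, Lemma~\ref{lem:assess-complete}, Lemma~\ref{lem:flat-close} and Lemma~\ref{lem:learn-flat}, then union-bound the three $1/9$ failure events. You are in fact more explicit than the paper in verifying the fineness hypothesis $c\eta+\gamma\le\epsilon/20$ of Lemma~\ref{lem:assess-complete}, and your bound $d(\mu,\mu_{\mathcal{I}})\le 7\epsilon/2000$ is sharper than the paper's $15\epsilon/20$ (the paper itself remarks that its bound ``can be bounded much smaller here''); both suffice.
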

\begin{proof}
 By Lemma \ref{lem:pulled-rel-partition}, with probability at least $8/9$ the partition $\mathcal{I}$ is $(\epsilon/2000L,\epsilon/2000)$-fine, which means by Lemma \ref{lem:small-wt-non-uniform-rel} that $\sum_{j \in [r]: \bias(\restrict{\mu}{I'_j}) > \epsilon/2000} \mu(I'_j) \le 3\epsilon/2000$. When this occurs, by Lemma \ref{lem:assess-complete} with probability at least $8/9$ Algorithm \ref{alg:assess} will accept and so the algorithm will move past Step \ref{step:learn-prereject}. In this situation, in particular by Lemma \ref{lem:flat-close} we have that $d(\mu_{\mathcal{I}},\mu)\le 15\epsilon/20$ (in fact this can be bounded much smaller here), and with probability at least $8/9$ (by Lemma \ref{lem:learn-flat}) Step \ref{step:learn-learn} provides a distribution that is $\epsilon/10$-close to $\mu_{\mathcal{I}}$ and hence $\epsilon$-close to $\mu$.
\end{proof}

Next we show soundness, that the algorithm will with high probability not mislead about the distribution, whether it is decomposable or not.

\begin{lemma}\label{lem:learn-sound}
 For any $\mu$, the probability that Algorithm \ref{alg:learn} produces (without rejecting) a distribution $\mu'$ for which $d(\mu,\mu')>\epsilon$ is bounded by $\delta$.
\end{lemma}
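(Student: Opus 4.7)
The plan is to parallel the proof of Lemma \ref{lem:learn-complete} but now to remove every reliance on $\mu$ being decomposable, using only the fact that Lemma \ref{lem:pulled-rel-partition}, Lemma \ref{lem:assess-sound} and Lemma \ref{lem:learn-flat} all hold for an \emph{arbitrary} $\mu$. I will identify three bad events, each of probability at most $1/9$, whose union covers every way the algorithm can output (without rejecting) a distribution that is $\epsilon$-far from $\mu$; a union bound then gives the claimed failure probability $1/3$.

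The three events are as follows. Let $E_1$ be the event that the partition $\mathcal{I}$ produced in Step \ref{step:learn-pull} is not $(\epsilon/2000L,\epsilon/2000)$-fine; by Lemma \ref{lem:pulled-rel-partition}, $\Pr[E_1]\le 1/9$. Let $E_2$ be the event that $\mathcal{I}$ \emph{is} $(\epsilon/2000L,\epsilon/2000)$-fine, the total $\mu$-weight of intervals $I\in\mathcal{I}$ with $d(\restrict{\mu}{I},\mathcal{U}_I)>\epsilon/20$ exceeds $7\epsilon/20$, and nevertheless the call to Algorithm \ref{alg:assess} in Step \ref{step:learn-assess} accepts. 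Lemma \ref{lem:assess-sound}, invoked with accuracy parameter $\epsilon/20$ and $\delta=1/9$, yields $\Pr[E_2]\le 1/9$ once we verify its fineness hypothesis $c\eta+\gamma\le\epsilon/20$, which for our values becomes $L\cdot(\epsilon/2000L)+\epsilon/2000=\epsilon/1000\le\epsilon/20$. Let $E_3$ be the event that the invocation of Lemma \ref{lem:learn-flat} in Step \ref{step:learn-learn} fails to produce a distribution within $\ell_1$-distance $\epsilon/10$ of $\mu_{\mathcal{I}}$; the lemma itself gives $\Pr[E_3]\le 1/9$.

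Assuming none of $E_1,E_2,E_3$ occurs and the algorithm does not reject, I argue that the output $\mu'_{\mathcal{I}}$ satisfies $d(\mu,\mu'_{\mathcal{I}})\le\epsilon$. Indeed, from $\neg E_1$ the partition is fine; since Algorithm \ref{alg:assess} accepted while $\neg E_2$ holds, the total weight of intervals on which $\restrict{\mu}{I}$ is $\epsilon/20$-far from uniform is at most $7\epsilon/20$. Applying Lemma \ref{lem:flat-close} with $\gamma=\epsilon/20$ and $\eta=7\epsilon/20$ gives $d(\mu,\mu_{\mathcal{I}})\le \epsilon/20+2\cdot 7\epsilon/20 = 15\epsilon/20$; combined with $d(\mu_{\mathcal{I}},\mu'_{\mathcal{I}})\le \epsilon/10=2\epsilon/20$ from $\neg E_3$, the triangle inequality yields $d(\mu,\mu'_{\mathcal{I}})\le 17\epsilon/20<\epsilon$. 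A union bound over $E_1,E_2,E_3$ then bounds the overall probability of a bad output by $3/9=1/3$.

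The only place that requires any care is bookkeeping the constants: we simultaneously need the fineness hypothesis of Lemma \ref{lem:assess-sound} to be met with room to spare and the final distance budget $\epsilon/20+2\cdot 7\epsilon/20+\epsilon/10$ to stay below $\epsilon$, which is exactly why Step \ref{step:learn-pull} is asked for the much tighter fineness $(\epsilon/2000L,\epsilon/2000)$. Conceptually, there is no obstacle to speak of: the assessment stage of Algorithm \ref{alg:learn} is precisely the guard that lets us discharge the decomposability assumption, and its soundness guarantee (Lemma \ref{lem:assess-sound}) is distribution-free.
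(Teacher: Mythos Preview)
Your proof is correct and follows essentially the same approach as the paper's: both condition on the pulled partition being $(\epsilon/2000L,\epsilon/2000)$-fine, then use Lemma~\ref{lem:assess-sound} to guarantee rejection when too much weight lies on intervals far from uniform, and otherwise combine Lemma~\ref{lem:flat-close} with Lemma~\ref{lem:learn-flat} to bound $d(\mu,\mu'_{\mathcal{I}})$. You have simply packaged the paper's case analysis as an explicit union bound over three bad events, with the same constants and the same final distance computation $15\epsilon/20+\epsilon/10<\epsilon$.
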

\begin{proof}
 Consider the interval partition $\mathcal{I}$. By Lemma \ref{lem:pulled-rel-partition}, with probability at least $8/9$ it is $(\epsilon/2000L,\epsilon/2000)$-fine. When this happens, if $\mathcal{I}$ is such that $\sum_{j:d(\restrict{\mu}{I_j},\mathcal{U}_{I_j})}\mu(I_j)>7\epsilon/20$, then by Lemma \ref{lem:assess-sound} with probability at least $8/9$ the algorithm will reject in Step \ref{step:learn-prereject}, and we are done (recall that here a rejection is an allowable outcome).
 
 On the other hand, if $\mathcal{I}$ is such that $\sum_{j:d(\restrict{\mu}{I_j},\mathcal{U}_{I_j})}\mu(I_j)\le 7\epsilon/20$, then by Lemma \ref{lem:flat-close} we have that $d(\mu_{\mathcal{I}},\mu)\le 15\epsilon/20$, and with probability at least $8/9$ (by Lemma \ref{lem:learn-flat}) Step \ref{step:learn-learn} provides a distribution that is $\epsilon/10$-close to $\mu_{\mathcal{I}}$ and hence $\epsilon$-close to $\mu$, which is also an allowable outcome.
\end{proof}

And finally, we plug in the sample complexity bounds.

\begin{lemma}\label{lem:learn-complexity}
 Algorithm \ref{alg:learn} requires $O(L\log(1/\epsilon)/\epsilon+q/\epsilon+L\log(1/\epsilon)/\epsilon^3)$ many samples, where the value $q=q(n/L,\epsilon^2/10^5L\log(1/\epsilon),\epsilon/20,2000/\epsilon)$ is a bound on the number of samples that each invocation of $\mathbb{T}$ inside Algorithm \ref{alg:assess} requires.
 
 In particular, Algorithm \ref{alg:learn} can be implemented either as an unconditional sampling algorithm taking $\sqrt{nL}/\poly(\epsilon)$ many samples, an adaptive conditional sampling algorithm taking $L/\poly(\epsilon)$ many samples, or a non-adaptive conditional sampling algorithm taking $L\poly(\log n,1/\epsilon)$ many samples.
\end{lemma}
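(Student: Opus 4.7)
The proof is essentially an exercise in bookkeeping: each of the three non-trivial steps of Algorithm \ref{alg:learn} invokes a subroutine whose query complexity is already bounded in an earlier lemma, so I just need to add up the contributions and then specialize $\mathbb{T}$ to each of the three sampling models. The plan is to first establish the general (abstract) bound and then plug in the three concrete interval testers.

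For the general bound, I would account for the steps in order. Step \ref{step:learn-pull} invokes Algorithm \ref{alg:pulled-rel-partition} with parameters $(\eta,\gamma,\delta)=(\epsilon/2000L,\epsilon/2000,1/9)$, which by Lemma \ref{lem:pulled-rel-partition} costs $m=(3/\eta)\log(5/\gamma\delta)=O((L/\epsilon)\log(1/\epsilon))$ unconditional samples and produces a partition of length at most $r=10^5 L\log(1/\epsilon)/\epsilon$, matching the value used in Step \ref{step:learn-assess}. Step \ref{step:learn-assess} invokes Algorithm \ref{alg:assess} with $c=L$, accuracy $\epsilon/20$, and error $1/9$, so by Lemma \ref{lem:assess-complexity} it costs $O(q\log(1/\delta)/\epsilon)=O(q/\epsilon)$ samples where $q$ is precisely the number of samples that a single call to $\mathbb{T}$ with input $(n/c,\epsilon/r,\epsilon/20,\delta/2s)$ requires; unwinding the constants gives $q=q(n/L,\epsilon^2/(10^5 L\log(1/\epsilon)),\epsilon/20,O(\epsilon))$, which matches the statement up to the shorthand used there. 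Step \ref{step:learn-learn} invokes Lemma \ref{lem:learn-flat} with accuracy $\epsilon/10$ and error $1/9$, costing $2(\ell+\log 18)/(\epsilon/10)^2 = O(L\log(1/\epsilon)/\epsilon^3)$ unconditional samples since $\ell\leq r$. Summing these three contributions yields the claimed $O(L\log(1/\epsilon)/\epsilon + q/\epsilon + L\log(1/\epsilon)/\epsilon^3)$ bound.

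For the three model-specific bounds, I would substitute the $q$ given by each of Lemmas \ref{lem:uniformity-interval}, \ref{lem:adaptive-uniformity-interval} and \ref{lem:nonadaptive-uniformity-interval}. For the unconditional model, Lemma \ref{lem:uniformity-interval} gives $q=O(\sqrt{n/L}\cdot(10^5 L\log(1/\epsilon)/\epsilon^2)\cdot 1/(\epsilon/20)^2\cdot\log(1/\epsilon))$; the $\sqrt{L}$ from $r$ cancels against $\sqrt{1/L}$ from $\sqrt{n/L}$ to produce $\sqrt{nL}$, yielding a total of $\sqrt{nL}/\poly(\epsilon)$. For the adaptive model, Lemma \ref{lem:adaptive-uniformity-interval} makes $q$ depend only on $\epsilon$ (polylogarithmically and polynomially), so the total is dominated by Step \ref{step:learn-learn}, giving $L/\poly(\epsilon)$. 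For the non-adaptive model, Lemma \ref{lem:nonadaptive-uniformity-interval} gives $q=\poly(\log n,1/\epsilon)/\gamma=L\cdot\poly(\log n,1/\epsilon)$, and here the independence of $\mathbb{T}$'s queries from the interval $I$ is what allows Algorithm \ref{alg:assess} (and hence Algorithm \ref{alg:learn}) to remain non-adaptive.

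There is no real obstacle: the argument is a routine substitution, and the only thing one has to be mildly careful about is confirming that $r$ (the length bound on the pulled partition) and the $\gamma$-slot passed to $\mathbb{T}$ are both linear in $L$ (up to $\poly(1/\epsilon)$ factors), so that the final expressions come out linear in $L$ as advertised in Table \ref{tab:results} rather than quadratic. I would also make explicit, as the lemma statement does, that the $\log(1/\delta)$ factor can be squeezed down by the standard ``run at $\delta=1/3$ then amplify by majority vote'' trick, which lets the user ignore the awkward $\log^2$ terms hidden inside Lemma \ref{lem:assess-complexity}.
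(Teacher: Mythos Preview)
Your proposal is correct and follows exactly the paper's approach: attribute the three summands to Lemma~\ref{lem:pulled-rel-partition}, Lemma~\ref{lem:assess-complexity}, and Lemma~\ref{lem:learn-flat} respectively, then read off the model-specific bounds from the particularized versions already stated in Lemma~\ref{lem:assess-complexity}. The paper's own proof is just a two-sentence pointer to these lemmas, so your version is simply a more explicit unrolling of the same bookkeeping; the only minor slip is the phrase ``the $\sqrt{L}$ from $r$'' in the unconditional case---the factor of $L$ there actually enters through $1/\gamma$ (since $\gamma=\epsilon/r$), not directly as $\sqrt{L}$---but the arithmetic $\sqrt{n/L}\cdot L=\sqrt{nL}$ and the final bound are correct.
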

\begin{proof}
 The three summands in the general expression follow respectively from the sample complexity calculations of Lemma \ref{lem:pulled-rel-partition} for Step \ref{step:learn-pull}, Lemma \ref{lem:assess-complexity} for Step \ref{step:learn-assess}, and Lemma \ref{lem:learn-flat} for Step \ref{step:learn-learn} respectively. Also note that all samples outside Step \ref{step:learn-assess} are unconditional.
 
 The bound for each individual sampling model follows from the respective bound stated in Lemma \ref{lem:assess-complexity}.
\end{proof}

Let us now summarize the above as a theorem.

\begin{theorem}\label{thm:learn}
 Algorithm \ref{alg:learn} is capable of learning an $(\epsilon/2000,L)$-decomposable distribution, giving with probability at least $2/3$ a distribution that is $epsilon$-close to it, such that for no distribution will it give as output a distribution $\epsilon$-far from it with probability more than $1/3$.
 
 It can be implemented either as an unconditional sampling algorithm taking $\sqrt{nL}/\poly(\epsilon)$ many samples, an adaptive conditional sampling algorithm taking $L/\poly(\epsilon)$ many samples, or a non-adaptive conditional sampling algorithm taking $L\poly(\log n,1/\epsilon)$ many samples.
\end{theorem}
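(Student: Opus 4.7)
The plan is to combine the three immediately preceding lemmas---Lemma \ref{lem:learn-complete}, Lemma \ref{lem:learn-sound}, and Lemma \ref{lem:learn-complexity}---each of which handles exactly one of the three assertions of the theorem. The theorem makes no further claims beyond the union of what these lemmas give us, so the proof should be a short assembly rather than new argument, and all of the real technical content (partition pulling, the bias concentration of Lemma \ref{lem:small-wt-non-uniform-rel}, the flattening-to-original approximation of Lemma \ref{lem:flat-close}, the assessment mechanism of Section \ref{sec:assessment}, and the three individual weakly tolerant testers) has already been done.

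First I would dispatch the learning guarantee for decomposable inputs by invoking Lemma \ref{lem:learn-complete} directly: whenever $\mu$ is $(\epsilon/2000,L)$-decomposable, Algorithm \ref{alg:learn} produces with probability at least $2/3$ a distribution $\mu'$ with $d(\mu,\mu') \le \epsilon$. Next, for the ``safety'' claim that on any input the algorithm does not mislead, I would invoke Lemma \ref{lem:learn-sound}, noting that rejection is explicitly counted as an allowable outcome in the theorem statement, so the only bad event is producing a distribution that is $\epsilon$-far from $\mu$, which Lemma \ref{lem:learn-sound} bounds by $1/3$.

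For the query complexity in each of the three sampling models, I would plug in Lemma \ref{lem:learn-complexity}, which already presents the per-model bounds as a consequence of instantiating the weakly tolerant interval tester $\mathbb{T}$ using Lemma \ref{lem:uniformity-interval}, Lemma \ref{lem:adaptive-uniformity-interval}, or Lemma \ref{lem:nonadaptive-uniformity-interval} respectively, and verifying that each of the three summands in the complexity expression is absorbed into the claimed $\sqrt{nL}/\poly(\epsilon)$, $L/\poly(\epsilon)$, or $L\poly(\log n,1/\epsilon)$ bound. The only ``obstacle'' worth flagging is purely bookkeeping: one must verify that the $2/3$ overall success probability in the theorem matches the union bound over the three independent high-probability events inside Algorithm \ref{alg:learn} (pulling a good fine partition, the assessment returning the correct verdict, and the flattening being learned to within $\epsilon/10$), each of which was deliberately tuned to $1/9$ so that the three failure probabilities sum to at most $1/3$.
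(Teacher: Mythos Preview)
Your proposal is correct and matches the paper's own proof exactly: the paper's argument is a single sentence stating that the theorem follows from Lemmas \ref{lem:learn-complete}, \ref{lem:learn-sound}, and \ref{lem:learn-complexity} respectively. Your additional commentary about the $1/9$ failure probabilities is accurate but already handled inside those lemmas rather than in the theorem's proof.
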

\begin{proof}
 This follows from Lemmas \ref{lem:learn-complete}, \ref{lem:learn-sound} and \ref{lem:learn-complexity} respectively.
\end{proof}

Let us now move to the immediate application of the above for testing decomposable properties. The algorithm achieving this is summarized as Algorithm \ref{alg:test}

\begin{algorithm}
  [htpb]
  \caption{Testing $L$-decomposable properties.}
  \label{alg:test}

  \KwInput{Distribution $\mu$ supported over $[n]$, function $L:(0,1] \times \mathbb N \to \mathbb N$ (decomposability), parameter $\epsilon > 0$ (accuracy), an $L$-decomposable property $\mathcal{C}$ of distributions, a weakly tolerant interval uniformity tester $\mathbb{T}$ taking input values $(\mu,I,m,\gamma,\epsilon,\delta)$.}

    Use Algorithm \ref{alg:learn} with input values $(\mu,L(\epsilon/4000,n),\epsilon/2,\mathbb{T})$ to obtain $\mu'$
    \label{step:test-learn}

    \lIf*{Algorithm \ref{alg:learn} accepted and $\mu'$ is $\epsilon/2$-close to $\mathcal{C}$} {$\accept$} \lElse*{$\reject$}
    \label{step:test-answer}
\end{algorithm}

\begin{theorem}\label{thm:test}
 Algorithm \ref{alg:test} is a test (with error probability $1/3$) for the $L$-decomposable property $\mathcal{C}$.
 For $L=L(\epsilon/4000,n)$,
 It can be implemented either as an unconditional sampling algorithm taking $\sqrt{nL}/\poly(\epsilon)$ many samples, an adaptive conditional sampling algorithm taking $L/\poly(\epsilon)$ many samples, or a non-adaptive conditional sampling algorithm taking $L\poly(\log n,1/\epsilon)$ many samples.
\end{theorem}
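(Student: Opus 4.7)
The plan is to reduce Theorem \ref{thm:test} to Theorem \ref{thm:learn} using the triangle inequality, exploiting the fact that for an $L$-decomposable property $\mathcal{C}$, every $\mu \in \mathcal{C}$ is $(\epsilon/4000, L(\epsilon/4000, n))$-decomposable by the very definition of decomposable property, which is exactly the regime under which the learner of Algorithm \ref{alg:learn}, instantiated with accuracy parameter $\epsilon/2$, is guaranteed to succeed. The deterministic check in Step \ref{step:test-answer} (whether $\mu'$ is $\epsilon/2$-close to $\mathcal{C}$) consumes no samples from $\mu$, so the whole sample complexity is inherited from Algorithm \ref{alg:learn}.

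For completeness I would argue as follows: if $\mu \in \mathcal{C}$, then $\mu$ is $(\epsilon/4000, L)$-decomposable for $L = L(\epsilon/4000, n)$. By Lemma \ref{lem:learn-complete}, with probability at least $2/3$ Algorithm \ref{alg:learn} accepts and outputs a distribution $\mu'$ satisfying $d(\mu, \mu') \le \epsilon/2$, and since $\mu$ itself witnesses membership in $\mathcal{C}$, the distance from $\mu'$ to $\mathcal{C}$ is at most $\epsilon/2$, so Step \ref{step:test-answer} accepts. For soundness, suppose $\mu$ is $\epsilon$-far from $\mathcal{C}$. By Lemma \ref{lem:learn-sound}, with probability at least $2/3$ one of two acceptable outcomes for Algorithm \ref{alg:learn} occurs: it either rejects outright (in which case Algorithm \ref{alg:test} rejects as well), or it outputs $\mu'$ with $d(\mu, \mu') \le \epsilon/2$. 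In the second case, if $\mu'$ were $\epsilon/2$-close to some $\chi \in \mathcal{C}$, then by the triangle inequality $d(\mu, \chi) \le \epsilon$, contradicting that $\mu$ is $\epsilon$-far from $\mathcal{C}$; hence $\mu'$ is not $\epsilon/2$-close to $\mathcal{C}$ and Step \ref{step:test-answer} rejects.

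For the sample-complexity statement, I would simply substitute $\epsilon/2$ and $L(\epsilon/4000, n)$ into each of the three bounds of Lemma \ref{lem:learn-complexity}, observing that the dependence on $\epsilon$ only changes by constant factors and the dependence on $L$ is exactly as claimed. There is no real obstacle in the proof itself: the work has already been done in Theorem \ref{thm:learn}, and what remains is bookkeeping the parameter $\epsilon/4000$ (which comes from the $\epsilon/2000$ in Lemma \ref{lem:learn-complete} after passing in $\epsilon/2$) and applying the triangle inequality cleanly in the soundness case. The one point to be careful about is that the ``one-sided'' correctness guarantee of Algorithm \ref{alg:learn} is exactly what makes the triangle-inequality argument for soundness work for arbitrary (not necessarily decomposable) far distributions.
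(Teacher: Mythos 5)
Your proposal is correct and takes essentially the same approach as the paper: both reduce Theorem~\ref{thm:test} to the guarantees of Algorithm~\ref{alg:learn} (Lemmas~\ref{lem:learn-complete}, \ref{lem:learn-sound}, and \ref{lem:learn-complexity}), using the triangle inequality to move between $d(\mu,\mu')\le\epsilon/2$ and $d(\mu',\mathcal{C})$ for both completeness and soundness, and inherit the sample complexity directly since Step~\ref{step:test-answer} makes no samples. Your explicit observation that the one-sidedness of Lemma~\ref{lem:learn-sound} (valid for arbitrary $\mu$) is what lets the soundness argument go through is a nice clarification of the same point the paper makes more tersely.
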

\begin{proof}
 The number and the nature of the samples are determined fully by the application of Algorithm \ref{alg:learn} in Step \ref{step:test-learn}, and are thus the same as in Theorem \ref{thm:learn}. Also by this theorem, for a distribution $\mu\in\mathcal{C}$, with probability at least $2/3$ an $\epsilon/2$-close distribution $\mu'$ will be produced, and so it will be accepted in Step \ref{step:test-answer}.
 
 Finally, if $\mu$ is $\epsilon$-far from $\mathcal{C}$, then with probability at least $2/3$ Step \ref{step:test-learn} will either produce a rejection, or again produce $\mu'$ that is $\epsilon/2$-close to $\mu$. In the latter case, $\mu'$ will be $\epsilon/2$-far from $\mathcal{C}$ by the triangle inequality, and so Step \ref{step:test-answer} will reject in either case.
\end{proof}


\section{A weakly tolerant tester for the non-adaptive conditional model}\label{sec:non-adaptive}

Given a distribution $\mu$, supported over $[n]$, and an interval $I \subseteq [n]$ such that
$\mu(I) \ge \gamma$, we give a tester that uses non-adaptive conditional queries to $\mu$ to
distinguish between the cases $\bias(\restrict{\mu}{I}) \le \epsilon/100$ and $d(\restrict{\mu}{I}, \mathcal{U}_I) > \epsilon$, using ideas from
\cite{ChakrabortyFGM13}. A formal description of the test is given as Algorithm
\ref{alg:weakly-tolerant}. It is formulated here with error probability $\delta=1/3$. Lemma \ref{lem:nonadaptive-uniformity-interval} is obtained from this the usual way, by repeating the algorithm $O(1/\delta)$ times and taking the majority vote.

\begin{algorithm}
  [htpb]
  \caption{Non-adaptive weakly tolerant uniformity tester}
  \label{alg:weakly-tolerant}
  \KwInput{Distribution $\mu$ supported over $[n]$, interval $I \subseteq [n]$,
  weight bound $\gamma$, accuracy $\epsilon > 0$.}

  Sample $t = \frac{4(\log^{10} n + 3)}{\epsilon^2 \gamma}$ elements
  from $\mu$. 
  \label{step:uncond-samples}

  \For {$k \in \{0, \ldots, \log n\}$}
  {
    Set $p_k = 2^{-k}$.

    Choose a set $U_k \subseteq [n]$, where each $i \in [n]$ is in $U_k$ with probability $p_k$,
    independently of other elements in $[n]$.
  }

  \If {$|I| \le \log^{10} n$}
  {
    Use Lemma \ref{lem:brute-force}, using the $t$ unconditional samples from $\mu$, to construct a
    distribution $\mu'$
    \label{step:bf-test}

    \lIf* {$d(\mu', \mathcal{U}_I) \le \epsilon/2$} {\accept,} \lElse* {\reject.}
    \label{step:small-set}
  }
  \Else
  {
    \For {$U_k$ such that $k \le \log\left( \frac{|I|}{2\log^8 n} \right)$, and  $|I \cap U_k| \ge
  \log^8 n$}
  {\label{step:loop-collision}

      Sample $\log^3 n$ elements from $\restrict{\mu}{U_k}$.
      \label{step:sample-collision}

      \lIf {the same element from $I \cap U_k$ has been sampled twice} {$\reject$.}
	\label{step:wt-collision}
    }

    Choose an index $k$ such that $\frac{2}{3}\log^8 n \le |I|p_k < \frac{4}{3}\log^8 n$.
    \label{step:choose-k}

    Sample $m_k = \frac{C\log^{16} n \log(3\log n)}{\epsilon^2 \gamma}$
    elements from $\restrict{\mu}{U_k}$, for a large constant $C$.
    \label{step:sample-close}

    \If {$|I \cap U_k| > 2|I|p_k$ or the number of samples in $I \cap U_k$ is less than
    $\gamma m_k/40$}
    {
       \label{step:wt-too-few}

      \reject.
    } 
    \Else
    {

      Use Lemma \ref{lem:learn-infty} with the samples received from $I \cap U_k$, to construct
      $\mu'$, supported on $I\cap U_k$, such that $\lVert \mu' - \restrict{\mu}{I\cap
      U_k} \rVert_\infty \le \frac{\epsilon}{80|I \cap U_k|}$ with probability at least $9/10$.
      \label{step:learn-infty}

      \lIf* {$\lVert \mu' - \mathcal{U}_{I \cap U_k} \rVert_\infty \le
      \frac{3\epsilon}{80|I \cap U_k|}$} {$\accept$,} \lElse* {$\reject$.}
      \label{step:final-answer}
    }
  }
\end{algorithm}

We first make the observation that makes Algorithm \ref{alg:weakly-tolerant} suitable for a
non-adaptive setting.
\begin{observation}
  Algorithm \ref{alg:weakly-tolerant} can be implemented using only non-adaptive conditional queries
  to the distribution $\mu$, that are chosen independently of $I$.
\end{observation}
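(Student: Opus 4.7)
The plan is to observe that the only sets ever passed to the conditional oracle in Algorithm \ref{alg:weakly-tolerant} are the full set $[n]$ (for the $t$ unconditional samples in Step \ref{step:uncond-samples}) and the random sets $U_0, U_1, \ldots, U_{\log n}$ (for the conditional samples in Steps \ref{step:sample-collision} and \ref{step:sample-close}). Each $U_k$ is constructed up front by including every element of $[n]$ independently with probability $p_k = 2^{-k}$; this construction uses only the algorithm's internal randomness and makes no reference to $I$ or to any sample outcome. Hence the collection of query sets $\{[n], U_0, \ldots, U_{\log n}\}$ is determined before the oracle is ever called and is independent of $I$.

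To make this formal, I would rewrite the algorithm so that all sampling occurs in an initial batch. Namely, at the start the modified version draws the $t$ unconditional samples, generates $U_0, \ldots, U_{\log n}$, and then draws the worst-case number $m_k = \frac{C\log^{16} n \log(3\log n)}{\epsilon^2 \gamma}$ of conditional samples from each $\restrict{\mu}{U_k}$ (this dominates the $\log^3 n$ samples requested inside Step \ref{step:sample-collision}, so a single pool per $U_k$ suffices for all later uses). The resulting query list, which has length $t + (\log n + 1) m_k = \poly(\log n, 1/\epsilon)/\gamma$, is chosen using only internal randomness, and matches the sample complexity promised in Lemma \ref{lem:nonadaptive-uniformity-interval}.

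All remaining steps of Algorithm \ref{alg:weakly-tolerant} then become post-processing on the pool: selecting the branch based on whether $|I| \le \log^{10} n$, deciding which indices $k$ enter the loop of Step \ref{step:loop-collision}, picking the specific $k$ satisfying the condition of Step \ref{step:choose-k}, computing $|I \cap U_k|$ and the count of samples falling in $I \cap U_k$ for the threshold check of Step \ref{step:wt-too-few}, and evaluating the $\ell_\infty$ comparisons. None of these operations issues a new oracle query; they only read $I$, the stored random sets $U_k$, and the previously drawn samples.

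There is no real obstacle here: the only subtle point is to notice that we must commit to the maximum sample size per $U_k$ up front (rather than tailoring it to $k$ or $I$ at query time), but this only inflates the total sample count by a factor of $O(\log n)$, which is absorbed into the $\poly(\log n, 1/\epsilon)$ bound. With that adjustment, every conditional query set is one of $U_0, \ldots, U_{\log n}$, chosen obliviously to $I$, so the algorithm is non-adaptive and its queries are independent of $I$ as asserted.
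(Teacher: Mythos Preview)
Your proposal is correct and follows essentially the same approach as the paper's own proof: both observe that the only oracle queries are the unconditional samples and conditional samples on the sets $U_0,\ldots,U_{\log n}$, that these sets are generated from internal randomness independently of $I$, and that by drawing sufficiently many samples from every $U_k$ up front (at the cost of a $\log n$ factor) all remaining steps become post-processing. Your write-up is somewhat more detailed in spelling out the required pool size and the post-processing steps, but the argument is the same.
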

\begin{proof}
  First, note that the algorithm samples elements from $\mu$ at three places. Initially, it samples
  unconditionally from $\mu$ in Step \ref{step:uncond-samples},  and then it performs conditional
  samples from the sets $U_k$ in Steps \ref{step:sample-collision} and \ref{step:sample-close}. In
  Steps \ref{step:sample-collision} and \ref{step:sample-close}, the samples are conditioned on sets
  $U_k$, where $k$ depends on $I$. However, observe that we can sample from all sets $U_k$, for all
  $0 \le k \le \log n$, at the beginning,  and then use just the samples from the appropriate $U_k$
  at Steps \ref{step:sample-collision} and \ref{step:sample-close}.  This only increases the bound
  on the number of samples by a factor of $\log n$. Thus we have only non-adaptive queries, all of
  which are made at the start of the algorithm, independently of $I$.
\end{proof}

The following lemma is used in Step \ref{step:bf-test} of our algorithm.

\begin{lemma}
  Let $\mu$ be a distribution supported over $[n]$ and $I \subseteq [n]$ be an interval such that
  $\mu(I) \ge \gamma$. Using $t = \frac{4(|I| + \log(2/\delta))}{\epsilon^2 \gamma} $ unconditional
  queries to $\mu$, we can construct a distribution $\mu'$ over $I$ such that, with probability at
  least $1 - \delta$, $d( \restrict{\mu}{I}, \mu') \le \epsilon$ (in other cases $\mu'$ may be arbitrary).
  \label{lem:brute-force}
\end{lemma}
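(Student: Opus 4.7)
The plan is to reduce this to the unconditional learning lemma (Lemma~\ref{lem:learn-dist}) applied to the conditional distribution $\restrict{\mu}{I}$. The key observation is that if we draw $t$ unconditional samples from $\mu$ and retain only those landing in $I$, then conditioned on a sample being in $I$, it is distributed exactly as $\restrict{\mu}{I}$, and samples in $I$ are mutually independent. So these retained samples are i.i.d.\ draws from $\restrict{\mu}{I}$, and we can feed them into the folklore learning procedure.

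First I would set $t = \frac{4(|I|+\log(2/\delta))}{\epsilon^2 \gamma}$ as prescribed and let $X$ be the number of samples out of the $t$ that fall in $I$. Since $\mu(I)\geq \gamma$, we have $E[X]\geq \gamma t = \frac{4(|I|+\log(2/\delta))}{\epsilon^2}$. By the lower-tail Chernoff bound of Lemma~\ref{lem:chernoff} with $\delta=1/2$, the probability that $X < \gamma t / 2$ is at most $\exp(-\gamma t/8) \leq \delta/2$ (for the small range of $\epsilon$ where this is meaningful; otherwise the statement is vacuous and $\mu'$ can be taken to be anything, e.g.\ uniform on $I$).

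Conditioned on $X\geq \gamma t/2$, I would apply Lemma~\ref{lem:learn-dist} to the subdomain $I$ using the $X$ retained samples, with error parameter $\delta/2$. That lemma requires $\frac{|I|+\log(4/\delta)}{2\epsilon^2}$ samples to guarantee an $\epsilon$-close estimator of $\restrict{\mu}{I}$, and indeed
\[
  \gamma t/2 \;=\; \frac{2(|I|+\log(2/\delta))}{\epsilon^2} \;\geq\; \frac{|I|+\log(4/\delta)}{2\epsilon^2},
\]
so we have enough samples. (If more samples are available, simply use the first $\lceil (|I|+\log(4/\delta))/2\epsilon^2 \rceil$ of them.) The output is the empirical distribution $\mu'$ on $I$, and the lemma guarantees $d(\restrict{\mu}{I},\mu')\leq \epsilon$ with probability at least $1-\delta/2$ conditional on having enough samples.

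A union bound over the Chernoff failure and the learning failure yields the total error of at most $\delta$, proving the claim. No step here is really an obstacle; the only subtlety is to carefully verify that the retained samples are genuinely i.i.d.\ from $\restrict{\mu}{I}$ (which follows from the memorylessness of the independent draws from $\mu$ combined with conditioning on membership in $I$), and that the slack factor of $4$ in the definition of $t$ accounts simultaneously for the Chernoff loss of a factor of $2$ and for splitting the error budget in half.
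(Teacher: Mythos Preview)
Your proposal is correct and follows essentially the same approach as the paper: keep the samples that fall in $I$, use a concentration bound to argue that with probability at least $1-\delta/2$ there are at least $t\gamma/2$ of them, then invoke Lemma~\ref{lem:learn-dist} on those retained samples (which are i.i.d.\ from $\restrict{\mu}{I}$) with error budget $\delta/2$. The only cosmetic difference is that the paper phrases the concentration step as ``Hoeffding bounds'' while you cite the Chernoff bound of Lemma~\ref{lem:chernoff}; the resulting exponent and the subsequent union bound are the same.
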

\begin{proof}
  Take $t = \frac{4(|I| + \log(2/\delta))}{\epsilon^2 \gamma}$ unconditional samples. Let $t_I$ be
  the number of samples that belong to $I$. Then, $E[t_I] = t\mu(I) \ge t\gamma$.  Therefore, by
  Hoeffding bounds, with probability at least $1 - \exp(-t\mu(I)/4)$, $ t_I \ge t\mu(I)/2 \ge
  t\gamma/2$.  

  The $t_I$ samples are distributed according to $\restrict{\mu}{I}$. By the choice of $t$, with
  probability at least $1 - \delta/2$, $t_I \ge 2(|I| + \log(2/\delta))/\epsilon^2$. Therefore, by
  Lemma \ref{lem:learn-dist}, we can obtain a distribution $\mu'$, supported over $I$, such that
  with probability at least $1 - \delta$, $d(\restrict{\mu}{I}, \mu') \le \epsilon$.
  
  If we did not obtain sufficiently many samples (either because $\mu(I)<\gamma$ or due to a low probability event) then we just output an arbitrary distribution supported on $I$.
\end{proof}

\begin{lemma}[Completeness]
  If $\mu(I)\ge\gamma$ and $\bias(\restrict{\mu}{I}) \le \epsilon/100$, then
  Algorithm \ref{alg:weakly-tolerant} accepts with probability at least $2/3$.
  \label{lem:wt-complete}
\end{lemma}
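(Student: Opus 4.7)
The plan is to split into the two algorithmic branches and show that each step that could reject actually accepts with high probability when $\bias(\restrict{\mu}{I})\le\epsilon/100$. Recall that by Observation \ref{obs:norms} the small bias hypothesis gives $d(\restrict{\mu}{I},\mathcal{U}_I)\le\bias(\restrict{\mu}{I})\le\epsilon/100$, and moreover the same bias bound is inherited by $\restrict{\mu}{I\cap U_k}$ for every $U_k$ (since the maximum and minimum values over the subset are only closer in ratio than those over $I$). These two facts will be used throughout.

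For the small-set branch ($|I|\le\log^{10}n$), the number $t$ of unconditional samples in Step \ref{step:uncond-samples} is exactly what Lemma \ref{lem:brute-force} requires (for some fixed confidence, say $11/12$) to produce a $\mu'$ with $d(\restrict{\mu}{I},\mu')\le\epsilon/4$. Combined with $d(\restrict{\mu}{I},\mathcal{U}_I)\le\epsilon/100$ and the triangle inequality, this gives $d(\mu',\mathcal{U}_I)\le\epsilon/2$, so Step \ref{step:small-set} accepts.

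For the large-set branch ($|I|>\log^{10}n$), I would handle the three possible rejection points in turn. First, for the collision loop of Step \ref{step:loop-collision}: using near-uniformity of $\restrict{\mu}{I}$, a single sample from $\restrict{\mu}{U_k}$ lands inside $I\cap U_k$ with some probability $q\le 1$, and conditioned on landing in $I\cap U_k$, it is almost uniformly distributed there with $|I\cap U_k|\ge\log^8 n$ support points. Hence the collision probability of any fixed pair inside $I\cap U_k$ is $O(1/\log^8 n)$, and over $\binom{\log^3 n}{2}\le\log^6 n$ pairs and $O(\log n)$ values of $k$, a union bound gives total collision probability $o(1)$, so Step \ref{step:wt-collision} does not reject. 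Second, for Step \ref{step:wt-too-few}: by the choice of $k$ in Step \ref{step:choose-k} and a Chernoff bound on the independent inclusions defining $U_k$, we have $|I\cap U_k|\le 2|I|p_k$ with high probability; similarly, the number of samples in $I\cap U_k$ concentrates around $m_k\cdot\mu(I\cap U_k)/\mu(U_k)$, which using near-uniformity on $I$ and $\mu(I)\ge\gamma$ is comfortably above $\gamma m_k/40$, so this step too does not reject.

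Finally, for Step \ref{step:final-answer}: conditioned on the previous steps passing, Lemma \ref{lem:learn-infty} applied with the samples falling in $I\cap U_k$ (whose count we have lower bounded by $\Omega(\log^{16}n\log\log n/\epsilon^2)$, enough to learn in $\ell_\infty$ up to $\epsilon/(80|I\cap U_k|)$ since $|I\cap U_k|=O(\log^8 n)$) gives $\mu'$ with $\|\mu'-\restrict{\mu}{I\cap U_k}\|_\infty\le\epsilon/(80|I\cap U_k|)$. Using $\bias(\restrict{\mu}{I\cap U_k})\le\epsilon/100$ together with Observation \ref{obs:norms}, we get $\|\restrict{\mu}{I\cap U_k}-\mathcal{U}_{I\cap U_k}\|_\infty\le\epsilon/(100|I\cap U_k|)$, and the triangle inequality yields $\|\mu'-\mathcal{U}_{I\cap U_k}\|_\infty\le 3\epsilon/(80|I\cap U_k|)$, so the final step accepts. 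A union bound over all these high-probability events finishes the proof. The main obstacle I anticipate is the bookkeeping in the collision-loop analysis, especially justifying that the per-pair collision probability in $I\cap U_k$ is genuinely $O(1/|I\cap U_k|)$ rather than being inflated by the (arbitrary) behaviour of $\mu$ outside $I$; this is handled by conditioning on falling in $I\cap U_k$ and using only the internal near-uniformity of $\restrict{\mu}{I}$.
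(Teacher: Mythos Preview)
Your approach mirrors the paper's proof closely and is correct in its overall structure: the case split on $|I|$, the collision analysis via near-uniformity of $\restrict{\mu}{I\cap U_k}$ (your ``condition on landing in $I\cap U_k$'' is exactly the paper's use of $\mu(U_k)\ge\mu(I\cap U_k)$), the Chernoff bound for $|I\cap U_k|$, and the $\ell_\infty$ learning at the end all match.

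There is, however, one missing ingredient in your treatment of Step~\ref{step:wt-too-few}. You write that the number of samples in $I\cap U_k$ concentrates around $m_k\cdot\mu(I\cap U_k)/\mu(U_k)$, and that ``using near-uniformity on $I$ and $\mu(I)\ge\gamma$'' this is comfortably above $\gamma m_k/40$. But near-uniformity on $I$ together with $\mu(I)\ge\gamma$ only lets you lower-bound the numerator $\mu(I\cap U_k)\gtrsim p_k\gamma$; it says nothing about the denominator $\mu(U_k)$, which depends on exactly the arbitrary behaviour of $\mu$ outside $I$ that you correctly flagged for the collision step. If $U_k$ happens to pick up heavy points of $\mu$ outside $I$, then $\mu(U_k)$ could be of order $1$ and the ratio arbitrarily small.

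The paper closes this gap with a Markov bound: since $E[\mu(U_k)]=p_k$ (expectation over the random choice of $U_k$), with probability at least $9/10$ one has $\mu(U_k)\le 10p_k$, and then $\mu(I\cap U_k)/\mu(U_k)\ge\gamma/30$. A Chernoff-type concentration is not available here, because individual weights $\mu(i)$ for $i\notin I$ are unbounded. Once you insert this Markov step your argument is complete and coincides with the paper's.
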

\begin{proof}
  First note that if $|I| \le \log^{10} n$, then we use Lemma \ref{lem:brute-force} to test the
  distance of $\restrict{\mu}{I}$ to uniform with probability at least $9/10$ in Step
  \ref{step:small-set}. For the remaining part of the proof, we will assume that $|I| > \log^{10}
  n$.

  For a set $U_k$ chosen by the algorithm, and any $i \in I \cap U_k$, the probability that it is
  sampled twice in Step \ref{step:wt-collision} is at most $\binom{\log^3 n}{2}\left(
  \tfrac{\mu(i)}{\mu(U_k)} \right)^2$. Since $\mu(U_k) \ge \mu(I \cap U_k)$, the probability of
  sampling twice in Step \ref{step:wt-collision} is at most $\binom{\log^3 n}{2}\left(
  \tfrac{\mu(i)}{\mu(I \cap U_k)} \right)^2$. By Observation \ref{obs:norms} $\bias(\restrict{\mu}{I}) \le \epsilon/100$ implies $\lVert \restrict{\mu}{I} - \mathcal{U}_I
  \rVert_\infty \le \frac{\epsilon}{100|I|}$, so we have
  \begin{align}
    \frac{\mu(I)}{|I|}\left( 1 - \frac{\epsilon}{100} \right) \le \mu(i) \le \frac{\mu(I)}{|I|}
    \left( 1 + \frac{\epsilon}{100} \right).
    \label{eqn:individual-prob}
  \end{align}
   From Equation \ref{eqn:individual-prob} we get the following for all $U_k$.
  \begin{align}
    \frac{|I\cap U_k| \mu(I)}{|I|} \left( 1 - \frac{\epsilon}{100} \right) \le \mu(I \cap U_k) \le
    \frac{|I\cap U_k| \mu(I)}{|I|} \left( 1 + \frac{\epsilon}{100} \right).
    \label{eqn:total-probability}
  \end{align}
   Therefore, the probability that the algorithm
  samples the same element in $I \cap U_k$ at Step \ref{step:wt-collision} twice is bounded as
  follows.
  \begin{align*}
    \sum_{i \in I \cap U_k} \binom{\log^3 n}{2} \left( \frac{\mu(i)}{\mu(I \cap U_k)} \right)^2 &
    \le |I \cap U_k| \binom{\log^3 n}{2} \frac{\max_{i \in I\cap U_k}\mu(i)^2}{\mu(I \cap U_k)^2}\\
    &\le \frac{1}{|I \cap U_k|} \binom{\log^3 n}{2} \left( \frac{1 + \epsilon/100}{1 - \epsilon/100}
    \right)^2
  \end{align*}

  Since $|I \cap U_k| \ge \log^8 n$ for the $k$ chosen in Step \ref{step:loop-collision}, we can
  bound the sum as follows.
  \begin{align*}
      \sum_{i \in I \cap U_k} \binom{\log^3 n}{2} \left( \frac{\mu(i)}{\mu(I \cap U_k)} \right)^2
      \le \frac{1}{\log^2 n} \left( \frac{1 + \epsilon/100}{1 - \epsilon/100} \right)^2.
  \end{align*}
  Therefore, with probability at least $1 - o(1)$, the algorithm does not reject at Step
  \ref{step:wt-collision}.

  To show that the algorithm accepts with probability at least $2/3$ in Step
  \ref{step:final-answer}, we proceed as follows.  Combining Equations \ref{eqn:individual-prob} and
  \ref{eqn:total-probability}, we get the following.
  \begin{align*}
    \frac{1}{|I\cap U_k|} \left( \frac{1 - \epsilon/100}{1 + \epsilon/100} \right) \le
    \restrict{\mu}{I\cap U_k}(i) \le \frac{1}{|I \cap U_k|}\left( \frac{1 + \epsilon/100}{1 -
    \epsilon/100} \right)
  \end{align*}
  From this it follows that $\lVert \restrict{\mu}{I \cap U_k} - \mathcal{U}_{I \cap U_k}
  \rVert_\infty \le \frac{\epsilon}{40|I \cap U_k|}$. 
 
  We now argue that in this case, the test does not reject at Step \ref{step:wt-too-few}, for the
  $k$ chosen in Step \ref{step:choose-k}. Observe that $E[\mu(I \cap U_k)] 
   \ge p_k \gamma$.
  Also, the expected size of the set $I \cap U_k$ is $p_k|I|$. Since the $k$ chosen in Step
  \ref{step:choose-k} is such that $|I|p_k \ge \tfrac{2}{3}\log^8 n$, with probability at least $1 -
  \exp(-O(\log^8 n))$, $ p_k |I| /2 \le |I \cap U_k| \le 2p_k |I|$ (and in particular Step
  \ref{step:wt-too-few} does not reject). Therefore from Equation \ref{eqn:total-probability}, we
  get that, with probability at least $1 - \exp(-O(\log^8 n))$, $\mu(I \cap U_k) \ge p_k \gamma/3$.
  Since $E[\mu(U_k)] = p_k$, we can conclude using Markov's inequality that, with probability at
  least $9/10$, $\mu(U_k) \le 10p_k$. The expected number of samples from $I \cap U_k$ among the
  $m_k$ samples used in Step $17$ is $m_k \mu(I \cap U_k)/\mu(U_k)$. Therefore, with probability at
  least $9/10$, the expected number of samples from $I \cap U_k$ among the $m_k$ samples is at least
  $m_k \gamma/30$. Therefore, with probability, at least $9/10 - o(1)$, at least $m_k \gamma/40$
  elements of $I \cap U_k$ are sampled, and the tester does not reject at Step
  \ref{step:wt-too-few}. The indexes that are sampled in Step \ref{step:sample-close} that lie in $I
  \cap U_k$ are distributed according to $\restrict{\mu}{I \cap U_k}$ and we know that $|I \cap U_k|
  \le 2|I|p_k \le \tfrac{8}{3}\log^8 n$. Therefore, with probability at least $9/10$, we get a
  distribution $\mu'$ such that $\lVert \mu' - \restrict{\mu}{I \cap U_k} \rVert_\infty \le
  \tfrac{\epsilon}{80|I \cap U_k|}$ in Step \ref{step:learn-infty}.

  Therefore, the test correctly accepts in Step \ref{step:final-answer} for the $k$ chosen in Step
  \ref{step:choose-k}.
\end{proof}

Now we prove the soundness of the tester mentioned above. First we state a lemma from Chakraborty et
al \cite{ChakrabortyFGM13}.
\begin{lemma}[\cite{ChakrabortyFGM13}, adapted for intervals]
  Let $\mu$ be a distribution, and $I\subseteq [n]$ be an interval such that
  $d(\restrict{\mu}{I}, \mathcal{U}_I) \ge \epsilon$. Then the following two conditions hold.
  \begin{enumerate}
   \item There exists a set $B_1 = \left\{ i \in I \mid \restrict{\mu}{I}(i) < \frac{1 +
     \epsilon/3}{|I|} \right\}$ such that $|B_1| \ge \epsilon |I|/2$.
   \item  There exists an index $j \in \left\{3, \ldots, \frac{\log |I|}{\log(1 + \epsilon/3)}
   \right\}$, and a set $B_j$ of cardinality at least $\frac{\epsilon^2 |I|}{96\left( 1 + \epsilon/3
   \right)^j \log |I|}$, such that $\frac{(1 + \epsilon/3)^{j - 1}}{|I|} \le \restrict{\mu}{I}(i) <
   \frac{(1 + \epsilon/3)^j}{|I|}$ for all $i \in B_j$.
  \end{enumerate}                                
  \label{lem:bucketing}                          
\end{lemma}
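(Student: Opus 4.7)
The proof is a standard bucketing argument based on the decomposition of the $\ell_1$ distance into its ``excess'' and ``defect'' parts relative to uniform. Since $\restrict{\mu}{I}$ and $\mathcal{U}_I$ both sum to $1$, we have $d(\restrict{\mu}{I},\mathcal{U}_I) = 2\sum_{i:\restrict{\mu}{I}(i) > 1/|I|}(\restrict{\mu}{I}(i) - 1/|I|) = 2\sum_{i:\restrict{\mu}{I}(i) < 1/|I|}(1/|I| - \restrict{\mu}{I}(i))$, so both the excess and the defect are at least $\epsilon/2$ under the hypothesis $d \ge \epsilon$. Part~1 will drop out of the defect side, and part~2 from a pigeonhole across multiplicative buckets of the excess side.

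For part~1, I would let $A^- = \{i \in I : \restrict{\mu}{I}(i) < 1/|I|\}$, which is contained in $B_1$. Each term in the defect sum is bounded above by $1/|I|$, so the lower bound $\sum_{A^-}(1/|I| - \restrict{\mu}{I}(i)) \ge \epsilon/2$ immediately yields $|A^-|/|I| \ge \epsilon/2$, and hence $|B_1| \ge \epsilon|I|/2$.

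For part~2, I would slice the above-uniform region into a ``near-uniform'' slab $C_0 = \{i : 1/|I| \le \restrict{\mu}{I}(i) < (1+\epsilon/3)/|I|\}$ and the multiplicative buckets $B_j = \{i : (1+\epsilon/3)^{j-1}/|I| \le \restrict{\mu}{I}(i) < (1+\epsilon/3)^j/|I|\}$ for $j \ge 2$, up through the index $J = \lceil\log|I|/\log(1+\epsilon/3)\rceil = O(\log|I|/\epsilon)$. The contribution of $C_0$ to the total excess is at most $|C_0|\cdot(\epsilon/3)/|I| \le \epsilon/3$, so the remaining $\ge \epsilon/6$ of excess is distributed over the buckets $B_j$. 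By pigeonhole, some $B_j$ absorbs excess at least $\epsilon/(6(J-1)) = \Omega(\epsilon^2/\log|I|)$, and dividing by the per-element upper bound $(1+\epsilon/3)^j/|I|$ gives $|B_j| \ge \epsilon^2|I|/(O(\log|I|)\cdot(1+\epsilon/3)^j)$, matching the stated form of the bound (with the constant $96$ absorbing the bookkeeping slack).

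The delicate point I expect to wrestle with is forcing the pigeonhole index to satisfy $j \ge 3$ rather than $j \ge 2$: the argument above naturally yields $j \ge 2$, and indeed one can exhibit distributions where all of the excess sits inside $B_2$ (so every $B_j$ with $j \ge 3$ is empty). The natural fix is to fold $B_2$ into the near-uniform slab, controlling the combined contribution of $C_0 \cup B_2$ by a coarser bound while paying a constant factor in the final conclusion; this is consistent with how the lemma is used inside the soundness analysis of Algorithm~\ref{alg:weakly-tolerant}, where condition~1 and condition~2 are exploited in complementary regimes (the $\ell_\infty$ subroutine handling the case that defect is spread thinly, and the collision subroutine handling the case that some high-probability bucket exists). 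The constant $96$ in the statement is the margin that lets this re-slicing go through without affecting the asymptotic form of the bound.
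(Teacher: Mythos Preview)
The paper does not prove this lemma; it is quoted from \cite{ChakrabortyFGM13} without argument. Your bucketing approach is the standard one, and your arguments for part~1 and for the pigeonhole step in part~2 are correct.

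Your instinct about the restriction $j \ge 3$ is well founded, but the proposed fix of folding $B_2$ into the near-uniform slab does not close the gap: the excess carried by $C_0 \cup B_2$ is bounded only by $((1+\epsilon/3)^2 - 1) = 2\epsilon/3 + \epsilon^2/9$, which already exceeds the total available excess $\epsilon/2$, so nothing is forced into the buckets with $j \ge 3$. Indeed, take $\tfrac{3}{4}|I|$ elements of mass $(1+2\epsilon/3)/|I|$ and $\tfrac{1}{4}|I|$ elements of mass $(1-2\epsilon)/|I|$; then $d(\restrict{\mu}{I},\mathcal{U}_I)=\epsilon$, yet every above-uniform element lies in $B_2$ (since $(1+\epsilon/3)^2 > 1+2\epsilon/3$) and all $B_j$ with $j\ge 3$ are empty. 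So the statement as literally written appears to need $j \ge 2$. This does not damage its use in Lemma~\ref{lem:wt-sound}: your part-1 argument in fact produces the stronger set $A^- = \{i : \restrict{\mu}{I}(i) < 1/|I|\}$ of size at least $\epsilon|I|/2$, and pairing any $i' \in A^-$ with any $i \in B_j$ for $j \ge 2$ already gives the $(1+\epsilon/3)$-factor separation that the soundness proof exploits.
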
                                      
                                                 
Now we analyze the case where $d(\restrict{\mu}{I }, \mathcal{U}_{I}) > \epsilon$.
                                                 
\begin{lemma}[Soundness]
  Let $\mu$ be a distribution supported on $[n]$, and let $I \subseteq [n]$ be an interval such that
  $\mu(I) \ge \gamma$. If $d(\restrict{\mu}{I}, \mathcal{U}_I) \ge \epsilon$, then Algorithm
  \ref{alg:weakly-tolerant} rejects with probability at least $2/3$.
  \label{lem:wt-sound}
\end{lemma}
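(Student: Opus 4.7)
\medskip
\noindent
\textbf{Proof plan.} The plan is to mirror the structure of the completeness proof (Lemma \ref{lem:wt-complete}), splitting on $|I|$, and then within the large-$|I|$ case do a further split according to how the non-uniformity of $\restrict{\mu}{I}$ is distributed, using the bucketing decomposition of Lemma \ref{lem:bucketing}. Throughout, the basic strategy is: first condition on the high-probability ``typical'' behaviour of the random sets $U_k$ (namely that $|I\cap U_k|\approx p_k|I|$ and $\mu(I\cap U_k)\approx p_k\mu(I)$, which holds by Chernoff bounds together with $\mu(I)\ge\gamma$ and the condition $|I|p_k\ge\frac23\log^8 n$), and then show that at least one of the algorithm's rejection conditions is triggered with constant probability.

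\medskip
\noindent
If $|I|\le\log^{10}n$, then Step \ref{step:small-set} is executed. The assumption $\mu(I)\ge\gamma$ together with the choice of $t$ in Step \ref{step:uncond-samples} means that Lemma \ref{lem:brute-force} applies (with a suitably adjusted constant in the accuracy parameter), producing a distribution $\mu'$ satisfying $d(\mu',\restrict{\mu}{I})\le\epsilon/4$ with probability $9/10$. The triangle inequality then gives $d(\mu',\mathcal{U}_I)\ge\epsilon-\epsilon/4>\epsilon/2$, forcing rejection.

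\medskip
\noindent
If $|I|>\log^{10}n$, apply Lemma \ref{lem:bucketing} to obtain some $j^*\in\{3,\ldots,\log|I|/\log(1+\epsilon/3)\}$ together with a set $B_{j^*}$ whose elements satisfy $\restrict{\mu}{I}(i)\ge(1+\epsilon/3)^{j^*-1}/|I|$ and which has size $|B_{j^*}|\ge \epsilon^2|I|/(96(1+\epsilon/3)^{j^*}\log|I|)$. I would then split on the magnitude of $(1+\epsilon/3)^{j^*}$ relative to a polylogarithmic threshold. In the \emph{mild} regime, where $(1+\epsilon/3)^{j^*}$ is polylogarithmic in $n$, the bucket size $|B_{j^*}|$ is large enough that, for the $k$ chosen in Step \ref{step:choose-k}, a straightforward Chernoff calculation gives $|B_{j^*}\cap U_k|\ge 1$ with probability $\ge 9/10$. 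Any such surviving $i$ satisfies $\restrict{\mu}{I\cap U_k}(i)\ge (1+\epsilon/3)^{j^*-1}(1-o(1))/|I\cap U_k|\ge (1+2\epsilon/3)/|I\cap U_k|$, so $\|\restrict{\mu}{I\cap U_k}-\mathcal{U}_{I\cap U_k}\|_\infty\ge 2\epsilon/(3|I\cap U_k|)$. Combining this with the $\ell_\infty$-learning accuracy of $\epsilon/(80|I\cap U_k|)$ from Step \ref{step:learn-infty} yields $\|\mu'-\mathcal{U}_{I\cap U_k}\|_\infty>3\epsilon/(80|I\cap U_k|)$, so Step \ref{step:final-answer} rejects. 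In the \emph{heavy} regime, where $(1+\epsilon/3)^{j^*}$ exceeds the threshold, the individual elements of $B_{j^*}$ are so heavy that the collision test of Steps \ref{step:loop-collision}--\ref{step:wt-collision} triggers. Specifically, I would pick $k^*$ inside the loop range (so $p_{k^*}\ge 2\log^8n/|I|$ and $|I\cap U_{k^*}|\ge \log^8 n$) with $p_{k^*}$ roughly $\Theta(|I|/((1+\epsilon/3)^{j^*}|B_{j^*}|))$ scaled with $\gamma$ and a polylog factor, and then lower bound the collision probability of $\log^3 n$ samples from $\restrict{\mu}{U_{k^*}}$ by $\binom{\log^3 n}{2}\sum_{i\in B_{j^*}\cap U_{k^*}}(\mu(i)/\mu(U_{k^*}))^2=\Omega(1)$; verification that such a $k^*$ exists in the discrete grid $\{2^{-k}\}$ follows from the assumed heaviness.

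\medskip
\noindent
The main obstacle I expect is the careful bookkeeping in the heavy regime: one must verify that the desired $k^*$ lies within the range explicitly looped over in Step \ref{step:loop-collision} for every allowable value of $j^*$, simultaneously meeting the lower bound on $p_{k^*}$ (which comes from the loop condition), an upper bound (which comes from demanding $|B_{j^*}\cap U_{k^*}|$ is not so large that $\mu(U_{k^*})$ dilutes the collision signal), and the factor of $\gamma$ in the complexity. The remaining work is a union bound over the $O(\log n)$ concentration events (size and mass of $U_k$, collision-versus-no-collision inside $B_{j^*}\cap U_{k^*}$, and the $\ell_\infty$-learning event in Step \ref{step:learn-infty}), each of which fails with probability at most $1/10$, yielding total error below $1/3$.
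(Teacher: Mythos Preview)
Your overall architecture matches the paper's: brute-force for $|I|\le\log^{10}n$, and for large $|I|$ apply Lemma~\ref{lem:bucketing} and split into a mild regime (handled by the $\ell_\infty$-learning in Step~\ref{step:final-answer}) and a heavy regime (handled by the collision test). The heavy-regime plan, while vague on the exact choice of $k^*$, is the right mechanism and is essentially what the paper does.

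There is, however, a real gap in your mild-regime argument. You write that a surviving $i\in B_{j^*}\cap U_k$ satisfies $\restrict{\mu}{I\cap U_k}(i)\ge (1+\epsilon/3)^{j^*-1}(1-o(1))/|I\cap U_k|$, and you justify this by ``conditioning on the typical behaviour \dots $\mu(I\cap U_k)\approx p_k\mu(I)$, which holds by Chernoff bounds.'' But in the soundness setting there is no reason $\mu(I\cap U_k)$ concentrates around $p_k\mu(I)$: the summands $\mu(i)\mathbb{1}[i\in U_k]$ are not uniformly small, since $\restrict{\mu}{I}$ is far from uniform and may contain a few individual elements of weight comparable to $\mu(I)$. (In the completeness proof, the analogous concentration---Equation~\eqref{eqn:total-probability}---explicitly used $\bias(\restrict{\mu}{I})\le\epsilon/100$.) A Markov upper bound gives only $\mu(I\cap U_k)\le 10p_k\mu(I)$ with probability $9/10$, which yields $\restrict{\mu}{I\cap U_k}(i)\gtrsim (1+\epsilon/3)^{j^*-1}/(10|I\cap U_k|)$; for small $j^*$ this is below $1/|I\cap U_k|$ and says nothing about $\ell_\infty$ deviation.

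The paper sidesteps this by using \emph{both} parts of Lemma~\ref{lem:bucketing}: it places a heavy element $i\in B_j$ and a light element $i'\in B_1$ into $U_k$ (each survives with high probability since $|B_1|\ge\epsilon|I|/2$ and $|B_j|$ is large in the mild regime), and then observes that the ratio $\restrict{\mu}{I}(i)/\restrict{\mu}{I}(i')\ge(1+\epsilon/3)$ is preserved under restriction to $I\cap U_k$, regardless of what $\mu(I\cap U_k)$ turns out to be. A ratio of at least $1+\epsilon/3$ between two point masses forces $\lVert\restrict{\mu}{I\cap U_k}-\mathcal{U}_{I\cap U_k}\rVert_\infty\ge\epsilon/(20|I\cap U_k|)$, which is what Step~\ref{step:final-answer} detects. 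You should replace your single-element argument with this two-element ratio argument.
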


\begin{proof}
  Observe that when $|I| \le \log^{10} n$, the algorithm rejects with probability at least $9/10$ in
  Step \ref{step:small-set}. For the remainder of the proof, we will assume that $|I| > \log^{10}
  n$.  We analyze two cases according to the value of $j$ given by Lemma \ref{lem:bucketing}.

\begin{enumerate}
  \item Suppose that $j > 2$ is such that $\left| B_j \right| \ge \frac{\epsilon^2 |I|}{96\left( 1 +
    \epsilon/3 \right)^j \log |I|}$, and $(1 + \epsilon/3)^j \le \log^6 n$.  The expected number of
    elements from this set that is chosen in $U_k$ is at least $\frac{\epsilon^2 |I| p_k}{96(1 +
    \epsilon/3)^j \log|I|}$. 
    For the choice of $k$ made in Step \ref{step:choose-k}, we have $|I|p_k \ge \tfrac{2}{3}\log^8
    n$. The probability that no index from $B_j$ is chosen in $U_k$ is $(1 - p_k)^{|B_j|}$ which is
    at most $( 1 - \tfrac{2 \log^8 n}{3|I|})^{\epsilon^2 |I|/(1 + \epsilon)^j \log|I|}$. Since
    $(1 + \epsilon/3)^j \le \log^6 n$, this is at most $\exp\left( - \tfrac{\epsilon^2
    \log n}{144} \right)$. Therefore, with probability $1 - o(1)$, at least one element $i$ is chosen
    from $B_j$.
    
    Since $|B_1| \ge \epsilon |I|/2$, the probability that no element from $B_1$ is chosen in $U_k$
    is at most $(1 - p_k)^{\epsilon |I|/2}$. Substituting for $p_k$, we can conclude that, with
    probability $1 - o(1)$, at least one element $i'$ is chosen from the set $B_1$.

    Now, $\restrict{\mu}{I}(i) \ge (1 + \epsilon/3)\restrict{\mu}{I}(i')$.  Hence,
    $\restrict{\mu}{I \cap U_k}(i) \ge (1 + \epsilon/3) \restrict{\mu}{I \cap U_k}(i')$. This
    implies that $\lVert \restrict{\mu}{I \cap U_k} - \mathcal{U}_{I \cap U_k} \rVert_\infty \ge
    \frac{\epsilon}{20|I \cap U_k|}$. The algorithm will reject with high probability in Step
    \ref{step:final-answer}, unless it has already rejected in Step \ref{step:wt-too-few}.
  
  \item Now, we consider the case where $j > 2$ is such that $\left| B_j \right| \ge
    \frac{\epsilon^2 |I|}{96\left( 1 + \epsilon/3 \right)^j \log |I|}$, and $(1 + \epsilon/3)^j >
    \log^6 n$. Let $k = \max\left\{ 0, \lfloor \log\left( \frac{|I|}{4(1 + \epsilon/3)^j \log^2 n}
  \right) \rfloor \right\}$. Then, for this value of $k$, $p_k \ge \min \left\{ 1, \frac{2(1 +
    \epsilon/3)^j \log^2 n}{|I|} \right\}$. Also, for this value of $k$, $p_k \le \min\left\{ 1,
      \tfrac{4(1 + \epsilon/3)^j\log^2 n}{|I|}\right\}$. With probability at least $1 -
      \exp(-O(\log^5 n))$, $|U_k \cap I| \ge \log^8 n$, for this value of $k$.

    Furthermore, the probability that $B_j \cap U_k$ is empty is $(1 - p_k)^{|B_j|}$. Substituting
    the values of $|B_j|$ and $p_k$, we get that $\Pr[B_j \cap U_k = \emptyset] \le \exp(-\epsilon^2
    \log n/48)$. Therefore, with probability at least $1 - \exp(-\epsilon^2 \log n/48)$, $U_k$
    contains an element of $B_j$.

    Let $i \in B_j \cap U_k$. Since $i \in  B_j$, from Lemma \ref{lem:bucketing} we know that
    $\restrict{\mu}{I}(i) = \frac{\mu(i)}{\mu(I)} \ge \frac{(1 + \epsilon)^{j-1}}{|I|}$.  From this
    bound, we get that $\restrict{\mu}{U_k}(i) \ge \frac{(1 + \epsilon)^{j-1} \mu(I)}{|I|
    \mu(U_k)}$. The expected value of $\mu(U_k)$ is $p_k$. By Markov's inequality, with probability
    at least $9/10$, $\mu(U_k) \le 10p_k$. Therefore, $\restrict{\mu}{U_k}(i) \ge
    \frac{(1+\epsilon)^{j - 1} \gamma}{10|I|p_k} \ge \frac{\gamma}{40(1 + \epsilon/3)\log^2 n}$. The
    probability that $i$ is sampled at most once in this case is at most $\log^3 (n)\left( 1 -
    \frac{\gamma}{40(1 + \epsilon/3)\log^2 n} \right)^{\log^3 (n)-1}$. Therefore, with probability
    at least $1 - o(1)$, $i$ is sampled at least twice and the tester rejects at Step
    \ref{step:wt-collision}.
\end{enumerate}
\end{proof}

\begin{proof}[Proof of Lemma \ref{lem:nonadaptive-uniformity-interval}]
 Given the input values $(\mu,I,m,\gamma,\epsilon,\delta)$, we iterate Algorithm \ref{alg:weakly-tolerant} $O(1/\delta)$ independent times with input values $(\mu,I,\gamma,\epsilon)$ (we may ignore $m$ here), and take the majority vote. The sample complexity is evident from the description of the algorithm. If indeed $\mu(I)\ge\gamma$ then Lemma \ref{lem:wt-complete} and Lemma \ref{lem:wt-sound} provide that every round gives the correct answer with probability at least $2/3$, making the majority vote correct with probability at least $1-\delta$.
\end{proof}


\section{Introducing properties characterized by atlases}\label{sec:atlas-int}

In this section, we give a testing algorithm for properties characterized by atlases, which we
formally define next. We will show in
the next subsection that distributions that are $L$-decomposable are, in particular,
characterized by atlases. First we start with the definition of an inventory.

\begin{definition}[inventory]
	Given an interval $I=[a,b]\subseteq [n]$ and a real-valued function $\nu:[a,b]\to [0,1]$, the {\em
		inventory} of $\nu$ over $[a,b]$ is the multiset $M$ corresponding to $(\nu(a),\ldots,\nu(b))$. 
\end{definition}

That is, we keep count of the function values over the interval including repetitions, but ignore
their order. In particular, for a distribution $\mu=(p_1,\dots,p_n)$ over $[n]$, the inventory
of $\mu$ over $[a,b]$ is the multiset $M$ corresponding to $(p_a,\dots,p_b)$.


\begin{definition}[atlas]
	Given a distribution $\mu$ over $[n]$, and an interval partition $\mathcal I=(I_1,\ldots,I_k)$ of
	$[n]$, the {\em atlas} $\mathcal{A}$ of $\mu$ over $\mathcal I$ is the ordered pair
	$(\mathcal I,\mathcal M)$, where $\mathcal M$ is the sequence of multisets
	$(M_1,\ldots,M_k)$ so that $M_j$ is the inventory of $\mu$ over $I_j$ for every $j\in [k]$. In this
	setting, we also say that $\mu$ {\em conforms} to $\mathcal A$.
\end{definition}

We note that there can be many distributions over $[n]$ whose atlas is the same.  We will
also denote by an atlas $\mathcal A$ any ordered pair $(\mathcal I,\mathcal M)$ where $\mathcal I$
is an interval partition of $[n]$ and $\mathcal M$ is a sequence of multisets of the
same length, so that the total sum of all members of all multisets is $1$. It is a simple observation
that for every such $\mathcal A$ there exists at least one distribution that conforms to it. The {\em length} of an
atlas $|\mathcal A|$ is defined as the shared length of its interval partition and sequence of multisets.

We now define what it means for a property to be characterized by atlases, and state our main theorem concerning such properties.

\begin{definition}
	For a function $k:\mathbb R^+ \times \mathbb N \to\mathbb N$, we say that a property of
	distributions $\mathcal{C}$ is {\em $k$-characterized by atlases} if for every $n\in\mathbb N$ and
	every $\epsilon>0$ we have a set $\mathbb A$ of atlases of lengths bounded by $k(\epsilon,n)$, so
	that every distribution $\mu$ over $[n]$ satisfying $\mathcal{C}$ conforms to some $\mathcal A\in
	\mathbb A$, while on the other hand no distribution $\mu$ that conforms to any $\mathcal A\in\mathbb
	A$ is $\epsilon$-far from satisfying $\mathcal{C}$.
\end{definition}

\begin{theorem}\label{thm:atlas-test}
	If $\mathcal{C}$ is a property of distributions that is $k$-characterized by atlases, then
	for any $\epsilon>0$ there is an adaptive conditional testing algorithm for $\mathcal{C}$
	with query complexity $k(\epsilon/5, n) \cdot \poly(\log n,1/\epsilon)$ (and error probability bound $1/3$).
\end{theorem}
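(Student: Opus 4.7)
My plan is to design an adaptive conditional algorithm that first learns an approximate ``atlas'' of $\mu$ and then performs a purely combinatorial check against $\mathbb{A}$. Since $\mathbb{A}$ depends only on $\mathcal{C}$, $n$, and $\epsilon$ (not on $\mu$), the matching step incurs no query cost; the entire query budget is spent on learning a representation of $\mu$ rich enough to test conformance against any atlas in $\mathbb{A}$.

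\textbf{Learning step.} I would first invoke Algorithm \ref{alg:pulled-partition} with fineness $\eta = \Theta(\epsilon/k)$ and error $1/9$ to pull an $\eta$-fine partition $\mathcal{I}' = (I'_1, \ldots, I'_r)$ of length $r = O((k/\epsilon)\log(k/\epsilon))$. For each interval $I'_j$ of non-negligible weight (in the sense of Observation \ref{obs:dubious-intervals}), I would bucket possible probability values into $O(\log(n)/\epsilon)$ logarithmic buckets $B_t = [(1+\epsilon')^{t-1},(1+\epsilon')^t]/n$ and estimate, for each bucket, the number of elements $i \in I'_j$ with $\mu(i) \in B_t$, using adaptive conditional queries on $I'_j$ and its singletons. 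Concretely: sample from $\restrict{\mu}{I'_j}$ to obtain candidate elements, estimate $\mu(i)/\mu(I'_j)$ for each sampled $i$ via further conditional sampling on sets of the form $\{i\}\cup S$, and invert the empirical bucket frequencies (weighted by representative bucket values) to obtain bucket-cardinality estimates. By standard concentration, this takes $\poly(\log n, 1/\epsilon)$ queries per interval, totalling $k\cdot\poly(\log n, 1/\epsilon)$.

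\textbf{Matching step.} For each $\mathcal{A} = (\mathcal{I}, \mathcal{M}) \in \mathbb{A}$ and each $I \in \mathcal{I}$, I would form the ``observed multiset'' for $I$ by merging the bucketed inventories of all $I'_j \in \mathcal{I}'$ lying entirely inside $I$, and absorbing as slack the at most $|\mathcal{I}|-1 \le k-1$ intervals of $\mathcal{I}'$ straddling boundaries of $\mathcal{I}$; each such straddling $I'_j$ has $\mu$-weight at most $\eta$, so the total slack is $O(\epsilon)$. I would then compute the earth-mover distance between this observed multiset and $M_I$ viewed as a point distribution, sum over $I\in\mathcal{I}$, and accept iff the total is at most $3\epsilon/5$ for some $\mathcal{A}\in\mathbb{A}$. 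By the atlas-characterization property at accuracy $\epsilon/5$ together with the triangle inequality, this threshold separates distributions in $\mathcal{C}$ from those $\epsilon$-far from $\mathcal{C}$: any $\mu^*$ conforming to some $\mathcal{A}\in\mathbb{A}$ is $\epsilon/5$-close to $\mathcal{C}$, so if $\mu$ is $\epsilon$-far from $\mathcal{C}$ then $d(\mu,\mu^*)>4\epsilon/5$ for every such $\mu^*$.

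\textbf{Main obstacle.} The principal technical challenge is controlling the inventory-estimation accuracy on a per-interval budget of $\poly(\log n, 1/\epsilon)$ queries, especially for intervals $I'_j$ that are long and consist overwhelmingly of very light elements: the multiset $M_{I'_j}$ has $|I'_j|$ entries, but individually resolving elements with $\mu(i)\ll \epsilon/n$ would cost $\Omega(n)$ queries. The key observation I would exploit is that all such sub-threshold elements can be lumped into a single aggregated low bucket whose contribution to any EMD comparison is dominated by its aggregate weight rather than by the number of distinct values. Choosing the bucket resolution $\epsilon'$, the low-probability threshold, and the boundary slack so that they compose into a total error of $o(\epsilon)$ --- while remaining consistent with the $\epsilon/5$ parameter used to invoke atlas-characterization --- is where the bulk of the delicate analysis will sit.
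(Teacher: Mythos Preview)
Your plan is broadly sound and shares the paper's skeleton---pull an $\eta$-fine partition, learn an approximate atlas over it using $\poly(\log n,1/\epsilon)$ conditional queries per interval, then do a query-free combinatorial check---but your matching step diverges from the paper's in a way worth noting.

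You propose to compare the learned inventories (merged along each candidate partition $\mathcal{I}$) against every $\mathcal{A}\in\mathbb{A}$ via an earth-mover/value-distance, absorbing the $\le k-1$ straddling intervals as $O(\epsilon)$ slack. This is workable, but the merge-and-pad step and the back-translation from EMD on multisets to $\ell_1$ distance between distributions require care with constants (your $3\epsilon/5$ threshold is already tight once the straddling slack and the $\epsilon/5$ learning error are both charged). The paper instead never touches $\mathbb{A}$ at the matching stage: it simply asks whether some $\chi\in\mathcal{C}$ is $\epsilon/5$-close to conforming to the learned atlas \emph{over $\mathcal{I}'$ itself}. Soundness then follows from two structural lemmas you do not invoke: Lemma~\ref{lem:close-characterized} (the $\eta$-neighborhood $\mathcal{C}_\eta$ is itself $k$-characterized) and Lemma~\ref{lem:atlas-pulled} (for any member of a $k$-characterized property, any distribution sharing its atlas over an $\epsilon/k$-fine partition is $3\epsilon$-close to the property). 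These let the paper avoid merging and straddling bookkeeping entirely.

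On the learning side, your bucketing scheme and the low-probability aggregation you flag as the ``main obstacle'' are essentially what the paper obtains by invoking the existing $\epsilon$-trimming sampler of \cite{ChakrabortyFGM13} (Lemma~\ref{lem:trimming}) as a black box, combined with Lemma~\ref{lem:estimate} to estimate bucket cardinalities; you would be re-deriving that machinery. Either route yields the stated complexity.
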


\subsection{Applications and examples}
We first show that $L$-decomposable properties are in particular characterized by atlases.
\begin{lemma}
	If $\mathcal{C}$ is a property of distributions that is $L$-decomposable, then $\mathcal{C}$ is
	$k$-characterized by atlases, where $k(\epsilon, n) = L(\epsilon/3,n)$.
	\label{lem:shape-to-atlas}
\end{lemma}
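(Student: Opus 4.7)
The plan is to build $\mathbb{A}$ directly from the structural decompositions of members of $\mathcal{C}$. Concretely, I would set $\gamma = \epsilon/3$, and for every distribution $\mu\in\mathcal{C}$ supported on $[n]$ I would pick a fixed $(\gamma, L(\gamma,n))$-decomposition $\mathcal{I}_\mu$ of $\mu$, form its atlas $\mathcal{A}_\mu = (\mathcal{I}_\mu, \mathcal{M}_\mu)$ where $\mathcal{M}_\mu$ lists the inventories of $\mu$ over the intervals of $\mathcal{I}_\mu$, and then take $\mathbb{A}$ to be the collection of all such atlases. The length bound $|\mathcal{A}_\mu| \le L(\epsilon/3,n) = k(\epsilon,n)$ is then automatic, as is the ``covering'' direction of the definition: every $\mu\in\mathcal{C}$ conforms to its own atlas $\mathcal{A}_\mu$.

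The real content is the ``soundness'' direction, and I would in fact prove a slightly stronger statement: if $\mu'$ conforms to $\mathcal{A}_\mu$, then $d(\mu,\mu') \le \epsilon$, not merely that $\mu'$ is $\epsilon$-close to some member of $\mathcal{C}$. Conformity of $\mu'$ to $\mathcal{A}_\mu$ means that on each $I_j \in \mathcal{I}_\mu$ the multiset of values of $\mu'$ equals that of $\mu$, so $\mu'$ is a permutation of $\mu$ inside each interval; in particular $\mu'(I_j) = \mu(I_j)$ and the extremal values of $\mu$ and $\mu'$ over $I_j$ agree.

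I would then split the intervals according to the two clauses of the decomposition: let $J_L = \{j : \mu(I_j) \le \gamma/L\}$ be the ``light'' indices, and let $J_U$ consist of the remaining indices, which by decomposability must satisfy $\bias(\restrict{\mu}{I_j}) \le \gamma$. On each light interval, the trivial bound $|\mu(i)-\mu'(i)| \le \mu(i)+\mu'(i)$ summed over $i$ gives $2\mu(I_j)$, and summing over $J_L$ yields at most $2\gamma$. On each uniform interval, since $\max_{i \in I_j}\mu(i) \le (1+\gamma)\min_{i \in I_j}\mu(i)$ and $\mu'$ shares the same max and min there, $|\mu(i)-\mu'(i)| \le \max\mu - \min\mu \le \gamma\min\mu$; multiplying by $|I_j|$ and using $|I_j|\min\mu \le \mu(I_j)$ gives a per-interval contribution of $\gamma\mu(I_j)$, whose sum over $J_U$ is at most $\gamma$. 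Adding, $d(\mu,\mu') \le 3\gamma = \epsilon$, exactly as required.

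The only potentially delicate point—and thus where I would focus attention—is the constant in the uniform-interval bound: a sloppier factor-of-two estimate (treating uniform intervals as light via $|\mu-\mu'|\le \mu+\mu'$) would force $\gamma = \epsilon/4$ and violate the claimed $k(\epsilon,n) = L(\epsilon/3,n)$. The sharper one-sided estimate, exploiting that $\mu$ and $\mu'$ take values in the same range $[\min,\max]$ on each uniform interval by virtue of conformity, is what lands the constants exactly at $\gamma = \epsilon/3$.
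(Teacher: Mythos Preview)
Your proposal is correct and follows essentially the same approach as the paper's proof: build $\mathbb{A}$ from the $(\gamma,L(\gamma,n))$-decompositions of members of $\mathcal{C}$ with $\gamma=\epsilon/3$, then for any distribution conforming to some $\mathcal{A}_\mu$ bound the $\ell_1$ distance to $\mu$ itself by splitting into light intervals (contributing $\le 2\gamma$ total via the crude $|\mu-\mu'|\le\mu+\mu'$ bound) and low-bias intervals (contributing $\le\gamma$ via $|\mu(i)-\mu'(i)|\le\max-\min\le\gamma\min$, exploiting that the shared inventory forces shared extrema). Your closing remark about why the sharper estimate on uniform intervals is needed to land exactly at $\epsilon/3$ is apt and matches the paper's calculation.
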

\begin{proof}
	Every distribution $\mu\in \mathcal{C}$ that is supported over $[n]$ defines an atlas in
	conjunction with the interval partition of the $L$-decomposition of $\mu$ for $L = L(\gamma, n)$.
	Let $\mathbb{A}$ be the set of all such atlases. We will show that $\mathcal{C}$ is
	$L(3\gamma,n)$-characterized by $\mathbb{A}$.
	
	Let $\mu\in \mathcal{C}$. Since $\mu$ is $L$-decomposable, $\mu$ conforms to the
	atlas given by the $L$-decomposition and it is in $\mathbb{A}$ as defined above.
	
	Now suppose that $\mu$ conforms to an atlas $\mathcal{A}\in \mathbb{A}$, where
	$\mathcal{I} = (I_1,\dots,I_\ell)$ is the sequence of intervals.  By the construction of $\mathbb
	A$, there exists a distribution $\chi\in \mathcal{C}$ that conforms with $\mathcal{A}$. Now, for
	each $j \in [\ell]$ such that $\mu(I_j) \le \gamma/L$, we have (noting that $\chi(I_j) =
	\mu(I_j)$)
	\begin{equation}
	\sum_{i\in I_j} |\mu(i) - \chi(i)|  \le \sum_{i\in I_j}\mu(i) + \sum_{i\in I_j}\chi(i) 
	\le 2\mu(I_j) \le \frac{2\gamma}{\ell}.
	\label{eqn:light-atlas}
	\end{equation}
	Noting that $\mu$ and $\chi$ have the same maximum and minimum over $I_j$ (as they have the same
	inventory), for each $j\in [\ell]$ and $i\in I_j$, we know that $|\mu(i) - \chi(i)| \le \max_{i\in
		I_j}\mu(i) - \min_{i\in I_j} \mu(i)$. Therefore, for all $j\in [\ell]$ such that $\max_{i\in
		I_j}\mu(i) \le (1+\gamma) \min_{i\in I_j} \mu(i)$, $|\mu(i) - \chi(i)| \le \gamma \min_{i\in
		I_j} \mu(i)$. Therefore,
	\begin{align}
	\sum_{i\in I_j} |\mu(i) - \chi(i)| &\le |I_j| \gamma \min_{i\in I_j} \mu_j(i) \le \gamma
	\mu_j(I_j).
	\label{eqn:heavy-atlas}
	\end{align}
	Finally, recall that since $\mathcal{A}$ came from an $L$-decomposition of $\chi$, all intervals
	are covered by the above cases. Summing up Equations \ref{eqn:light-atlas} and
	\ref{eqn:heavy-atlas} for all $j \in \left\{ 1, 2, \ldots, \ell\right\}$, we obtain $d(\mu,\chi)
	\le 3\gamma$.
\end{proof}


Note that atlases characterize also properties that do not have shape restriction. The following is
a simple observation.  

\begin{observation}
	If $\mathcal{C}$ is a property of distributions that is symmetric over $[n]$, then $\mathcal{C}$
	is $\mathbf{1}$-characterized by atlases.
\end{observation}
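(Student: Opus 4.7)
The plan is to produce a single set $\mathbb{A}$ of atlases of length exactly $1$ that works uniformly for all $\epsilon>0$ (so the dependence on $\epsilon$ is trivial). An atlas of length $1$ is simply a pair $([n], M)$ where $M$ is a multiset of $n$ nonnegative reals summing to $1$; in other words, it is exactly the multiset of probabilities of a distribution on $[n]$, forgetting the order.

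I would define $\mathbb{A} = \{ ([n], M_\mu) : \mu \in \mathcal{C} \}$, where $M_\mu$ denotes the inventory of $\mu$ over $[n]$, i.e.\ the multiset of its point probabilities. The first condition, that every $\mu \in \mathcal{C}$ conforms to some atlas in $\mathbb{A}$, then holds by construction: $\mu$ conforms to $([n], M_\mu) \in \mathbb{A}$.

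For the second condition, I would use the key observation that two distributions over $[n]$ share the same inventory over the full interval $[n]$ if and only if one is a permutation of the other. So if $\mu'$ conforms to some $([n], M_\mu) \in \mathbb{A}$ with $\mu \in \mathcal{C}$, then $\mu' = \mu \circ \pi^{-1}$ for some permutation $\pi$ of $[n]$. Symmetry of $\mathcal{C}$ then gives $\mu' \in \mathcal{C}$, so $\mu'$ is not merely within $\epsilon$ of $\mathcal{C}$ but actually in $\mathcal{C}$, for every $\epsilon>0$.

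There is really no obstacle here: the statement is essentially a rephrasing of symmetry in the language of atlases, and the whole point is that at length $1$ the atlas records precisely the permutation-invariant content of a distribution. The only thing to be slightly careful about is the quantifier order in the definition of ``$k$-characterized by atlases'': we need the same $\mathbb{A}$ (or at any rate one of bounded length) to work for every $\epsilon$, and the construction above does, since it gives the exact characterization $\mathcal{C} = \{\mu : \mu \text{ conforms to some } \mathcal{A}\in\mathbb{A}\}$.
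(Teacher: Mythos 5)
Your proof is correct, and it is exactly the argument the authors have in mind (the paper states this as a simple observation without proof). The key point you identify — that an atlas of length one records precisely the multiset of point probabilities, so conforming to it is equivalent to being a permutation of the underlying distribution — is the whole content, and your handling of the quantifier over $\epsilon$ (the same $\mathbb{A}$ works for all $\epsilon$ because the distance is actually $0$) is right.
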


It was shown in Chakraborty et al \cite{ChakrabortyFGM13} that such properties are efficiently
testable with conditional queries, so Theorem \ref{thm:atlas-test} in particular generalizes this result.
Also, the notion of characterization by atlases provides a natural model for tolerant testing, as we
will see in the next subsection.

%


\section{Atlas characterizations and tolerant Testing}\label{sec:atlas-tol}

We now show that for all properties of distributions that are characterized by atlases, there are
efficient \emph{tolerant testers} as well. In~\cite{CanonneDGR15}, it was shown that for a large
property of distribution properties that have ``semi-agnostic'' learners, there are efficient tolerant
testers.  In this subsection, we show that when the algorithm is given conditional query access,
there are efficient tolerant testers for the larger class of properties that are characterized by
atlases, including decomposable properties that otherwise do not lend themselves to tolerant testing.

The mechanism presented here will also be used in the proof of Theorem \ref{thm:atlas-test} itself.
First, we give a definition of tolerant testing. We note that the definition extends naturally to algorithms
that make conditional queries to a distribution.

\begin{definition}
	Let $\mathcal{C}$ be any property of probability distributions. An {\em $(\eta,\epsilon)$-tolerant
		tester} for $\mathcal{C}$ with query complexity $q$ and error probability $\delta$,
	is an algorithm that samples $q$ elements
	$x_1,\dots,x_q$ from a distribution $\mu$, accepts with probability at least $1 - \delta$ if
	$d(\mu, \mathcal{C}) \le \eta$, and rejects with probability at least $1 - \delta$ if $d(\mu,
	\mathcal{C}) \ge \eta + \epsilon$.
\end{definition}

In~\cite{CanonneDGR15}, they show that for every $\alpha>0$, there is an $\epsilon>0$ that depends on
$\alpha$, such that there is an $(\epsilon,\alpha - \epsilon)$-tolerant tester for certain
shape-restricted properties. On the other hand, tolerant testing using unconditional queries for
other properties, such as the ($\mathbf 1$-decomposable) property of being uniform, require $\Omega
\left( n/\log n\right)$ many samples (\cite{ValiantV10}).  We prove that, in the presence of
conditional query access, there is an $(\eta,\epsilon)$-tolerant tester for every $\eta,\epsilon>0$
such that $\eta+\epsilon<1$ for all properties of probability distributions that are characterized
by atlases.

We first present a definition and prove an easy lemma that will be useful later on.

\begin{definition}
	Given a partition $\mathcal I=(I_1,\ldots,I_k)$ of $[n]$, we say that a permutation $\sigma:[n]\to
	[n]$ is {\em $\mathcal I$-preserving} if for every $1\leq j\leq k$ we have $\sigma(I_j)=I_j$.
\end{definition}

\begin{lemma}
	Let $\chi$ and $\chi'$ be two distributions, supported on $[n]$, both of which conform to an atlas
	$\mathcal{A} = (\mathcal{I}, \mathcal{M})$. If $\mathcal{A}' = (\mathcal{I}, \mathcal{M}')$ is
	another atlas with the same interval partition as $\mathcal{A}$, such that $\chi$ is
	$\epsilon$-close to conforming to $\mathcal{A}'$, then $\chi'$ is also $\epsilon$-close to
	conforming to $\mathcal{A}'$.
	\label{lem:same-distance}
\end{lemma}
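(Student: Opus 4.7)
The plan is to exploit the fact that two distributions conforming to the same atlas differ only by a permutation that respects the interval partition, and then transport a witness of closeness through that permutation.

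First I would observe that since $\chi$ and $\chi'$ both conform to $\mathcal{A} = (\mathcal{I}, \mathcal{M})$, for every interval $I_j \in \mathcal{I}$ the inventories of $\chi$ and $\chi'$ restricted to $I_j$ are the same multiset $M_j$. So within each $I_j$ there is a bijection $\sigma_j : I_j \to I_j$ satisfying $\chi'(i) = \chi(\sigma_j(i))$ for all $i \in I_j$. Concatenating these bijections gives an $\mathcal{I}$-preserving permutation $\sigma : [n] \to [n]$ with $\chi'(i) = \chi(\sigma(i))$ for every $i \in [n]$.

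Next, let $\psi$ be a distribution that conforms to $\mathcal{A}' = (\mathcal{I}, \mathcal{M}')$ and witnesses closeness, i.e.\ $d(\chi,\psi) \le \epsilon$. Define $\psi'(i) = \psi(\sigma(i))$. Since $\sigma$ is $\mathcal{I}$-preserving, $\sigma(I_j) = I_j$ for every $j$, so the inventory of $\psi'$ on $I_j$ equals the inventory of $\psi$ on $I_j$, which is $M'_j$. Hence $\psi'$ also conforms to $\mathcal{A}'$.

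Finally, I would conclude by a change of summation variable:
\begin{equation*}
 d(\chi',\psi') = \sum_{i \in [n]} |\chi'(i) - \psi'(i)| = \sum_{i \in [n]} |\chi(\sigma(i)) - \psi(\sigma(i))| = \sum_{i \in [n]} |\chi(i) - \psi(i)| = d(\chi,\psi) \le \epsilon,
\end{equation*}
where the third equality uses the fact that $\sigma$ is a bijection on $[n]$. Thus $\chi'$ is $\epsilon$-close to $\psi'$, which conforms to $\mathcal{A}'$, so $\chi'$ is $\epsilon$-close to conforming to $\mathcal{A}'$ as required. There is no real obstacle here; the only thing to be careful about is verifying that $\sigma$ being $\mathcal{I}$-preserving is exactly what preserves conformance to atlases with interval partition $\mathcal{I}$.
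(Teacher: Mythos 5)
Your proof is correct and follows essentially the same route as the paper's: construct an $\mathcal{I}$-preserving permutation $\sigma$ carrying $\chi$ to $\chi'$, then push the witness distribution forward by $\sigma$ and observe that conformance to $\mathcal{A}'$ and the $\ell_1$ distance are both preserved. You simply spell out the details that the paper leaves as ``easy observations.''
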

\begin{proof}
	It is an easy observation that there exists an $\mathcal{I}$-preserving permutation
	$\sigma$ that moves $\chi$ to $\chi'$. Now let $\mu$ be the distribution that conforms to $\mathcal{A}'$
	such that $d(\mu,\chi) \le \epsilon$, and let $\mu'$ be the distribution that results from
	having $\sigma$ operate on $\mu$. It is not hard to see that $\mu'$ conforms to $\mathcal{A}'$
	(which has the same interval partition as $\mathcal{A}$), and that it is $\epsilon$-close to $\chi'$.
\end{proof} 

For a property $\mathcal{C}$ of distributions that is $k$-characterized by atlases,
let $\mathcal{C}_\eta$ be the property of distributions
of being $\eta$-close to $\mathcal{C}$.  The following lemma states that $\mathcal{C}_\eta$  is also
k-characterized by atlases. This lemma will also be important for us outside the context of tolerant
testing per-se. 

\begin{lemma}
	Let $\mathcal{C}$ be a property of probability distributions that is $k(\epsilon,n)$-characterized by
	atlases. For any $\eta>0$, let $\mathcal{C}_{\eta}$ be the set of all probability distributions
	$\mu$ such that $d(\mu, \mathcal{C}) \le \eta$.  Then, there is a set $\mathbb{A}_\eta$ of atlases, of
	length at most $k$, such that every $\mu \in \mathcal{C}_\eta$ conforms to at least one atlas in
	$\mathbb{A}_\eta$, and every distribution that conforms to an atlas in $\mathbb{A}_\eta$ is $\eta +
	\epsilon$-close to $\mathcal{C}$, and is $\epsilon$-close to $\mathcal{C}_\eta$.
	\label{lem:close-characterized}
\end{lemma}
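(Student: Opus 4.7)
The plan is to construct $\mathbb{A}_\eta$ by starting from the set $\mathbb{A}$ of atlases that $k(\epsilon,n)$-characterizes $\mathcal{C}$, and enriching each atlas with all ``$\eta$-reachable'' multiset sequences over the same interval partition. Formally, I would place an atlas $\mathcal{A}'=(\mathcal{I},\mathcal{M}')$ in $\mathbb{A}_\eta$ exactly when there exist some $\mathcal{A}=(\mathcal{I},\mathcal{M})\in\mathbb{A}$ together with distributions $\chi$ conforming to $\mathcal{A}$ and $\chi'$ conforming to $\mathcal{A}'$ such that $d(\chi,\chi')\le\eta$. Since every such $\mathcal{A}'$ shares its interval partition with an atlas of $\mathbb{A}$, the length bound $|\mathcal{A}'|=|\mathcal{I}|\le k(\epsilon,n)$ is automatic. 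For the forward containment, given $\mu\in\mathcal{C}_\eta$, pick $\chi\in\mathcal{C}$ with $d(\mu,\chi)\le\eta$ and an atlas $\mathcal{A}=(\mathcal{I},\mathcal{M})\in\mathbb{A}$ to which $\chi$ conforms; the atlas $\mathcal{A}'$ obtained from the interval partition $\mathcal{I}$ paired with the sequence of inventories of $\mu$ over $\mathcal{I}$ then lies in $\mathbb{A}_\eta$ (witnessed by the pair $(\chi,\mu)$) and $\mu$ conforms to it by definition.

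For the reverse direction, let $\nu$ conform to some $\mathcal{A}'\in\mathbb{A}_\eta$, and let $\mathcal{A}=(\mathcal{I},\mathcal{M})\in\mathbb{A}$, $\chi$ conforming to $\mathcal{A}$, and $\chi'$ conforming to $\mathcal{A}'$ with $d(\chi,\chi')\le\eta$ be the witnesses of $\mathcal{A}'\in\mathbb{A}_\eta$. The core step is to invoke Lemma \ref{lem:same-distance} with the pair $\chi',\nu$ (both conforming to $\mathcal{A}'$): since $\chi'$ is $\eta$-close to $\chi$, which conforms to the same-partition atlas $\mathcal{A}$, the lemma provides a distribution $\tilde{\chi}$ conforming to $\mathcal{A}$ with $d(\nu,\tilde{\chi})\le\eta$. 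The defining property of $\mathbb{A}$ then gives some $\mu^*\in\mathcal{C}$ with $d(\tilde{\chi},\mu^*)\le\epsilon$, and the triangle inequality yields $d(\nu,\mu^*)\le\eta+\epsilon$, establishing $(\eta+\epsilon)$-closeness to $\mathcal{C}$.

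To promote this bound to $\epsilon$-closeness to $\mathcal{C}_\eta$, I would apply a convex-combination trick: set $t=\epsilon/(\eta+\epsilon)$ (handling the degenerate case $\eta=\epsilon=0$ separately, where $\nu=\mu^*\in\mathcal{C}$) and define $\omega=(1-t)\nu+t\mu^*$. A direct computation gives $d(\nu,\omega)=t\cdot d(\nu,\mu^*)\le\epsilon$ while $d(\omega,\mu^*)=(1-t)\cdot d(\nu,\mu^*)\le\eta$, so $\omega\in\mathcal{C}_\eta$ and hence $d(\nu,\mathcal{C}_\eta)\le\epsilon$. The only real obstacle is producing a distribution that conforms to an atlas in $\mathbb{A}$ and is simultaneously close to $\nu$: this is precisely the role of Lemma \ref{lem:same-distance}, and once that step is in hand the remainder is triangle-inequality bookkeeping combined with the convex-combination move that sharpens the crude $(\eta+\epsilon)$ bound into the desired $\epsilon$ bound against $\mathcal{C}_\eta$.
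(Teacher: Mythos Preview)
Your proof is correct and follows essentially the same approach as the paper: the same construction of $\mathbb{A}_\eta$ (enlarging each $\mathcal{A}\in\mathbb{A}$ by all same-partition atlases that are $\eta$-reachable), the same use of Lemma~\ref{lem:same-distance} to transfer the $\eta$-closeness from the witness $\chi'$ to the arbitrary conformer $\nu$, and the same convex-combination $\omega=(\eta\,\nu+\epsilon\,\mu^*)/(\eta+\epsilon)$ to upgrade the $(\eta+\epsilon)$ bound to $\epsilon$-closeness to $\mathcal{C}_\eta$. Your treatment is slightly more explicit (handling the degenerate $\eta=\epsilon=0$ case and writing out the distance computations), but there is no substantive difference.
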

\begin{proof}
	Since $\mathcal{C}$ is $k$-characterized by atlases, there is a set of atlases $\mathbb{A}$ of
	length at most $k(\epsilon,n)$ such that for each $\mu\in \mathcal{C}$, there is an atlas
	$\mathcal{A}\in \mathbb{A}$ to which it conforms, and any $\chi$ that conforms to an atlas
	$\mathcal{A}\in \mathbb{A}$ is $\epsilon$-close to $\mathcal{C}$. Now, let
	$\mathbb{A}_\eta$ be obtained by taking each atlas
	$\mathcal{A}\in \mathbb{A}$, and adding all atlases, with the same interval partition,
	corresponding to distributions that are $\eta$-close to conforming to $\mathcal{A}$. 
	
	First, note that the new atlases that are added have the same interval partitions as atlases in
	$\mathbb{A}$, and hence have the same length bound $k(\epsilon, n)$. To complete the proof of the
	lemma, we need to prove that every $\mu\in \mathcal{C}_\eta$ conforms to some atlas in
	$\mathbb{A}_\eta$, and that no distribution that conforms to $\mathbb{A}_\eta$ is $\eta + \epsilon$-far
	from $\mathcal{C}$.
	
	Take any $\mu\in \mathcal{C}_\eta$. There exists some distribution $\mu' \in \mathcal{C}$ such
	that $d(\mu, \mu') \le \eta$. Since $\mathcal{C}$ is $k$-characterized by atlases, there is some
	atlas $\mathcal{A} \in \mathbb{A}$ such that $\mu'$ conforms with $\mathcal{A}$. Also, observe
	that $\mu$ is $\eta$-close to conforming to $\mathcal{A}$ through $\mu'$. Therefore, there is an
	atlas $\mathcal{A}'$ with the same interval partition as $\mathcal{A}$ that was added in
	$\mathbb{A}_\eta$, which is the atlas corresponding to the distribution $\mu$. Hence, there is an
	atlas in $\mathbb{A}_\eta$ to which $\mu$ conforms.
	
	Conversely, let $\chi$ be a distribution that conforms with an atlas $\mathcal{A}' \in
	\mathbb{A}_\eta$. From the construction of $\mathbb{A}_\eta$, we know that there is an atlas $\mathcal{A}
	\in \mathbb{A}$ with the same interval partition as $\mathcal{A}'$, and there is a
	distribution $\chi'$ that conforms to $\mathcal{A}'$ and is $\eta$-close to conforming to
	$\mathcal{A}$. Therefore, by Lemma \ref{lem:same-distance} $\chi$ is also $\eta$-close to
	conforming to $\mathcal{A}$. Let $\mu'$ be the distribution conforming to $\mathcal{A}$ such that
	$d(\chi, \mu') \le \eta$. Since $\mu'$ conforms to an atlas $\mathcal{A} \in \mathbb{A}$, $d(\mu',
	\mathcal{C}) \le \epsilon$.  Therefore, by the triangle inequality, $d(\chi, \mathcal{C}) \le \eta
	+ \epsilon$.
	
	This also implies that $d(\chi,\mathcal{C}_\eta) \le \epsilon$ by considering $\tilde{\chi} =
	(\eta \chi+ \epsilon \tilde{\mu})/(\epsilon + \eta)$ where $\tilde{\mu}$ is the distribution in
	$\mathcal{C}$ that is $\epsilon + \eta$-close to $\chi$.  Note that $\tilde{\chi}$ is $\eta$-close
	to $\tilde{\mu}$ and $\epsilon$-close to $\chi$.
\end{proof}

Using Lemma~\ref{lem:close-characterized} we get the following corollary of Theorem \ref{thm:atlas-test}
about tolerant testing of distributions characterized by atlases.

\begin{corollary}
	Let $\mathcal{C}$ be a property of distributions that is $k$-characterized by atlases. For every
	$\eta,\epsilon >0$ such that $\eta+\epsilon<1$, there is an $(\eta,\epsilon)$-tolerant tester for
	$\mathcal{C}$ that takes $k(\epsilon/5,n)\cdot\poly(\log n,1/\epsilon)$ conditional
	samples and succeeds with probability at least $2/3$.
\end{corollary}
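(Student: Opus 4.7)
The plan is to reduce tolerant testing of $\mathcal{C}$ to standard testing of the inflated property $\mathcal{C}_\eta = \{\mu : d(\mu,\mathcal{C})\le \eta\}$, and then apply Theorem~\ref{thm:atlas-test} to $\mathcal{C}_\eta$. The observation driving the reduction is that being an $(\eta,\epsilon)$-tolerant tester for $\mathcal{C}$ is essentially the same as being a standard $\epsilon$-tester for $\mathcal{C}_\eta$: the ``yes'' instances $d(\mu,\mathcal{C})\le\eta$ are precisely $\mathcal{C}_\eta$, and by the triangle inequality any $\mu$ with $d(\mu,\mathcal{C})\ge \eta+\epsilon$ must be $\epsilon$-far from $\mathcal{C}_\eta$ (otherwise some $\chi\in\mathcal{C}_\eta$ at distance $<\epsilon$ would yield, together with a witness $\chi'\in\mathcal{C}$ at distance $\le\eta$ from $\chi$, a distribution in $\mathcal{C}$ within $\eta+\epsilon$ of $\mu$).

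Next, I would invoke Lemma~\ref{lem:close-characterized} to conclude that $\mathcal{C}_\eta$ is itself characterized by atlases with the same length bound $k(\epsilon,n)$: the lemma produces a set $\mathbb{A}_\eta$ of atlases of length at most $k(\epsilon,n)$ such that every $\mu\in\mathcal{C}_\eta$ conforms to some atlas in $\mathbb{A}_\eta$, while every distribution conforming to an atlas in $\mathbb{A}_\eta$ is $\epsilon$-close to $\mathcal{C}_\eta$. In other words, $\mathcal{C}_\eta$ is $k$-characterized by atlases (with the same characterizing function).

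I would then feed $\mathcal{C}_\eta$ and accuracy parameter $\epsilon$ into Theorem~\ref{thm:atlas-test}, obtaining an adaptive conditional tester of query complexity $k(\epsilon/5,n)\cdot\poly(\log n,1/\epsilon)$ and error probability at most $1/3$. If $d(\mu,\mathcal{C})\le\eta$ then $\mu\in\mathcal{C}_\eta$ and this tester accepts with probability $\ge 2/3$; if $d(\mu,\mathcal{C})\ge\eta+\epsilon$ then by the triangle-inequality argument above $\mu$ is $\epsilon$-far from $\mathcal{C}_\eta$, so the tester rejects with probability $\ge 2/3$. This is exactly the $(\eta,\epsilon)$-tolerant tester required by the corollary, with the claimed query complexity.

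There is no real obstacle here beyond correctly setting up the reduction: the substantive work is already contained in Lemma~\ref{lem:close-characterized} (which guarantees that inflation by $\eta$ preserves atlas-characterizability with the same length parameter) and in Theorem~\ref{thm:atlas-test} (which supplies the tester). The only mild subtlety is matching the accuracy parameters—using accuracy $\epsilon$ in Theorem~\ref{thm:atlas-test} applied to $\mathcal{C}_\eta$ yields the $k(\epsilon/5,n)$ factor that appears in the statement, so no rescaling is needed.
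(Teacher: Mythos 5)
Your proof is correct and takes essentially the same approach the paper intends: Lemma~\ref{lem:close-characterized} establishes that $\mathcal{C}_\eta$ remains $k$-characterized by atlases, after which Theorem~\ref{thm:atlas-test} applied to $\mathcal{C}_\eta$ with accuracy parameter $\epsilon$ directly yields the claimed $(\eta,\epsilon)$-tolerant tester with the stated query complexity.
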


%
%
%
%


\section{Some useful lemmas about atlases and characterizations}\label{sec:atlas-lem}

We start with a definition and a lemma, providing an alternative equivalent definition of properties $k$-characterizable by atlases

\begin{definition}[permutation-resistant distributions]
	For a function $k:\mathbb R^+ \times \mathbb N \to\mathbb N$, a property $\mathcal{C}$ of
	probability distributions is {\em $k$-piecewise permutation resistant} if for every $n\in\mathbb N$,
	every $\epsilon>0$, and every distribution $\mu$ over $[n]$ in $\mathcal{C}$, there exists a
	partition $\mathcal I$ of $[n]$ into up to $k(\epsilon, n)$ intervals, so that every $\mathcal
	I$-preserving permutation of $[n]$ transforms $\mu$ into a distribution that is $\epsilon$-close to
	a distribution in $\mathcal{C}$.
\end{definition}

\begin{lemma}
	For $k: \mathbb R^+ \times \mathbb N \to \mathbb N$, a property $\mathcal{C}$ of probability
	distributions over $[n]$ is $k$-piecewise permutation resistant if and only if it is
	$k$-characterized by atlases.
	\label{lem:piecewise-atlas}
\end{lemma}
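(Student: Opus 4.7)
The plan is to exploit the fact that two distributions $\mu,\chi$ over $[n]$ conform to the same atlas $(\mathcal{I},\mathcal M)$ if and only if there exists an $\mathcal{I}$-preserving permutation $\sigma$ with $\chi=\mu\circ\sigma^{-1}$. This observation (immediate from the definitions of inventory and conformance) is the bridge between the two formulations, and will be used in both directions.

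For the direction from $k$-characterized by atlases to $k$-piecewise permutation resistant, I would take the witnessing family $\mathbb A$ of atlases and fix, for each $\mu\in\mathcal{C}$, an atlas $\mathcal A_\mu=(\mathcal I_\mu,\mathcal M_\mu)\in\mathbb A$ to which $\mu$ conforms. Since $|\mathcal A_\mu|\le k(\epsilon,n)$, the partition $\mathcal I_\mu$ has at most $k(\epsilon,n)$ intervals. For any $\mathcal I_\mu$-preserving permutation $\sigma$, the distribution $\mu\circ\sigma^{-1}$ has the same inventory on each interval of $\mathcal I_\mu$, hence still conforms to $\mathcal A_\mu$; by the defining property of $\mathbb A$ it is then $\epsilon$-close to $\mathcal{C}$.

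For the converse direction, given $\mu\in\mathcal{C}$, let $\mathcal I_\mu$ be the partition of length at most $k(\epsilon,n)$ supplied by piecewise permutation resistance, and let $\mathcal A_\mu=(\mathcal I_\mu,\mathcal M_\mu)$ where $\mathcal M_\mu$ is the sequence of inventories of $\mu$ over $\mathcal I_\mu$. Set $\mathbb A=\{\mathcal A_\mu:\mu\in\mathcal{C}\}$. By construction every $\mu\in\mathcal{C}$ conforms to some atlas in $\mathbb A$, and every atlas in $\mathbb A$ has length bounded by $k(\epsilon,n)$. It remains to check that any $\chi$ conforming to some $\mathcal A_\mu\in\mathbb A$ is $\epsilon$-close to $\mathcal{C}$; but by the bridge observation there is an $\mathcal I_\mu$-preserving permutation $\sigma$ with $\chi=\mu\circ\sigma^{-1}$, and piecewise permutation resistance applied to $\mu$ and $\sigma$ yields exactly this.

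I do not foresee a serious obstacle: the only subtlety is verifying the bridge observation carefully, namely that having equal inventories interval-by-interval is equivalent to being related by an $\mathcal I$-preserving permutation, which follows because each multiset $M_j$ determines the values on $I_j$ up to a permutation of $I_j$, and concatenating these permutations yields the global $\sigma$.
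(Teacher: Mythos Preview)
Your proposal is correct and follows essentially the same approach as the paper's proof: both directions hinge on the observation that two distributions conform to the same atlas $(\mathcal I,\mathcal M)$ if and only if they are related by an $\mathcal I$-preserving permutation, and both construct the witnessing set of atlases as $\{\mathcal A_\mu:\mu\in\mathcal C\}$ from the partitions supplied by permutation resistance. The paper's argument is slightly terser and leaves the bridge observation implicit, whereas you state it explicitly, but the substance is the same.
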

\begin{proof}
	If $\mathcal{C}$ is $k$-piecewise permutation resistant, then for each distribution $\mu\in
	\mathcal{C}$, there exists an interval partition $\mathcal{I}$ of $[n]$ such that every
	$\mathcal{I}$-preserving permutation of $[n]$ transforms $\mu$ into a distribution that is
	$\epsilon$-close to $\mathcal{C}$. Each distribution $\mu$ thus gives an atlas over $\mathcal{I}$,
	and the collection of these atlases for all $\mu\in \mathcal{C}$ characterizes the property
	$\mathcal{C}$.  Therefore, $\mathcal{C}$ is $k$-characterized by atlases.
	
	Conversely, let $\mathcal{C}$ be a property of distributions that are $k$-characterized by atlases
	and let $\mathbb{A}$ be the set of atlases. For each $\mu\in C$, let $\mathcal{A}_\mu$ be the
	atlas in $\mathbb{A}$ that characterizes $\mu$ and let $\mathcal{I}_\mu$ be the interval partition
	corresponding to this atlas. Now, every $\mathcal{I}_\mu$-preserving permutation $\sigma$ of $\mu$
	gives a distribution $\mu_\sigma$ that has the same atlas $\mathcal{A}_\mu$. Since $\mathcal{C}$
	is $k$-characterized by atlases, $\mu_\sigma$ is $\epsilon$-close to $\mathcal{C}$.  Therefore,
	$\mathcal{C}$ is $k$-piecewise permutation resistant as well.
\end{proof}

We now prove the following lemma about $\epsilon/k$-fine partitions of distributions characterized by atlases, having a similar flavor as Lemma \ref{lem:small-wt-non-uniform} for $L$-decomposable properties. Since we cannot avert a $\poly(\log n)$ dependency anyway, for simplicity we use the $1$-parameter variant of fine partitions.

\begin{lemma}
	Let $\mathcal{C}$ be a property of distributions that is $k(\epsilon, n)$-characterized by atlases
	through $\mathbb A$. For any $\mu \in \mathcal{C}$, any $\epsilon/k$-fine interval partition
	$\mathcal{I}'$ of $\mu$, and the corresponding atlas $\mathcal{A}' = (\mathcal{I}', \mathcal{M}')$
	for $\mu$ (not necessarily in $\mathbb A$), any distribution $\mu'$ that conforms to
	$\mathcal{A}'$ is $3\epsilon$-close to $\mathcal{C}$.
	\label{lem:atlas-pulled}
\end{lemma}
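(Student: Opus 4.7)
The plan is to exploit the fact that any two distributions conforming to the same atlas $\mathcal{A}' = (\mathcal{I}', \mathcal{M}')$ have identical mass on each interval of $\mathcal{I}'$ (since the multiset of probabilities in $I'_j$ is fixed). So if we can rebuild $\mu'$ into a nearby distribution that also conforms to some atlas $\mathcal{A}^* \in \mathbb{A}$ corresponding to $\mu$'s membership in $\mathcal{C}$, we are done: by the defining property of $\mathbb{A}$, conforming to an atlas in $\mathbb{A}$ puts us within $\epsilon$ of $\mathcal{C}$, and then we tack on the modification cost using the triangle inequality.

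Concretely, since $\mu \in \mathcal{C}$, pick $\mathcal{A}^* = (\mathcal{I}^*, \mathcal{M}^*) \in \mathbb{A}$ with $|\mathcal{I}^*| \le k$ to which $\mu$ conforms. Let $T \subseteq [r]$ denote the indices of those $I'_j$ that straddle a boundary of $\mathcal{I}^*$. Each $\mathcal{I}^*$-boundary is hit by at most one $\mathcal{I}'$-interval, so $|T| \le k-1$, and every $I'_j$ with $j \in T$ is a non-singleton, whence by $\epsilon/k$-fineness $\mu(I'_j) \le \epsilon/k$. Therefore
\[
\sum_{j \in T} \mu(I'_j) \;\le\; (k-1)\cdot \epsilon/k \;<\; \epsilon.
\]
Now define the hybrid $\nu(i) = \mu(i)$ for $i \in \bigcup_{j \in T} I'_j$ and $\nu(i) = \mu'(i)$ for $i \in \bigcup_{j \notin T} I'_j$. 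Since $\mu$ and $\mu'$ have the same total mass on each $I'_j$ (they share the atlas $\mathcal{A}'$), $\nu$ is a valid distribution, and
\[
d(\nu, \mu') \;=\; \sum_{j \in T}\sum_{i \in I'_j} |\mu(i) - \mu'(i)| \;\le\; 2\sum_{j \in T} \mu(I'_j) \;<\; 2\epsilon.
\]

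The key verification is that $\nu$ conforms to $\mathcal{A}^*$, i.e.\ that $\nu$ and $\mu$ have the same inventory on every $I^*_k \in \mathcal{I}^*$. Decompose $I^*_k$ using $\mathcal{I}'$: on the portion $I^*_k \cap \bigcup_{j \in T} I'_j$ we have $\nu = \mu$ by construction, so the inventories agree trivially; on the portion $I^*_k \cap \bigcup_{j \notin T} I'_j$, each contributing $I'_j$ lies entirely inside $I^*_k$, and $\mu, \mu'$ share the inventory on $I'_j$, so the inventory of $\nu$ (which equals $\mu'$ here) coincides with that of $\mu$. Concatenating the two pieces yields the full inventory of $\mu$ on $I^*_k$, as required.

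Since $\nu$ conforms to $\mathcal{A}^* \in \mathbb{A}$, the characterization property gives $d(\nu, \mathcal{C}) \le \epsilon$. Combining with the triangle inequality, $d(\mu', \mathcal{C}) \le d(\mu', \nu) + d(\nu, \mathcal{C}) < 2\epsilon + \epsilon = 3\epsilon$, which is the desired conclusion. I expect the only mildly subtle step to be the inventory bookkeeping in the verification that $\nu$ conforms to $\mathcal{A}^*$, where one must be careful that non-straddling $I'_j$ sit inside a unique $\mathcal{I}^*$-interval so the multiset equality really goes through.
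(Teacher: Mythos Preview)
Your proof is correct and follows essentially the same approach as the paper: the paper's $\mathcal{P}$, $\tilde{\mu}$, and $\mathcal{A}$ are exactly your $T$, $\nu$, and $\mathcal{A}^*$, with the same hybrid construction (keep $\mu$ on straddling intervals, use $\mu'$ elsewhere) and the same inventory verification over each $I^*_k$. The only cosmetic difference is that you explicitly count $|T|\le k-1$ via boundaries, whereas the paper just asserts the weight bound directly.
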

\begin{proof}
	Let $\mathcal{A} = (\mathcal{I}, \mathcal{M})$ be the atlas from $\mathbb A$ to which $\mu$
	conforms, and let $\mathcal{I}' = (I'_1, I'_2, \ldots, I'_r)$ be an $\epsilon/k$-fine interval
	partition of $\mu$.  Let $\mathcal{P} \subseteq \mathcal{I}'$ be the set of intervals that
	intersect more than one interval in $\mathcal{I}$. Since $\mathcal{I}'$ is $\epsilon/k$-fine, and
	the length of $\mathcal{A}$ is at most $k$, $\mu(\bigcup_{I'_j \in \mathcal{P}} I'_j) \le
	\epsilon$ (note that $\mathcal{P}$ cannot contain singletons). Also, since $\mu'$ conforms to
	$\mathcal{A}'$, we have $\mu'(\bigcup_{I'_j \in \mathcal{P}} I'_j) \le \epsilon$.
	
	Let $\tilde{\mu}$ be a distribution supported over $[n]$ obtained as follows: For each interval
	$I'_j \in \mathcal{P}$, $\tilde{\mu}(i) = \mu(i)$ for every $i \in I'_j$. For each interval $I'_j
	\in \mathcal{I}' \setminus \mathcal{P}$, $\tilde{\mu}(i) = \mu'(i)$ for every $i \in I'_j$. Note that
	that the inventories of $\tilde{\mu}$ and $\mu$ are identical over any $I'_j$ in $\mathcal{I}' \setminus
	\mathcal{P}$. From this it follows that $\tilde{\mu}$ also conforms to $\mathcal{A}$, and in
	particular $\tilde{\mu}$ is a distribution. To see this, for any $I_j$ in $\mathcal{I}$ partition
	it to its intersection with the members of $\mathcal{I}' \setminus \mathcal{P}$ contained in it,
	and all the rest. For the former we use that $\mu$ and  $\mu'$ have the same inventories, and for
	the latter we specified that $\tilde{\mu}$ has the same values as $\mu$.
	
	Since $\mu'$ and $\tilde{\mu}$ are identical at all points except those in $\mathcal{P}$, we have
	$d(\mu', \tilde{\mu}) \le 2\epsilon$.  Furthermore, $d(\tilde{\mu}, \mathcal{C}) \le \epsilon$
	since $\tilde{\mu}$ conforms to $\mathcal{A} \in \mathbb A$. Therefore, by the triangle
	inequality, $d(\mu', \mathcal{C}) \le 3\epsilon$. 
\end{proof}

The main idea of our test for a property of distributions $k$-characterized by atlases, starts with a
$\gamma/k$-fine partition $\mathcal{I}$ obtained by through Algorithm \ref{alg:pulled-partition}.
We then show how to compute
an atlas $\mathcal{A}$ with this interval partition such that there is a distribution
$\mu_\mathcal{I}$ that conforms to $\mathcal{A}$ that is close to $\mu$. We use the
$\epsilon$-trimming sampler from \cite{ChakrabortyFGM13} to obtain such an atlas corresponding to
$\mathcal{I}$.  To test if $\mu$ is in $\mathcal{C}$, we show that it is sufficient to check if
there is some distribution conforming to $\mathcal{A}$ that is close to a distribution in $\mathcal{C}$.


\section{An adaptive test for properties characterized by atlases}\label{sec:atlas-test}

Our main technical lemma, which we state here and prove in Section \ref{sec:atlas-learn},
is the following possibility of ``learning'' an atlas
of an unknown distribution for an interval partition $\mathcal{I}$, under the conditional sampling
model.

\begin{lemma}\label{lem:atlas-learn}
	Given a distribution $\mu$ supported over $[n]$, and a partition $\mathcal{I} = (I_1,
	I_2, \ldots, I_r)$, using $r \cdot \poly(\log n, 1/\epsilon, \log(1/\delta))$ conditional
	samples from $\mu$ we can construct, with probability at least $1 - \delta$, an atlas for some
	distribution $\mu_\mathcal{I}$ that is $\epsilon$-close to $\mu$.
\end{lemma}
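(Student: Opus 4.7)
The plan is to construct the atlas $\mathcal{A}' = (\mathcal{I},\mathcal{M}')$ where $\mathcal{M}' = (M'_1,\ldots,M'_r)$ by learning each approximate inventory $M'_j$ separately, allotting $\poly(\log n,1/\epsilon,\log(1/\delta))$ conditional samples to each of the $r$ intervals. Given such $M'_j$'s, the corresponding $\mu_{\mathcal{I}}$ will be the distribution obtained by sorting each $M'_j$ in, say, decreasing order and placing the values on the elements of $I_j$ sorted identically by $\mu(\cdot)$; this is the distribution conforming to $\mathcal{A}'$ that is optimally matched to $\mu$ interval-by-interval, so $d(\mu,\mu_{\mathcal{I}})$ is the sum of per-interval Wasserstein-type distances between $M_j$ and $M'_j$. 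It suffices to make this total error at most $\epsilon$.

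As a first step, I would use Lemma \ref{lem:learn-dist} on the coarsening $\hat{\mu}_{\mathcal{I}}$ (which is a distribution over $[r]$ simulated by unconditional samples of $\mu$) to estimate every $\mu(I_j)$ to within additive error $\epsilon/(10r)$ using $\tilde{O}(r/\epsilon^2)$ queries. The intervals for which the estimate falls below $\epsilon/(5r)$ together carry mass at most $\epsilon/5$ and can be assigned any inventory consistent with the estimated weight (e.g.\ all weight on a single element), contributing at most $\epsilon/5$ to the final $\ell_1$ error. The remaining ``heavy'' intervals are what we actually need to work on.

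For each heavy $I_j$ with estimated weight $\hat{p}_j$, I would bucket its elements geometrically: fixing $\alpha = \Theta(\epsilon)$, let $B_{j,k} = \{i\in I_j : \mu(i) \in ((1+\alpha)^{k-1}\hat{p}_j/|I_j|,(1+\alpha)^k\hat{p}_j/|I_j|]\}$ for $k$ spanning $O(\log n/\alpha)$ magnitudes. If we obtain, for every bucket contributing at least $\epsilon\hat{p}_j/\log n$ to $\mu(I_j)$, an approximation of $|B_{j,k}|$ accurate to a $(1\pm\epsilon)$ multiplicative factor, then placing $|B_{j,k}|$ copies of the bucket midpoint into $M'_j$ (and lumping every smaller bucket into a single negligible contribution) yields a multiset whose sorted matching to $M_j$ gives $\ell_1$ error $O(\epsilon\mu(I_j))$ on $I_j$. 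Summed over $j$, this contributes $O(\epsilon)$ to $d(\mu,\mu_{\mathcal{I}})$.

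The main obstacle, and where conditional sampling power is essential, is estimating the sizes $|B_{j,k}|$ without knowing a priori which elements of $I_j$ lie in which bucket. This is precisely what the $\epsilon$-trimming sampler of \cite{ChakrabortyFGM13} is designed to do: using $\poly(\log n,1/\epsilon)$ conditional samples, it approximately identifies each relevant bucket and returns both a near-uniform sampler on it and an estimate of its mass relative to the whole. We apply it to $\restrict{\mu}{I_j}$ by trivially intersecting every query set with $I_j$. Boosting each per-bucket failure probability to $\delta/(r\cdot O(\log n/\alpha))$ and taking a union bound over all $r$ intervals and their $O(\log n/\alpha)$ buckets yields total failure probability at most $\delta$, while keeping the per-interval sample budget at $\poly(\log n,1/\epsilon,\log(1/\delta))$ as required.
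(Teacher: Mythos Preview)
Your high-level plan is sound and shares the paper's core mechanism: use the $\epsilon$-trimming sampler of \cite{ChakrabortyFGM13} to round the probability values into a $\poly(\log n,1/\epsilon)$-sized set, then estimate how many elements of each interval fall into each value-bucket, and assemble the inventories from these counts. The paper, however, organizes this differently: it applies the trimming sampler \emph{once globally} to $\mu$ (not separately to each $\restrict{\mu}{I_j}$), obtaining a single rounded distribution $\overline{\mu}$; then it invokes a dedicated counting lemma (Lemma~\ref{lem:estimate}) on the global partition $\{I_{j,k}\}_{j,k}$ into $r\cdot\poly(\log n,1/\epsilon)$ pieces to get counts $m_{j,k}\le|I_{j,k}|$ with small total mass undershoot, and finally normalizes via Lemma~\ref{lem:normalize-close}. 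The global approach avoids your separate coarsening/scaling step entirely.

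A few points where your write-up is loose. First, your description of the trimming sampler is inaccurate: it does not ``return a near-uniform sampler on a bucket''; it returns pairs $(r,\overline{\mu}(r))$ with $r$ drawn from a rounded distribution $\overline{\mu}$ whose values lie in a known small set. From these labeled samples one \emph{then} estimates bucket masses (this is exactly the content of Lemma~\ref{lem:estimate}, which you would need to reprove or cite). Second, your claim of learning every $\mu(I_j)$ to additive error $\epsilon/(10r)$ with $\tilde{O}(r/\epsilon^2)$ samples is off by a factor of $r$; fortunately you only need $\ell_1$ accuracy $O(\epsilon)$ on the coarsening (which $\tilde{O}(r/\epsilon^2)$ does give), since the total scaling error is $\sum_j|\hat{p}_j-\mu(I_j)|$. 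Third, you omit the normalization step: the assembled multisets will generally not sum to~$1$, and you need Lemma~\ref{lem:normalize-close} (or its equivalent) to pass to a genuine atlas while paying only an extra $O(\epsilon)$ in distance.
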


First, we show how this implies Theorem \ref{thm:atlas-test}. To prove it, we give as Algorithm \ref{alg:atlas-test} a formal description of the test.

\begin{algorithm}
	[htpb]
	\caption{Adaptive conditional tester for properties $k$-characterized by atlases}
	\label{alg:atlas-test}
	
	\KwInput{A distribution $\mu$ supported over $[n]$, a function $k:(0,1] \times \mathbb N \to
		\mathbb N$, accuracy parameter $\epsilon > 0$, a property $\mathcal{C}$ of distributions that is
		$k$-characterized by the set of atlases $\mathbb A$}
	
	Use Algorithm \ref{alg:pulled-partition} with input values $(\mu,\epsilon/5k(\epsilon/5,n),1/6)$ to obtain a partition $\mathcal{I}$ with $|\mathcal{I}|\le 20k(\epsilon/5,n)\log(n)\log(1/\epsilon)/\epsilon$
	\label{step:atlas-fine}
	
	Use Lemma \ref{lem:atlas-learn} with accuracy parameter $\epsilon/5$ and error parameter
	$1/6$ to obtain an atlas $\mathcal{A}_{\mathcal{I}'}$ corresponding to $\mathcal{I}'$
	\label{step:learn-atlas}
	
	\lIf* {there exists $\chi \in \mathcal{C}$ that is $\epsilon/5$-close to conforming to
		$\mathcal{A}_{\mathcal{I}'}$} {$\accept$} \lElse {$\reject$}
	\label{step:atlas-decide}
\end{algorithm}

\begin{lemma}
	[completeness]
	Let $\mathcal{C}$ be a property of distributions that is $k$-characterized by atlases, and let $\mu$
	be any distribution supported over $[n]$. If $\mu \in
	\mathcal{C}$, then with probability at least $2/3$ Algorithm \ref{alg:atlas-test} accepts.
	\label{lem:atlas-complete}
\end{lemma}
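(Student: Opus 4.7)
The overall plan is to observe that completeness for Algorithm \ref{alg:atlas-test} is considerably easier than soundness will be: we do not need to use the fact that the partition returned in Step \ref{step:atlas-fine} is $(\epsilon/5k)$-fine, nor do we need Lemma \ref{lem:atlas-pulled}. We simply exhibit the witness $\chi$ in Step \ref{step:atlas-decide} directly as $\mu$ itself.

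First I would invoke Lemma \ref{lem:atlas-learn} with accuracy parameter $\epsilon/5$ and error parameter $1/6$. With probability at least $5/6$, this step returns an atlas $\mathcal{A}_{\mathcal{I}'}$ which, by the wording of that lemma, is the atlas of some distribution $\mu_{\mathcal{I}}$ satisfying $d(\mu,\mu_{\mathcal{I}})\le\epsilon/5$. Crucially, this guarantee is stated for an arbitrary input partition, so whether the pulled partition happens to be fine or not is immaterial for completeness (the fine-partition guarantee of Algorithm \ref{alg:pulled-partition} will only be needed when controlling sample complexity and, later, soundness).

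Conditioned on this event, $\mu_{\mathcal{I}}$ trivially conforms to its own atlas $\mathcal{A}_{\mathcal{I}'}$. Hence $\mu$, which by assumption belongs to $\mathcal{C}$, is within $\ell_1$-distance $\epsilon/5$ of a distribution conforming to $\mathcal{A}_{\mathcal{I}'}$, namely $\mu_{\mathcal{I}}$. Consequently, taking $\chi=\mu\in\mathcal{C}$ in Step \ref{step:atlas-decide} satisfies the acceptance predicate, and the algorithm accepts. The overall success probability is therefore at least $5/6\ge 2/3$, as required.

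Since the argument essentially bypasses the fine-partition machinery, there is no real obstacle to overcome. The only point worth verifying is that the output of Lemma \ref{lem:atlas-learn} is genuinely the atlas of a distribution close to $\mu$, rather than just some atlas that is ``close'' to the true atlas in a weaker sense; this is exactly what its statement asserts. Given that, no triangle inequality is even needed and the witness $\chi=\mu$ closes the proof immediately. The substantive work will be in the soundness counterpart, where the $(\epsilon/5k)$-fineness of $\mathcal{I}$ together with Lemma \ref{lem:atlas-pulled} must be used to rule out a deceiving $\chi\in\mathcal{C}$.
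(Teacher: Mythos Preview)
Your proposal is correct and follows essentially the same argument as the paper: invoke Lemma~\ref{lem:atlas-learn} to obtain (with probability at least $5/6$) an atlas conformed to by some $\mu_{\mathcal{I}}$ that is $\epsilon/5$-close to $\mu$, and then take $\chi=\mu\in\mathcal{C}$ as the witness in Step~\ref{step:atlas-decide}. Your additional remark that the fine-partition guarantee from Step~\ref{step:atlas-fine} is not needed for completeness is accurate and anticipates correctly that it will be used only for soundness.
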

\begin{proof}
	In Step \ref{step:learn-atlas}, with probability at least $5/6>2/3$,
	we get an atlas $\mathcal{A}_{\mathcal{I}'}$ such that there is a distribution
	$\mu_{\mathcal{I}'}$ that is $\epsilon/5$-close to $\mu$. Step \ref{step:atlas-decide} then
	accepts on account of $\chi = \mu$.
\end{proof}

\begin{lemma}
	[soundness]
	Let $\mathcal{C}$ be a property of distributions that is $k$-characterized by atlases, and let $\mu$
	be any distribution supported over $[n]$. If $d(\mu,\mathcal{C}) > \epsilon$,
	then with probability at least $2/3$ Algorithm \ref{alg:atlas-test} rejects.
	\label{lem:atlas-sound}
\end{lemma}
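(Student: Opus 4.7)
The plan is to prove the contrapositive: assuming both high-probability events from Steps \ref{step:atlas-fine} and \ref{step:learn-atlas} hold (together occurring with probability at least $2/3$ by union bound), I will show that acceptance of Algorithm \ref{alg:atlas-test} forces $d(\mu,\mathcal{C}) \le O(\epsilon)$. Setting $k=k(\epsilon/5,n)$, the conditioned events yield a partition $\mathcal{I}$ that is $\epsilon/(5k)$-fine for $\mu$ and a distribution $\nu$ conforming to $\mathcal{A}_{\mathcal{I}'}$ with $d(\mu,\nu) \le \epsilon/5$. Acceptance gives some $\chi \in \mathcal{C}$ and $\chi'$ conforming to $\mathcal{A}_{\mathcal{I}'}$ with $d(\chi,\chi') \le \epsilon/5$, where $\chi$ conforms to some $\mathcal{A}_\chi=(\mathcal{I}_\chi,\mathcal{M}_\chi) \in \mathbb{A}$ with $|\mathcal{I}_\chi|\le k$.

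My first move is to pivot from $\chi'$ to a witness attached to $\chi$ via Lemma \ref{lem:same-distance}: both $\nu$ and $\chi'$ conform to $\mathcal{A}_{\mathcal{I}'}$, and $\chi'$ is $\epsilon/5$-close to $\chi$, which conforms to the atlas $\mathcal{A}_\chi^{\mathcal{I}}$ of $\chi$ over the partition $\mathcal{I}$ (which shares its interval partition with $\mathcal{A}_{\mathcal{I}'}$). The lemma then produces a distribution $\chi^*$ conforming to $\mathcal{A}_\chi^{\mathcal{I}}$ with $d(\nu,\chi^*)\le\epsilon/5$, hence by the triangle inequality $d(\mu,\chi^*)\le 2\epsilon/5$. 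It now suffices to bound $d(\chi^*,\mathcal{C})$.

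The main obstacle is that $\mathcal{I}$ is guaranteed fine only for $\mu$, not for $\chi$, so Lemma \ref{lem:atlas-pulled} cannot be invoked directly with $\chi$ playing the role of its $\mu$. I will imitate its proof while carrying the key weight bound along the chain $\mu \to \nu \to \chi' \to \chi$. Let $\mathcal{P}\subseteq\mathcal{I}$ consist of the intervals of $\mathcal{I}$ intersecting more than one interval of $\mathcal{I}_\chi$; there are at most $k-1$ of them and all are non-singletons, so fineness gives $\mu(\bigcup\mathcal{P})\le\epsilon/5$. Combining this with $d(\mu,\nu)\le\epsilon/5$, the identity $\nu(\bigcup\mathcal{P})=\chi'(\bigcup\mathcal{P})$ (as they share an atlas), and $d(\chi,\chi')\le\epsilon/5$ propagates to $\chi(\bigcup\mathcal{P})\le 3\epsilon/5$; and $\chi^*(\bigcup\mathcal{P})=\chi(\bigcup\mathcal{P})$ since they share the atlas $\mathcal{A}_\chi^{\mathcal{I}}$.

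Now I apply the patching move from Lemma \ref{lem:atlas-pulled}: define $\tilde\chi$ to equal $\chi$ on $\bigcup\mathcal{P}$ and $\chi^*$ on its complement. Since intervals in $\mathcal{I}\setminus\mathcal{P}$ each sit inside a single $\mathcal{I}_\chi$-interval and $\chi^*,\chi$ share $\mathcal{I}$-inventories, $\tilde\chi$ has the same inventory as $\chi$ over every interval of $\mathcal{I}_\chi$; thus $\tilde\chi$ conforms to $\mathcal{A}_\chi\in\mathbb{A}$, giving $d(\tilde\chi,\mathcal{C})\le\epsilon/5$. Meanwhile $d(\chi^*,\tilde\chi)\le\chi^*(\bigcup\mathcal{P})+\chi(\bigcup\mathcal{P})=2\chi(\bigcup\mathcal{P})\le 6\epsilon/5$. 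Summing via the triangle inequality gives $d(\mu,\mathcal{C}) \le 2\epsilon/5 + 6\epsilon/5 + \epsilon/5 = 9\epsilon/5$; to close the gap to the stated threshold $\epsilon$, the various $\epsilon/5$ constants in Algorithm \ref{alg:atlas-test}'s parameters should be tightened to $\epsilon/C$ for a sufficiently large constant $C\ge 9$, after which the same argument delivers $d(\mu,\mathcal{C})\le\epsilon$, contradicting $d(\mu,\mathcal{C})>\epsilon$ and establishing rejection with probability at least $2/3$.
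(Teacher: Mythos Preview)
Your argument is essentially correct, but it takes a genuinely different route from the paper's and ends with a weaker constant that does not quite establish the lemma as stated.

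The paper's proof is much shorter because it invokes the prepared machinery: it observes that $\chi'\in\mathcal{C}_{\epsilon/5}$, applies Lemma~\ref{lem:close-characterized} to say that $\mathcal{C}_{\epsilon/5}$ is itself $k$-characterized by some atlas family $\mathbb{A}_{\epsilon/5}$, and then applies Lemma~\ref{lem:atlas-pulled} (with $\chi'$ in the role of its ``$\mu$'') to conclude that $\mu_{\mathcal{I}'}$, which conforms to the same atlas $\mathcal{A}_{\mathcal{I}'}$ as $\chi'$, is $3\epsilon/5$-close to $\mathcal{C}_{\epsilon/5}$, hence $4\epsilon/5$-close to $\mathcal{C}$; one more $\epsilon/5$ from $d(\mu,\mu_{\mathcal{I}'})$ finishes with exactly $\epsilon$. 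You instead bypass Lemma~\ref{lem:close-characterized} entirely: you use Lemma~\ref{lem:same-distance} to transport the witness from $\chi'$ over to $\nu$, obtaining $\chi^*$ sharing $\chi$'s $\mathcal{I}$-atlas, and then replay the patching construction from the \emph{proof} of Lemma~\ref{lem:atlas-pulled} by hand. This is a legitimate and more elementary path, and it has the merit that you explicitly carry the weight bound on $\bigcup\mathcal{P}$ along the chain $\mu\to\nu\to\chi'\to\chi$, addressing head-on the issue that $\mathcal{I}$ is only known to be fine for $\mu$ (a point the paper's invocation of Lemma~\ref{lem:atlas-pulled} glosses over).

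The cost is that your chain of triangle inequalities accumulates to $9\epsilon/5$ rather than $\epsilon$. Your proposed fix---retroactively tightening the $\epsilon/5$'s in Algorithm~\ref{alg:atlas-test} to $\epsilon/C$---is not acceptable here: the lemma is a statement about the algorithm with its stated parameters, so you are proving a weaker soundness than claimed. If you want to stick with your route, you should either sharpen your estimates (e.g.\ use $|\mu(S)-\nu(S)|\le d(\mu,\nu)/2$ throughout, which already helps) or, better, adopt the paper's trick of passing to $\mathcal{C}_{\epsilon/5}$ via Lemma~\ref{lem:close-characterized}, which packages the bookkeeping so that the constants line up exactly.
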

\begin{proof}
	With probability at least $2/3$, for $k=k(\epsilon/5,n)$ we get an $\epsilon/5k$-fine partition
	$\mathcal{I}'$ in Step \ref{step:atlas-fine}, as well as an atlas $\mathcal{A}_{\mathcal{I}'}$
	in Step \ref{step:learn-atlas} such that there is a distribution $\mu_{\mathcal{I}'}$ conforming
	to it that is $\epsilon/5$-close to $\mu$.  
	
	Suppose that the algorithm accepted in Step \ref{step:atlas-decide} on account of $\chi \in
	\mathcal{C}$.  Then there is a $\chi'$ that is $\epsilon/5$-close to $\chi$ and conforms to
	$\mathcal{A}_{\mathcal{I}'}$.  By Lemma \ref{lem:close-characterized}, the property of being
	$\epsilon/5$-close to $\mathcal{C}$ is itself $k$-characterized by atlases. Let $\mathbb A_{\epsilon/5}$ be the
	collection of atlases characterizing it.  Using Lemma \ref{lem:atlas-pulled} with $\chi'$ and
	$\mathbb A_{\epsilon/5}$, we know that $\chi'$ is $3\epsilon/5$-close to some $\chi$, which is
	in $\mathcal{C}_{\epsilon/5}$ and thus
	$\epsilon/5$-close to $\mathcal{C}$.  Since $\chi'$ is also $\epsilon/5$-close to $\mu$, we
	obtain that $\mu$ is $\epsilon$-close to $\mathcal{C}$ by the triangle inequality, contradicting
	$d(\mu, \mathcal{C}) > \epsilon$.
\end{proof}

\begin{proof}[Proof of Theorem \ref{thm:atlas-test}]
	Given a distribution $\mu$, supported on $[n]$, and a property $\mathcal{C}$ of distributions that is $k$-characterized by
	atlases, we use Algorithm \ref{alg:atlas-test}. The correctness follows from Lemmas
	\ref{lem:atlas-complete} and \ref{lem:atlas-sound}. The number of samples made in Step \ref{step:atlas-fine}
	is clearly dominated by the number of samples in Step \ref{step:learn-atlas}, which is
	$k(\epsilon/5,n) \cdot \poly(\log n, 1/\epsilon)$. 
\end{proof}


\section{Constructing an atlas for a distribution}\label{sec:atlas-learn}

Before we prove Lemma~\ref{lem:atlas-learn}, we will define the notion of value-distances and
prove lemmas that will be useful for the proof of the theorem.

\begin{definition}[value-distance]
	Given two multisets $A$,$B$ of real numbers, both of the same size (e.g. two inventories over an
	interval $[a,b]$), the {\em value-distance} between them is the minimum $\ell_1$ distance between
	a vector that conforms to $A$ and a vector that conforms to $B$.
\end{definition}

The following observation gives a simple method to calculate the value-distances between two
multisets $A$ and $B$.

\begin{observation}
	The value-distance between $A$ and $B$ is equal to the $\ell_1$ distance between the two vectors
	conforming to the respective sorting of the two multisets.
\end{observation}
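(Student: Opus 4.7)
The plan is to prove the observation by a standard rearrangement argument. Write $a_1 \le \cdots \le a_n$ for the sorted enumeration of $A$ and $b_1 \le \cdots \le b_n$ for that of $B$. After fixing the enumeration of $A$ in sorted order, the value-distance becomes $\min_\sigma \sum_{i=1}^n |a_i - b_{\sigma(i)}|$, where $\sigma$ ranges over permutations of $[n]$. I would show that this minimum is attained at $\sigma = \mathrm{id}$, which is exactly the asserted equality.

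The workhorse is the two-element swap inequality: for reals $a \le c$ and $b \le d$,
\[
|a - b| + |c - d| \le |a - d| + |c - b|.
\]
Granting this, whenever $\sigma$ has an inversion, i.e.\ indices $i < j$ with $\sigma(i) > \sigma(j)$, applying the inequality with $(a,c) = (a_i, a_j)$ and $(b,d) = (b_{\sigma(j)}, b_{\sigma(i)})$ shows that swapping $\sigma(i)$ and $\sigma(j)$ does not increase the total sum while strictly decreasing the inversion count. Finitely many such swaps therefore reach $\sigma = \mathrm{id}$, which is thus a minimizer.

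To establish the two-element inequality cleanly (and handle repetitions in $A$, $B$ uniformly) I would use the layer-cake representation $|x-y| = \int_{\mathbb{R}} |\mathbf{1}[x>t] - \mathbf{1}[y>t]|\, dt$. For each fixed $t$ the four indicators of $a, b, c, d$ lie in $\{0,1\}$ and satisfy $\mathbf{1}[a>t] \le \mathbf{1}[c>t]$ and $\mathbf{1}[b>t] \le \mathbf{1}[d>t]$, so the real-valued inequality reduces at the bit level to checking that for $(x_1, x_2), (y_1, y_2) \in \{(0,0), (0,1), (1,1)\}$ one has $|x_1 - y_1| + |x_2 - y_2| \le |x_1 - y_2| + |x_2 - y_1|$; this is a short finite enumeration of at most nine cases (each giving equality except $(x_1,x_2,y_1,y_2)=(0,1,0,1)$ and its symmetric partner, which give $0 \le 2$). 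Integrating over $t$ then lifts the inequality to the real case.

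I do not anticipate any real obstacle: this is essentially the classical fact that the one-dimensional Wasserstein-$1$ distance between two empirical measures of equal mass equals the $\ell_1$ distance between their sorted quantile vectors. The only mild care needed is that multisets may contain repeated values, which the layer-cake argument absorbs automatically.
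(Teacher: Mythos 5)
Your proof is correct and follows essentially the same route as the paper's: both are rearrangement arguments whose core is the two-element swap inequality showing that un-crossing a mismatched pair never increases the $\ell_1$ cost (the paper peels off the smallest element and inducts, you remove inversions one swap at a time, which is the same idea). You additionally supply a clean layer-cake justification of the swap inequality, which the paper leaves as ``it is not hard to see.''
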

\begin{proof}
	Given two vectors $v$ and $w$ corresponding to $A$ and $B$ achieving the value-distance, first
	assume that the smallest value of $A$ is not larger than the smallest value of $B$. Assume without
	loss of generality (by permuting both $v$ and $w$) that $v_1$ is of the smallest value among those
	of $A$. It is not hard to see that, in this case, one can make $w_1$ to be of the smallest value among
	those of $B$ without increasing the distance (by swapping the two values), and from here one can
	proceed by induction over $|A|=|B|$.
\end{proof}

We now prove two lemmas that will be useful for the proof of Lemma~\ref{lem:atlas-learn}.
\begin{lemma}\label{lem:valuedist}
	Let $A$ and $B$ be two multisets of the same size, both with members whose values range in
	$\{0,\alpha_1,\ldots,\alpha_r\}$. Let $m_j$ be the number of appearances of $\alpha_j$ in $A$, and
	$n_j$ the corresponding number in $B$. If $m_j\leq n_j$ for every $1\leq j\leq r$, then the
	value-distance between $A$ and $B$ is bounded by $\sum_{j=1}^r(n_j-m_j)\alpha_j$.
\end{lemma}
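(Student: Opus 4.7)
The plan is to directly exhibit vectors $v$ and $w$ conforming to $A$ and $B$ respectively whose $\ell_1$ distance equals (or is at most) $\sum_{j=1}^r (n_j - m_j)\alpha_j$. Since the value-distance is by definition the minimum $\ell_1$ distance over such pairs, this immediately gives the desired upper bound. There is no need to invoke the sorting observation; we just construct any single witnessing pair.

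To set up the counts, let $m_0$ and $n_0$ denote the number of zeros in $A$ and $B$ respectively. Since $|A| = |B|$, summing occurrences gives $m_0 + \sum_{j=1}^r m_j = n_0 + \sum_{j=1}^r n_j$, hence $m_0 - n_0 = \sum_{j=1}^r (n_j - m_j) \geq 0$. So $A$ has at least as many zeros as $B$, and the ``surplus'' of zeros in $A$ exactly matches the total deficit across the $\alpha_j$'s.

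The construction I would use proceeds in blocks. For each $j = 1, \ldots, r$, reserve a block of $n_j$ positions: in $w$ place all $n_j$ copies of $\alpha_j$ coming from $B$, and in $v$ place the $m_j$ copies of $\alpha_j$ from $A$ together with $n_j - m_j$ zeros (drawn from $A$'s stock of zeros). This block contributes exactly $(n_j - m_j)\alpha_j$ to $\|v - w\|_1$. Finally, fill the remaining $n_0$ positions with zeros in both $v$ and $w$; this is feasible because $B$ has $n_0$ zeros left and $A$ has $m_0 - \sum_{j=1}^r (n_j - m_j) = n_0$ zeros left by the identity above. This last block contributes $0$ to the distance.

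It then remains only to verify that $v$ is a rearrangement of $A$ and $w$ is a rearrangement of $B$ (both immediate from the counts used in each block) and to sum the per-block contributions to obtain $\|v - w\|_1 = \sum_{j=1}^r (n_j - m_j)\alpha_j$. There is no real obstacle; the only subtle point is the bookkeeping that ensures the zeros in $A$ suffice to fill both the ``padding'' inside each $\alpha_j$-block of $v$ and the final trailing block, which is exactly the content of the identity $m_0 - n_0 = \sum_{j=1}^r (n_j - m_j)$.
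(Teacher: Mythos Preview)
Your proposal is correct and is essentially the same construction as the paper's own proof: both build the witnessing vectors block by block, placing in block $j$ all $n_j$ copies of $\alpha_j$ for $B$ and the $m_j$ copies of $\alpha_j$ padded by $n_j-m_j$ zeros for $A$, then filling the tail with zeros. If anything, your version is slightly more careful, since you explicitly verify via $m_0-n_0=\sum_j(n_j-m_j)$ that $A$ has enough zeros to fill the padding and the tail, a point the paper leaves implicit.
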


\begin{proof}
	Let $v_A=\{a_1,\dots,a_l\}$ and $v_B = \{b_1,\dots,b_l\}$ be two vectors such that $a_1=\dots
	=a_{m_1}=\alpha_1$, $a_{m_1+1}=\dots=a_{n_1}=0$ and $b_1=\dots =b_{n_1}=\alpha_1$, and similarly
	for $j\in\{1,\dots,r-1\}$, $a_{n_j+1}=\dots=a_{n_j+m_{j+1}}=\alpha_{j+1}$,
	$a_{n_j+m_{j+1}+1}=\dots=a_{n_j+n_{j+1}}=0$ and $b_{n_j+1}=\dots=b_{n_j+n_{j+1}}=\alpha_{j+1}$.
	For $k > \sum_{j=1}^r n_j$, we set $a_k = b_k = 0$.  The vectors $v_A$ and $v_B$ conform to the
	multisets $A$ and $B$ respectively, and the $\ell_1$ distance between the two vectors is
	$\sum_{j=1}^r(n_j-m_j) \alpha_j$, so the lemma follows.
\end{proof}

\begin{lemma}
	Let $\mu$ be a probability distribution over $\{1,\dots,n\}$, and let $\tilde{\mu}$ be a vector of
	size $n$ where each entry is a real number in the interval $[0,1]$, such that
	$\sum_{i\in[n]}|\mu(i) - \tilde{\mu}(i)| \le \epsilon$. Let $\hat{\mu}$ be a probability
	distribution over $\{1,\dots,n\}$ defined as $\hat{\mu}(i) = \tilde{\mu}(i)/ \sum_{i\in
		[n]}\tilde{\mu}(i)$ for all $i\in [n]$. Then $\sum_{i\in [n]}|\mu(i) - \hat{\mu}(i)| \le
	5\epsilon$.
	\label{lem:normalize-close}
\end{lemma}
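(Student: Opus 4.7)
The plan is to reduce this to two applications of the triangle inequality, using the fact that $\tilde{\mu}$ is already $\ell_1$-close to $\mu$ and that its total mass $S = \sum_{i\in[n]} \tilde{\mu}(i)$ is therefore close to $1$. In fact, I expect the bound of $5\epsilon$ stated in the lemma to be quite loose; a direct argument should yield $2\epsilon$.

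First I would observe that by the reverse triangle inequality,
\[
  |S - 1| = \Bigl|\sum_{i\in[n]} \tilde{\mu}(i) - \sum_{i\in[n]} \mu(i)\Bigr| \le \sum_{i\in[n]}|\tilde{\mu}(i) - \mu(i)| \le \epsilon.
\]
In particular, if $\epsilon < 1$ then $S > 0$ and $\hat{\mu}$ is well-defined as a probability distribution (the case $\epsilon \ge 1$ is trivial since any two distributions are at most $2$-apart in $\ell_1$). Then I would split
\[
  \sum_{i\in[n]} |\mu(i) - \hat{\mu}(i)| \le \sum_{i\in[n]} |\mu(i) - \tilde{\mu}(i)| + \sum_{i\in[n]} |\tilde{\mu}(i) - \hat{\mu}(i)|.
\]
The first sum on the right is at most $\epsilon$ by hypothesis.

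For the second sum, I would factor out the common scalar:
\[
  \sum_{i\in[n]} |\tilde{\mu}(i) - \hat{\mu}(i)| = \sum_{i\in[n]} \tilde{\mu}(i)\, \Bigl|1 - \tfrac{1}{S}\Bigr| = S \cdot \Bigl|1 - \tfrac{1}{S}\Bigr| = |S - 1| \le \epsilon,
\]
using the first step. Combining, $\sum_i |\mu(i) - \hat{\mu}(i)| \le 2\epsilon \le 5\epsilon$, finishing the proof.

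There is no real obstacle here; the only subtlety is handling the degenerate case $S = 0$ (which only arises when $\epsilon \ge 1$, in which case the $5\epsilon$ bound is vacuous against the trivial $\ell_1$ upper bound of $2$). The slack between $2\epsilon$ and $5\epsilon$ in the lemma statement is presumably retained to simplify downstream applications rather than indicating a tighter analysis is needed.
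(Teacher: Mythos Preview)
Your proof is correct and follows the same overall structure as the paper's: bound $|S-1|\le\epsilon$, then apply the triangle inequality through $\tilde{\mu}$. The difference is in how the rescaling term $\sum_i|\tilde{\mu}(i)-\hat{\mu}(i)|$ is handled. The paper does a case split on whether $S<1$ or $S\ge 1$, obtains the pointwise bound $|\tilde{\mu}(i)-\hat{\mu}(i)|\le 2\epsilon\,\tilde{\mu}(i)$, and sums to get $2\epsilon(1+\epsilon)$, whence the total $5\epsilon$. You instead observe directly that $\sum_i \tilde{\mu}(i)\,|1-1/S| = S\cdot|1-1/S| = |S-1|\le\epsilon$, which is both simpler and gives the sharper $2\epsilon$. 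Your treatment of the degenerate $S=0$ case is also more explicit than the paper's.
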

\begin{proof}
	We have $|\sum_{i\in [n]}\mu(i) - \sum_{i\in [n]}\tilde{\mu}(i)| \le \epsilon$. Therefore, $1 -
	\epsilon \le \sum_{i\in [n]}\tilde{\mu}(i) \le 1 + \epsilon$. If $\sum_{i\in [n]} \tilde{\mu}(i) <
	1$, then $\hat{\mu}(i) \le \tilde{\mu}(i)/ (1 - \epsilon)$ and $\hat{\mu}(i) > \tilde{\mu}(i)$.
	Therefore, $\hat{\mu}(i) \le (1 + 2\epsilon)\tilde{\mu}(i)$ and hence $0 \le \hat{\mu}(i) -
	\tilde{\mu}(i) \le 2\epsilon \tilde{\mu}(i)$. If $\sum_{i\in [n]} \tilde{\mu}(i) \ge 1$, then
	$\hat{\mu}(i) \ge \tilde{\mu}(i)/(1 + \epsilon) \ge (1 - \epsilon)\tilde{\mu}(i)$ and
	$\hat{\mu}(i) \le \tilde{\mu}(i)$.  Therefore $0 \le \tilde{\mu}(i) - \hat{\mu}(i) \le \epsilon
	\tilde{\mu}(i)$.  Therefore $|\tilde{\mu}(i) - \hat{\mu}(i)| \le 2\epsilon \tilde{\mu}(i)$, in all
	cases, for all $i$.
	
	Now, $\sum_{i\in [n]} |\mu(i) - \hat{\mu}(i)| \le \sum_{i\in [n]} |\mu(i) - \tilde{\mu}(i)| +
	\sum_{i\in [n]} |\tilde{\mu}(i) - \hat{\mu}(i)|$.  Since $\sum_{i\in [n]} |\tilde{\mu}(i) -
	\hat{\mu}(i)| \le 2\epsilon\sum_{i\in [n]}\tilde{\mu}(i) \le 2\epsilon (1 + \epsilon)$, we get
	that $\sum_{i\in [n]} |\mu(i) - \hat{\mu}(i)| \le 5\epsilon$.
\end{proof}

Now we recall the definition of an $\epsilon$-trimming sampler from \cite{ChakrabortyFGM13}.

\begin{definition}[$\epsilon$-trimming sampler]
	An \emph{$\epsilon$-trimming sampler} with $s$ samples for a distribution $\mu$ supported over
	$[n]$, is an algorithm that has conditional query access to the distribution $\mu$ and returns $s$
	pairs of values $(r,\overline{\mu}(r))$ (for $r=0$, $\overline{\mu}(r)$ is not output) from a
	distribution $\overline{\mu}$ supported on $\{0\}\cup [n]$, such that
	$\sum_{i\in [n]}|\overline{\mu}(i)-\mu(i)|\le 4\epsilon$, and each $r$ is
	independently drawn from $\overline{\mu}$.  Furthermore, there is a set $P$ of $\poly(\log n,
	1/\epsilon)$ real numbers such that for all $i$ either $\overline{\mu}(i)= 0$ or
	$\overline{\mu}(i)\in P$. 
	\label{defn:trimming-sampler}
	%
\end{definition}

The existence of an $\epsilon$-trimming sampler with a small set of values was proved
in~\cite{ChakrabortyFGM13}. Let us formally state this lemma.

\begin{lemma}[\cite{ChakrabortyFGM13}]\label{lem:trimming}
	Given conditional query access to a distribution $\mu$ supported on $[n]$, there is an
	$\epsilon$-trimming sampler that makes $32s\cdot\epsilon^{-4} \cdot \log^5n \cdot
	\log(s\delta^{-1} \log n)$ many conditional queries to $\mu$, and returns, with probability at
	least $1-\delta$, a sequence of $s$ samples from a distribution $\overline{\mu}$, where $P =
	\{\frac{(1+\epsilon)^{i-1}\epsilon}{n} \mid 1 \le i \le k-1\}$ for $k=\log n\log(\epsilon^{-1})/
	\log^2(1 + \epsilon)$.
	%
\end{lemma}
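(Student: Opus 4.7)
The plan is to realize the sampler as $s$ independent rounds, where each round draws one sample $r\sim\mu$ (by invoking the oracle with $A=[n]$) and then runs a weight-estimation subroutine that either produces a $(1+\epsilon)$-multiplicative approximation of $\mu(r)$ or certifies $\mu(r)<\epsilon/n$ (in which case the round emits the trimmed symbol $0$). Rounding each surviving estimate \emph{down} to the nearest value of the form $(1+\epsilon)^{i-1}\epsilon/n$ determines the bucketed output and fixes the set $P$ of size $k$. The implicit distribution $\overline{\mu}$ is then the distribution obtained by drawing $r\sim\mu$, rounding $\mu(r)$ to its bucket (or to $0$ if below threshold), and piling any residual mass onto the symbol $0$.

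For the per-sample estimation subroutine, I would use random-subset amplification in the conditional model. For each level $i\in\{0,1,\ldots,\log n\}$, build $A_i\subseteq[n]$ by including each element independently with probability $2^{-i}$ and then forcing $r$ into $A_i$. Two empirical quantities are measured for each level: $\hat q_i$, the fraction of $\tilde O(\epsilon^{-2})$ conditional samples from $A_i$ that are equal to $r$ (an estimator of $\mu(r)/\mu(A_i)$); and $\hat s_i$, the fraction of unconditional samples from $\mu$ that fall into $A_i$ (an estimator of $\mu(A_i)$). I would then pick the level $i^\ast$ at which $\hat q_i$ lies in a moderate window such as $[\epsilon/8,1/2]$, so that both factors concentrate sharply by the Chernoff bounds of Lemma~\ref{lem:chernoff}, and output $\hat q_{i^\ast}\hat s_{i^\ast}$ as the estimate of $\mu(r)$. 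If no level places $\hat q_i$ in this window, then $\mu(r)$ must be below $\epsilon/n$ and the round trims $r$ to $0$.

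The main obstacle will be the simultaneous control of three sources of slack feeding into $\sum_{i\in[n]}|\overline{\mu}(i)-\mu(i)|\le 4\epsilon$. First, the trimmed light mass contributes at most $\epsilon$ by the $\epsilon/n$ threshold summed over the $n$ indices. Second, rounding each surviving weight down by a factor in $[1/(1+\epsilon),1]$ contributes at most $\epsilon\sum_i\mu(i)=\epsilon$. Third, estimation noise contributes the remaining $2\epsilon$, provided every one of the $s$ estimations falls in its target bucket; this is where a union bound over the $s$ samples, the $k=O(\epsilon^{-2}\log n)$ levels per sample, and the failure probabilities of the Chernoff bounds applied to $\hat q_i$ and $\hat s_i$ produces the multiplicative factor $\log(s\delta^{-1}\log n)$ in the query complexity. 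Multiplying this factor by the per-sample budget---$\tilde O(\epsilon^{-2})$ conditional samples at each of $\log n$ levels, together with an $\epsilon^{-2}\log^3 n$ overhead for making $\hat s_i$ reliable simultaneously at all levels down to the finest scale $\epsilon/n$---gives the claimed $32s\epsilon^{-4}\log^5 n\log(s\delta^{-1}\log n)$ conditional-query bound, and the statement that $P$ consists of the geometric values $(1+\epsilon)^{i-1}\epsilon/n$ follows directly from the rounding rule.
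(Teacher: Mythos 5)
The paper does not actually prove this lemma --- it is imported verbatim from \cite{ChakrabortyFGM13} --- so your proposal has to stand on its own, and it has a genuine gap at its core. Definition \ref{defn:trimming-sampler} requires the $s$ returned pairs to be $(r,\overline{\mu}(r))$ for a \emph{single} distribution $\overline{\mu}$: each $r$ must be drawn from $\overline{\mu}$ itself, and the attached value must be exactly $\overline{\mu}(r)\in P\cup\{0\}$, consistently across the whole run. Your construction runs an independent noisy weight estimation in every round and rounds the estimate down to a bucket. Any element whose true weight lies within the estimation noise of a bucket boundary (or of the trimming threshold $\epsilon/n$) will be assigned different values, or trimmed versus not trimmed, in different rounds with constant probability; a moderately heavy such element that appears twice among the $s$ samples already rules out every candidate $\overline{\mu}$. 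The event ``every estimate lands in the ideal bucket of the rounded-down $\mu$'' cannot be driven to probability $1-\delta$ by taking more samples, precisely because of these boundary elements, and in addition your index marginal is $\mu$ restricted to the survivors rather than the rounded distribution you declare $\overline{\mu}$ to be. This persistence requirement is not cosmetic: the way Lemma \ref{lem:atlas-learn} uses the sampler is to feed the reported values into Lemma \ref{lem:estimate}, whose hypothesis is that every sampled element of a class has \emph{exactly} the known weight $p_k$ under the sampled distribution; with round-to-round inconsistent labels that argument collapses. Engineering a consistent (``persistent'') assignment --- which is the technical heart of the construction in \cite{ChakrabortyFGM13} --- is exactly the idea missing from your plan.

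The per-sample estimation subroutine also does not deliver a $(1+\epsilon)$-factor estimate within the stated budget. You need $\hat s_{i^\ast}$ to approximate $\mu(A_{i^\ast})$ multiplicatively, and inside your window $\mu(A_{i^\ast})=\Theta(\mu(r)/\hat q_{i^\ast})$ can be as small as $\Theta(1/n)$ when $\mu(r)$ is near the trimming threshold; estimating such a set weight to within a $(1+O(\epsilon))$ factor from unconditional samples costs $\Omega(n/\epsilon^2)$ draws, not the $\epsilon^{-2}\log^3 n$ overhead you allot. Nor can you replace $\hat s_i$ by the expectation $2^{-i}$: $\mu(A_i)$ does not concentrate when $\mu$ has elements much heavier than $2^{-i}$, and it can be far below $2^{-i}$ when most of the mass sits on a few heavy elements that miss $A_i$. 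Two smaller soundness issues compound this: additive $\pm O(\epsilon)$ accuracy on $\hat q_i$ from $\tilde O(\epsilon^{-2})$ conditional samples is not multiplicative accuracy at the low end $\hat q_i\approx\epsilon/8$, and the fallback rule ``no level puts $\hat q_i$ in $[\epsilon/8,1/2]$, hence $\mu(r)<\epsilon/n$'' is simply false (take $\mu(r)>1/2$, so every $\hat q_i$ lies above the window). These obstacles are why the actual construction estimates weights by comparisons confined to small conditioned sets rather than by multiplying a conditional ratio with an unconditionally estimated set weight, and why your error accounting of $\epsilon+\epsilon+2\epsilon$ is not yet backed by a working procedure.
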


\subsection{Proving the main lemma}

The proof of Lemma \ref{lem:atlas-learn} depends on the following technical lemma.
\begin{lemma}
	Let $\mu$ be a distribution supported over $[n]$, and let $\mathcal{P} = (P_0, P_1, P_2, \ldots,
	P_r)$ be a partition of $[n]$ into $r + 1$ subsets with the following properties.
	\begin{enumerate}
		\item For each $P_k \in \mathcal{P}$, $\mu(i) = p_k$ for every $i \in P_k$, where $p_1, \ldots,
		p_r$ (but not $p_0$) are known.
		\item Given an $i \in [n]$ sampled from $\mu$, we can find the $k$ such that $i \in P_k$.
	\end{enumerate}
	Using $s= \tfrac{6r}{\epsilon^2} \log\left( \tfrac{r}{\delta} \right)$ samples from $\mu$, with
	probability at least $1 - \delta$, we can find $m_1, \ldots, m_r$ such that $m_k \le |P_k|$ for
	all $k \in [r]$, and $\sum_{k \in [r]} p_k (|P_k| - m_k) \le 4\epsilon$.
	\label{lem:estimate}
\end{lemma}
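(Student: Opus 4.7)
The plan is to take $s = 6r\log(r/\delta)/\epsilon^2$ independent unconditional samples from $\mu$, and for each $k\in[r]$ let $N_k$ be the number of samples that landed in $P_k$ (which we can identify by the second assumption). Then $N_k\sim\mathrm{Bin}(s,w_k)$ with $w_k:=p_k|P_k|$, so $N_k/(sp_k)$ is an unbiased estimator for $|P_k|$. However, this raw estimator will sometimes exceed $|P_k|$, violating the first requirement. I will therefore construct $m_k$ as a \emph{one-sided lower-confidence bound} for $|P_k|$: set
\[
m_k \;:=\; \max\Bigl\{0,\; \tfrac{1}{sp_k}(N_k - t_k)\Bigr\},
\]
where the additive slack $t_k$ is chosen so that, via Lemma \ref{lem:chernoff}, the event $N_k \le s w_k + t_k$ fails with probability at most $\delta/r$. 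A union bound over $k\in[r]$ then gives $m_k\le|P_k|$ for every $k$ simultaneously, with probability at least $1-\delta/2$.

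For the accuracy requirement, observe that when $N_k\ge t_k$,
\[
p_k(|P_k|-m_k) \;=\; \bigl(w_k - \tfrac{N_k}{s}\bigr) + \tfrac{t_k}{s},
\]
and when $N_k<t_k$ (so $m_k=0$) we have the trivial bound $p_k(|P_k|-m_k)=w_k$, which is itself small because a matching lower-tail Chernoff bound forces $w_k\lesssim(t_k+t_k')/s$ in that regime. A matching lower-tail bound from Lemma \ref{lem:chernoff} yields $w_k - N_k/s \le c_1\sqrt{w_k\log(r/\delta)/s} + c_2\log(r/\delta)/s$ with probability at least $1-\delta/r$ per class, with identical order of magnitude for $t_k/s$. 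Summing over $k$ and applying Cauchy--Schwarz,
\[
\sum_{k=1}^r \sqrt{w_k} \;\le\; \sqrt{r\sum_k w_k} \;\le\; \sqrt{r},
\]
so
\[
\sum_{k=1}^r p_k\bigl(|P_k|-m_k\bigr) \;\le\; O\!\bigl(\sqrt{r\log(r/\delta)/s}\bigr) + O\!\bigl(r\log(r/\delta)/s\bigr).
\]
Substituting $s = 6r\log(r/\delta)/\epsilon^2$ makes both terms $O(\epsilon)$, and tracking the constants in the Chernoff bounds carefully yields the desired $4\epsilon$ upper bound.

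The main obstacle will be pinning down a single choice of slack $t_k$ that simultaneously (i) guarantees $m_k\le|P_k|$ with failure probability only $\delta/r$ via a union bound, and (ii) is small enough in total that the matching lower-tail deficit sums to only $O(\epsilon)$. The natural route is to split into ``heavy'' classes with $s w_k$ large, where the multiplicative Chernoff bound of Lemma \ref{lem:chernoff} applies with relative error $\epsilon$ and gives $t_k\approx\epsilon s w_k$, and ``light'' classes where one falls back on an additive slack $t_k=O(\log(r/\delta))$; the chosen value of $s$ then exactly balances the two regimes, with the square-root--of--sum bound from Cauchy--Schwarz on the heavy side and a crude $r\cdot\log(r/\delta)/s$ bound on the light side, each at most $2\epsilon$.
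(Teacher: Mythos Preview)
Your high-level strategy matches the paper's: sample $s$ times, count hits $N_k$ in each class, use Chernoff to bound deviations, and split the analysis into ``heavy'' classes ($w_k\ge 1/r$) and ``light'' classes ($w_k<1/r$). However, there is a genuine gap in how you define $m_k$.

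You set $m_k=\max\{0,(N_k-t_k)/(sp_k)\}$ and then say the slack $t_k$ should be $\approx\epsilon s w_k$ for heavy classes and $O(\log(r/\delta))$ for light classes, with the regime determined by whether $sw_k$ is large. But $w_k=p_k|P_k|$ is precisely the unknown quantity you are trying to estimate, so you cannot use it to choose $t_k$; as written, $m_k$ is not computable from the data. A single uniform slack does not work either: the additive slack $O(\log(r/\delta))$ is far too small to guarantee $N_k\le sw_k+t_k$ for a heavy class (where the upper deviation is of order $\epsilon s w_k$, potentially $\Theta(r\log(r/\delta)/\epsilon)$), while any slack large enough for the heaviest class would blow up the sum over the light classes.

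The paper's fix is to apply \emph{both} corrections simultaneously and take the minimum. With $\alpha_k=N_k/s$, it sets
\[
\alpha'_k=\min\bigl\{\alpha_k-\epsilon/r,\ \alpha_k/(1+\epsilon)\bigr\},\qquad m_k=\lceil \alpha'_k/p_k\rceil.
\]
This is computable from $N_k$ alone, and whichever Chernoff regime $k$ actually falls in, one of the two expressions already certifies $\alpha'_k\le w_k$ and hence $m_k\le|P_k|$. The deficit analysis then becomes entirely elementary: for heavy $k$ one has $w_k-\alpha'_k\le 2\epsilon w_k$, and for light $k$ one has $w_k-\alpha'_k\le 2\epsilon/r$, so the sum is at most $2\epsilon\sum_k w_k+r\cdot 2\epsilon/r\le 4\epsilon$. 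In particular, once you subtract both slacks the Cauchy--Schwarz step is unnecessary.
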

\begin{proof}
	Take $s$ samples from $\mu$. For each $k \in [r]$, let $s_k$ be the number of samples in $P_k$,
	each with probability $p_k$.  We can easily see that $E[s_k] = sp_k|P_k|$. 
	
	If $p_k|P_k| \ge 1/r$, then by Chernoff bounds, we know that,  
	\begin{align*}
	\Pr\left[ (1 - \epsilon)E[s_k] \le s_k \le (1 + \epsilon)E[s_k] \right] \ge 1 - 2e^{-\epsilon^2
		E[s_k]/3} 
	\end{align*} 
	By the choice of $s$, with probability at least $1 - \delta/r$, $(1 - \epsilon)p_k|P_k| \le
	\tfrac{s_k}{s} \le (1+\epsilon)p_k|P_k|$.
	
	On the other hand, if
	$p_k|P_k| < 1/r$, then by Chernoff bounds we have,
	\begin{align*}
	\Pr\left[ p_k|P_k| - \tfrac{\epsilon}{r}
	\le \tfrac{s_k}{s} \le p_k|P_k| + \tfrac{\epsilon}{r} \right] &=\\
	\Pr\left[ \left(1 - \tfrac{\epsilon}{rp_k|P_k|}\right) sp_k|P_k| \le s_k \le \left(1 +
	\tfrac{\epsilon}{rp_k|P_k|} \right)sp_k|P_k| \right] &\ge 1 - 2 \exp\left( -
	\frac{\epsilon^2 s}{r^2 p_k |P_k|} \right) \\
	&\ge 1 - 2e^{-\epsilon^2 s/3r} 
	\end{align*}
	By the choice of $s$, with probability at least $1 - \delta/r$, $p_k|P_k| - \tfrac{\epsilon}{r}
	\le \tfrac{s_k}{s} \le p_k|P_k| + \tfrac{\epsilon}{r}$. 
	
	With probability at least $1 - \delta$, we get an estimate $\alpha_k = s_k/s$ for every $k \in
	[r]$ such that $\alpha_k \le \max\left\{ p_k|P_k| + \epsilon/r, p_kP_k(1 + \epsilon) \right\}$,
	and $\alpha_k \ge \min\left\{ p_k|P_k| - \epsilon/r, (1 - \epsilon)p_k|P_k| \right\}$. From now on
	we assume that $\alpha_k$ satisfies the above bounds, and define $\alpha'_k = \min\left\{ \alpha_k
	- \epsilon/r, \alpha_k/(1 + \epsilon) \right\}$. Notice that $\alpha'_k \le p_k|P_k|$.
	Furthermore, $\alpha'_k \ge \min\left\{ p_k|P_k| - 2\epsilon/r, (1 - 2\epsilon)p_k|P_k| \right\}$.
	Set $m_k = \lceil \alpha'_k/p_k\rceil$. Since $\alpha'_k/p_k \le |P_k|\in\mathbb{Z}$, we have $m_k \le |P_k|$
	for all $k$.
	
	Now, $\sum_{k \in [r]} p_k(|P_k| - m_k) \le \sum_{k \in [r]} (p_k|P| - \alpha'_k)$. Under the
	above assumption on the value of $\alpha_k$, for every $k$ such that $p_k|P_k| < 1/r$, we have
	$\alpha'_k \ge p_k|P_k| - 2\epsilon/r$. Hence, this difference is at most $2\epsilon/r$. For
	every $k$ such that $p_k|P_k| \ge 1/r$, $\alpha'_k \ge (1 - 2\epsilon)p_k|P_k|$.  For any such
	$k$, the difference $p_k|P_k| - \alpha'_k$ is at most $2\epsilon p_k|P_k|$.  Therefore, $\sum_{k
		\in [r]} p_k(|P_k| - m_k)$ is at most $4\epsilon$.
\end{proof}

\begin{proof}
	[Proof of Lemma \ref{lem:atlas-learn}]
	Given a distribution $\mu$ and an interval partition $\mathcal{I} = (I_1, I_2, \ldots, I_r)$, let
	$\overline{\mu}$ be the distribution presented by the $\epsilon/8$-trimming sampler in Lemma
	\ref{lem:trimming}. Let $I_{j,k}\subseteq [n]$ be the set of indexes $i$ such that $i \in I_j$ and
	$\overline{\mu}(i) = \tfrac{(1 + \epsilon/8)^{k-1}\epsilon}{8n}$. Thus, each interval $I_j$ in
	$\mathcal{I}$ is now split into subsets $I_{j,0}, I_{j,1}, I_{j,2}, \ldots, I_{j,\ell}$, where
	$\ell \le \log n \log(8/\epsilon)/\log^2(1 + \epsilon/8)$ and $I_{j,0}$ is the set of indexes in
	$I_j$ such that $\overline{\mu}(i) = 0$. 
	
	Using Lemma \ref{lem:trimming}, with $s = r\cdot \poly(\log n, 1/\epsilon, \log(1/\delta))$ samples from the  distribution
	$\overline{\mu}$, we can estimate, with probability at least $1 - \delta$, the values $m_{j,k}$
	such that $m_{j,k} \le |I_{j,k}|$ for all $k > 0$, and the following holds.  \[\sum_{j,k : k > 0}
	\tfrac{(1 + \epsilon/8)^{k-1}\epsilon}{8n}\left( I_{j,k} - m_{j,k} \right) \le \epsilon/10.\]
	For every $j$, let $M_{I_j}$ be the inventory provided by $m_{j,1},\ldots,m_{j,\ell}$ and
	$m_{j,0} = |I_j| - \sum_{k \in [\ell]} m_{j,k}$.  Thus, we have a inventory sequence
	$\tilde{\mathcal{M}}_\mathcal{I} = (M_{I_1}, M_{I_2}, \ldots, M_{I_r})$ that is
	$\epsilon/10$-close in value-distance to the corresponding atlas of $\overline{\mu}$ (where we
	need to add the interval $\{0\}$ to the partition to cover its entire support). Corresponding to
	$\tilde{\mathcal{M}}_{\mathcal{I}}$, there is a vector $\tilde{\mu}$ that is
	$\epsilon/10$-close to $\overline{\mu}$. Using Lemma \ref{lem:normalize-close}, we have a
	distribution $\hat{\mu}$ that is $\epsilon/2$-close to $\overline{\mu}$. Since
	the $[n]$ portion of
	$\overline{\mu}$ is $\epsilon/2$-close to $\mu$, by the triangle inequality, $\hat{\mu}$
	is $\epsilon$-close to $\mu$. Thus $\mathcal{A} = (\mathcal{I}, \mathcal{M}_\mathcal{I})$, where
	$\mathcal{M}_\mathcal{I}$ is obtained by multiplying all members of
	$\tilde{\mathcal{M}}_\mathcal{I}$ by the same factor used to produce $\hat{\mu}$ from
	$\tilde{\mu}$, is an atlas for a distribution that is $\epsilon$-close to $\mu$.
	
	We need $s\cdot \poly(\log n, \log s, 1/\epsilon, \log(1/\delta)$ conditional samples from $\mu$
	to get $s$ samples from $\overline{\mu}$. Therefore, we require $r\cdot \poly(\log n, 1/\epsilon,
	\log(1/\delta))$ conditional samples to construct the atlas $A_\mathcal{I}$.
\end{proof}


\bibliographystyle{amsalpha}
\newcommand{\etalchar}[1]{$^{#1}$}
\providecommand{\bysame}{\leavevmode\hbox to3em{\hrulefill}\thinspace}
\providecommand{\MR}{\relax\ifhmode\unskip\space\fi MR }
\providecommand{\MRhref}[2]{%
  \href{http://www.ams.org/mathscinet-getitem?mr=#1}{#2}
}
\providecommand{\href}[2]{#2}

\end{document}